\newcommand{\setft}[1]{\mathrm{#1}}
\newcommand{\lin}[1]{\setft{L}\left(#1\right)}
\newcommand{\C}{\ensuremath{\mathbb{C}}}
\newcommand{\N}{\ensuremath{\mathbb{N}}}
\newcommand{\R}{\ensuremath{\mathbb{R}}}
\newcommand{\pmset}[1]{\{-1,1\}^{#1}}
\newcommand{\pauli}[1]{\mathcal{P}_{#1}} % Pauli basis
\newcommand{\sphere}[1]{\setft{S}(\C^{#1})} % unit sphere in C^N
\newcommand{\mat}[1]{\setft{Mat}(#1)} % complex N-by-N matrices
\newcommand{\hball}[1]{\setft{B}(\herm{#1})} % Hermitian matrices with Frob. norm at most 1
\newcommand{\Ht}{\mathcal{H}}
\newcommand{\herm}[1]{\setft{Herm}\left(#1\right)}
\newcommand{\obs}[1]{\setft{Obs}\left(#1\right)}
\newcommand{\proj}[1]{\setft{Proj}\left(#1\right)}
\newcommand{\projnorm}[1]{\overline{\setft{Proj}}\left(#1\right)}
\newcommand{\net}{\mathcal Z}
\newcommand{\ket}[1]{|#1\rangle}
\newcommand{\bra}[1]{\langle#1|}
\newcommand{\ketbra}[2]{|#1\rangle\!\langle#2|}
\newcommand{\braket}[2]{\langle #1|#2\rangle}
\newcommand{\psiket}{\ket{\psi}}
\newcommand{\psibra}{\bra{\psi}}
\newcommand{\phiket}{\ket{\phi}}
\newcommand{\phibra}{\bra{\phi}}
\newcommand{\normal}{\mathcal{N}\!(0,1)} % standard normal (Gaussian) distribution
\newcommand{\ie}{i.e.} % "id est"
\newcommand{\st}{:\:} % "such that" in sets
\newcommand{\Tr}{\mbox{\rm Tr}} % capital trace
\newcommand{\Exp}{{\mathbb{E}}}
\DeclareMathOperator{\sign}{sign}
\newcommand{\rank}{\mbox{\rm rank}}
\renewcommand{\Pr}{\mbox{\rm Pr}}
\newcommand{\beq}{\begin{equation}}
\newcommand{\eeq}{\end{equation}}
\newcommand{\beqn}{\begin{equation*}}
\newcommand{\eeqn}{\end{equation*}}
\newcommand{\beqr}{\begin{eqnarray}}
\newcommand{\eeqr}{\end{eqnarray}}
\newcommand{\beqrn}{\begin{eqnarray*}}
\newcommand{\eeqrn}{\end{eqnarray*}}
\newcommand{\be}{\begin{eqnarray}}
\newcommand{\ee}{\end{eqnarray}}
\newcommand{\eps}{\varepsilon}
\newtheorem{theorem}{Theorem}
\newtheorem{proposition}[theorem]{Proposition}
\newtheorem{lemma}[theorem]{Lemma}
\newtheorem{claim}[theorem]{Claim}
\newtheorem{fact}[theorem]{Fact}
\newtheorem*{theorem3}{Theorem~3}
\newtheorem*{theorem2}{Theorem~2}
\newtheorem{corollary}[theorem]{Corollary}
\newtheorem{definition}[theorem]{Definition}
\newtheorem*{remark}{Remark}
\begin{document}

\title{Explicit lower and upper bounds on the entangled value of multiplayer XOR games}
\author{Jop Bri\"{e}t\footnote{CWI. Supported by an NWO Vici grant and the EU project QCS. Email: j.briet@cwi.nl. } \and Thomas Vidick\footnote{Computer Science and Artificial Intelligence Laboratory, Massachusetts Institute of Technology. Supported by the National Science Foundation under Grant No. 0844626. Email: vidick@csail.mit.edu}}

\date{}
\maketitle

\begin{abstract}
XOR games are the simplest model in which the nonlocal properties of entanglement manifest themselves. When there are two players, it is well known that the bias --- the maximum advantage over random play --- of entangled players can be at most a constant times greater than that of classical players. Recently, P\'{e}rez-Garc\'{i}a et al. [Comm. Math. Phys. 279 (2), 2008] showed that no such bound holds when there are three or more players: the advantage of entangled players over classical players can become unbounded, and scale with the number of questions in the game. Their proof relies on non-trivial results from operator space theory, and gives a non-explicit existence proof, leading to a game with a very large number of questions and only a loose control over the local dimension of the players' shared entanglement.

We give a new, simple and explicit (though still probabilistic) construction of a family of three-player XOR games which achieve a large quantum-classical gap (QC-gap). This QC-gap is exponentially larger than the one given by P\'{e}rez-Garc\'{i}a et. al. in terms of the size of the game, achieving a QC-gap of order $\sqrt{N}$ with $N^2$ questions per player. In terms of the dimension of the entangled state required, we achieve the same (optimal) QC-gap of $\sqrt{N}$ for a state of local dimension $N$ per  player.  Moreover, the optimal entangled strategy is very simple, involving observables defined by tensor products of the Pauli matrices. 

Additionally, we give the first upper bound on the maximal QC-gap in terms of the number of questions per player, showing that our construction is only quadratically off in that respect. 
Our results rely on probabilistic estimates on the norm of random matrices and higher-order tensors which may be of independent interest.

\end{abstract}

\section{Introduction}
Multiplayer games, already a very successful abstraction in theoretical computer science, were first proposed as an ideal framework in which to study the nonlocal properties of entanglement by Cleve et al.~\cite{Cleve:04a}. Known as \emph{nonlocal}, or \emph{entangled}, games, they can be thought of as an interactive re-framing of the familiar setting of Bell inequalities: a referee (the experimentalist) interacts with a number of players (the devices). The referee first sends a classical question (a setting) to each player. The players are all-powerful (there is no restriction on the shared state or the measurements applied) but not allowed to communicate: each of them must make a local measurement on his or her part of a shared entangled state, and provide a classical answer (the outcome) to the referee's question. The referee then decides whether to accept or reject the players' answers (he evaluates the Bell functional). 

In their paper, Cleve et al.\ gave an in-depth study of the simplest class of multiplayer games, \emph{two-player XOR games}. The XOR property refers to the fact that in such games each player answers with a single bit, and the referee's acceptance criterion only depends on the parity of the bits he receives as answers. One of the most fundamental Bell inequalities, the CHSH inequality~\cite{ClauserHSH:1969}, fits in this framework. In the corresponding XOR game the acceptance criterion dictates that the parity of the players' answers must equal the product of their questions, a uniform i.i.d.\ bit each. The laws of quantum mechanics predict that the CHSH game has the following striking property: there is a quantum strategy in which the players share a simple entangled state --- a single EPR pair --- and use it to achieve a strictly higher success probability than the best classical, unentangled strategy: roughly $85\%$, as compared to $75\%$. This example demonstrates that quantum mechanics is \emph{nonlocal}: predictions made by the theory cannot be reproduced classically, or more generally by any local hidden variable model, a ``paradox'' most famously put forward by Einstein, Podolsky and Rosen~\cite{Einstein:35a}.

\medskip

Any XOR game $G$ can be won with probability~$1/2$ by players who independently answer each question with the outcome of a random coin flip. It is therefore natural to measure the success of quantum (resp. classical) players through their maximum achievable \emph{bias} $\beta^*(G)$ (resp.~$\beta(G)$), defined as their maximum winning probability in the game, \emph{minus} the success probability that would be achieved by random play. As has become standard practice, we will measure the advantage of quantum over classical players through the ratio $\beta^*(G)/\beta(G)$, referred to as the quantum-classical gap, or {\em QC-gap} for short.\footnote{See Section~\ref{sec:conclusion} for a brief discussion of other ways of measuring the quantum advantage, such as through the difference $\beta^*(G)-\beta(G)$.}
The CHSH example demonstrates the existence of a game for which $\beta^*(G) \geq \sqrt{2} \beta(G)$, and Tsirelson~\cite{Tsirelson:85b} proved that this gap was close to best possible. By making a connection to the celebrated {\em Grothendieck inequality} he showed that for any two-player XOR game $G$, we have $\beta^*(G)/\beta(G) \leq K_G^\R$, where $K_G^\R$ is the real Grothendieck constant.\footnote{The subscript $G$ in $K_G^\R$ stands for ``Grothendieck'', and is not related to the game $G$!} The exact value of $K_G^\R$ is unknown, and the best upper bound currently known, $K_G^\R\lesssim 1.78$, appeared in recent work of Braverman et. al.~\cite{BMMN11}. 
Although experiments based on the CHSH game have been performed~\cite{Aspect:1981,Aspect:1982}, the relatively small gap forces the use of state-of-the-art devices in terms of precision and timing in order to differentiate a truly nonlocal strategy from one that can be explained by local hidden-variable models. 
In order to observe larger quantum-classical gaps, more general classes of games need to be considered, prompting a question that has driven much recent research in this area:
%\begin{center} 
{\em For a given QC-gap, what is the simplest game (in terms of the number of players, questions and answers) which demonstrates such a QC-gap (if one at all exists)?} 
%\end{center}

\medskip 

There are two main directions in which one can look for generalizations of two-player XOR games. The first is to increase the number of possible answers from each player. This option has so far been the preferred one, and has by now been relatively well explored~\cite{Cleve:04a,kempe:unique,JPPVW10,JP11,Reg11,BRSW11}. In particular it is known that the largest possible quantum-classical gap is bounded by a constant times the minimum of the number of questions, the number of answers, and the local dimension of the players~\cite{JPPVW10}, and there are explicit constructions of games (\ie, games games whose existence is proved through a constructive proof) which come close to achieving these bounds~\cite{BRSW11}.
%\footnote{In this paper we refer to a construction as \emph{explicit} if there is a procedure (not necessarily efficient) which is guaranteed to terminate and output a game with the prescribed properties with high probability. If the procedure is deterministic, then we refer to the construction as being \emph{deterministic}. Since in general it is not know if the QC-gap is computable, an existence proof does not automatically translate into an explicit construction.}   
Unfortunately, these games require the players to perform complex measurements, involving large numbers of outcomes, making them ill-suited to experiment.  

The second possible avenue for generalization consists in increasing the number of players, while remaining in the simple setting of binary answers and an XOR-based acceptance criterion. 
Our limited understanding of multipartite entanglement makes this setting more challenging, and for a long time little more than small, constant-size examples were known~\cite{Mermin:1990,Zukowski:1993a}.  However, recently, P\'{e}rez-Garc\'{i}a et al.~\cite{perezgarcia:2008} discovered that adding even just one player allowed for a very different scaling of the QC-gap. They demonstrated the existence of an infinite family of three-player XOR games $(G_N)_{N\in\N}$ for which $\lim_{N\to\infty} \beta^*(G_N)/\beta(G_N) = +\infty$ --- an unbounded gap! This exciting result demonstrated for the first time that very large violations could be observed even in the relatively simple context of three-player XOR games. 

\medskip

The results in~\cite{perezgarcia:2008} were proved by establishing a surprising connection between XOR games and certain natural norms on the tensor product of operator spaces, enabling the authors to leverage powerful techniques from the latter area in order to establish their results on XOR games. Since their seminal paper, similar techniques have been successfully applied to other settings, such as general two-player games~\cite{JPPVW10} and games with quantum communication~\cite{CJPP11}.

For the games $G_N$ from~\cite{perezgarcia:2008}, however, the above-mentioned techniques have a few somewhat unfortunate consequences.
First of all, these techniques resulted in a highly non-explicit existence proof. While P\'{e}rez-Garc\'{i}a et al.\ show the \emph{existence} of the games, it seems quite hard to even get the slightest idea of what the games would look like.  
Moreover, their use of the theory of operator spaces gives a very large game, with an exponential (in the QC-gap) number of questions per player. 
Finally, the strategies required of the players to achieve the promised QC-gap are not explicitly known, and may for instance require an entangled a state with unbounded dimension on two of the players; only the first player's dimension is controlled. We note that after the completion of our work, but independently from it, Pisier~\cite{PisierGT,Pisiernote} showed that the construction in~\cite{perezgarcia:2008} could be improved to require only a polynomial number of questions to each player, and that one could keep a control of the entanglement dimension on all three players. The resulting parameters, however, are still worse than the ones that we achieve here. 

\subsection{Our results} 

In this paper we give a new and improved proof of the existence of a family of three-player XOR games for which the QC-gap is unbounded. Our proof technique uses the probabilistic method: we describe a simple \emph{probabilistic} procedure that outputs a game with the desired properties \emph{with high probability}. As such it is much more explicit than previous results~\cite{perezgarcia:2008}, albeit not fully constructive. 
%: we give a very simple, efficient, probabilistic procedure that is guaranteed to output the description of a game with the required properties with high probability. 
Our construction is outlined in Section~\ref{sec:techniques} below. For a desired ratio $\sqrt{N}$, our game has order $N^2$ questions per player, which, as we show, is within a factor $\tilde{O}(N)$ of the smallest number possible. Moreover, to achieve such a gap entangled players only need to use Pauli observables and  an entangled state of local dimension $N$ per player. 
The simplicity of our construction enables us to give concrete values for most of the parameters, leading to a rigorous control of the constants involved. We prove the following: 

\begin{theorem}\label{thm:main} For any integer $n$ and $N=2^n$ there exists a three-player XOR game $G_N$, with $N^2$ questions per player, such that $\beta^*(G_N) \geq \Omega(\sqrt{N}\log^{-5/2} N)\, \beta(G_N)$. Moreover, there is an entangled strategy which achieves a bias of $\Omega(\sqrt{N}\log^{-5/2} N)\,\beta(G_N)$, uses an entangled state of local dimension $N$ per player, and in which the players' observables are tensor products of $n$ Pauli matrices.
\end{theorem}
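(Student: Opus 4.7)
The plan is to use the probabilistic method, exploiting the algebraic structure of Pauli observables on $n$-qubit GHZ states. Identify each player's $N^2 = 4^n$ possible questions with the $n$-fold tensor-product Pauli operators $P_x = \sigma_{x_1} \otimes \cdots \otimes \sigma_{x_n}$ for $x \in \{I,X,Y,Z\}^n$, and let $s_{xyz} := \langle \mathrm{GHZ}_N | P_x \otimes P_y \otimes P_z | \mathrm{GHZ}_N \rangle$, where $|\mathrm{GHZ}_N\rangle = N^{-1/2} \sum_i |iii\rangle$. A direct calculation shows $s_{xyz} \in \{0, \pm 1\}$, with exactly $N^3 = 8^n$ ``compatible'' triples where $s_{xyz} \neq 0$. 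Define the game tensor $T_{xyz}$ to be an independent uniform random sign $\epsilon_{xyz} \in \{\pm 1\}$ on the compatible triples (and zero elsewhere); the referee draws a uniformly random compatible triple and declares a parity target according to $T_{xyz}$.

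For the quantum lower bound, I propose the strategy in which the players share $|\mathrm{GHZ}_N\rangle$ (local dimension $N$) and measure observables $A_x = \eta_x P_x$, $B_y = \mu_y P_y$, $C_z = \nu_z P_z$, with signs $\eta, \mu, \nu \in \{\pm 1\}^{N^2}$ chosen depending on the random game. The resulting bias equals $\max_{\eta,\mu,\nu} \sum_{\text{compat}} \epsilon_{xyz} s_{xyz}\, \eta_x \mu_y \nu_z$. The critical step is to lower bound this quantity using a probabilistic estimate on the operator norm of the random Pauli-sum $M(T) := \sum T_{xyz}\, P_x \otimes P_y \otimes P_z$, viewed as a non-commutative analogue of random-matrix concentration for a sum of orthogonal Pauli-tensor unitaries weighted by i.i.d.\ Rademachers.

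For the classical upper bound, we have $\beta(G_N) = \max_{\sigma, \tau, \upsilon \in \{\pm 1\}^{N^2}} \bigl|\sum T_{xyz}\, \sigma_x \tau_y \upsilon_z\bigr|$. For each fixed $(\sigma, \tau, \upsilon)$ the cubic Rademacher form is subgaussian with variance $N^3$ by Hoeffding's inequality, and a union bound over the $2^{3N^2}$ sign-triples gives a rough bound of order $\tilde O(N^{5/2})$. I would then refine this with a chaining / $\epsilon$-net argument that exploits the product structure of the compatible support (which factors as a Cartesian power $S^{\otimes n}$ of the single-qubit GHZ-compatibility set $S$) to shave polylog factors and obtain the $\log^{5/2} N$ exponent.

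The main obstacle, and where the bulk of technical work lies, is proving that the quantum lower bound is $\sqrt{N}$ times larger than the classical upper bound even though both quantities are, superficially, injective norms of random $\pm 1$ tensors supported on the same set. Adaptive observable signs alone produce bias of the same order as the classical one; the extra $\sqrt{N}$ advantage must arise from combining the GHZ entanglement with the multiplicative structure of the Pauli group --- effectively allowing the entangled strategy to realize something closer to the $\ell_2$-mass of $T$ rather than its sign-restricted injective norm. Carrying this out rigorously requires the probabilistic estimates on operator norms of random higher-order Pauli-sum matrices advertised in the abstract, which are the higher-order, multipartite analogues of the random-matrix tools underlying Tsirelson's bound in the two-player case.
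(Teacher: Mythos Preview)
Your proposal has a genuine gap at precisely the point you yourself flag as the main obstacle. With the GHZ state and signed Pauli observables, the entangled bias is
\[
\max_{\eta,\mu,\nu\in\{\pm 1\}^{N^2}}\ \sum_{\text{compat}} \epsilon_{xyz}\, s_{xyz}\, \eta_x\mu_y\nu_z,
\]
and since $s_{xyz}\in\{\pm1\}$ on the support this is \emph{literally} the classical bias of the game. The hoped-for mechanism (``the multiplicative structure of the Pauli group'') does not rescue this: if you instead optimize the shared state, the entangled bias becomes $\|M(T)\|_\infty$ with $M(T)=\sum_{\text{compat}}\epsilon_{xyz}\,P_x\otimes P_y\otimes P_z$, a sum of $N^3$ i.i.d.\ random-sign unitaries satisfying $\sum A_kA_k^\dagger=N^3 I$. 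Non-commutative Khintchine then forces $\|M(T)\|_\infty=\tilde O(N^{3/2})$, whereas already a single random sign assignment shows $\beta(G_N)=\Omega(N^{3/2})$. So your random-sign game has QC-gap at most polylogarithmic, not $\sqrt{N}$.

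What the paper does differently is to engineer the game so that $M(T)$ has \emph{maximal} operator norm rather than the generic one. Concretely, it first fixes an $N^3\times N^3$ tensor $T$ and defines the game coefficient at $(P,Q,R)$ to be the Fourier coefficient $\widehat T(P,Q,R)=\langle T,\,P\otimes Q\otimes R\rangle$; then by Fourier inversion $\sum_{P,Q,R}\widehat T(P,Q,R)\,P\otimes Q\otimes R = N^3\,T$, so the Pauli strategy with shared state equal to the top eigenvector of $T$ achieves bias $N^3\|T\|_{3,3}$. The classical bias, on the other hand, is controlled by the \emph{trilinear} norm $\|T\|_{2,2,2}$, since a classical assignment $\chi:\pauli{n}\to\{\pm1\}$ corresponds to the Hermitian matrix $X=\sum_P\chi(P)P$ with $\|X\|_F=N^{3/2}$. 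The whole problem thus reduces to exhibiting a tensor $T$ with large $\|T\|_{3,3}/\|T\|_{2,2,2}$. The paper takes $T$ to be (the off-diagonal part of) the rank-one matrix $\ket{g}\bra{g}$ for a random Gaussian vector $\ket{g}\in\R^{N^3}$: being essentially rank one makes $\|T\|_{3,3}\approx N^3$, while the hard technical work---the $\eps$-net over normalized projectors and a Hanson--Wright/Lata{\l}a-type concentration estimate---shows $\|T\|_{2,2,2}=O(N\log^{5/2}N)$. The $\log^{5/2}N$ you were aiming for arises here, not from chaining the classical union bound. In short: the state is not GHZ but a random near-rank-one eigenvector, and the game coefficients are not i.i.d.\ signs but the Pauli-Fourier transform of that rank-one matrix; both changes are essential.
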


Additionally, we prove that the dependence of the QC-gap on the number of questions obtained in Theorem~\ref{thm:main} is close to optimal.\footnote{A similar result was recently communicated to us by Carlos Palazuelos~\cite{Carlos:personal}.} This improves upon an independent previous result by Loubenets~\cite{Lou12}, who showed that $\beta^*(G) \,\leq\, (2Q-1)^{2}\,\beta(G)$.

\begin{theorem}\label{thm:questionbound}
For any $3$-player XOR game $G$ in which there are at most $Q$ possible questions to the third player,
 $$\beta^*(G) \,\leq\, \sqrt{Q}\,K_G^\R\,\beta(G),$$
 where $K_G^\R<1.783$ is the real Grothendieck constant.
\end{theorem}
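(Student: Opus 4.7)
The plan is a bipartite reduction followed by a Cauchy--Schwarz--plus--Grothendieck argument. Write the game's coefficient tensor $T_{xyz}$ so that $\beta^*(G) = \sup\sum_{xyz}T_{xyz}\langle\psi|A_x\otimes B_y\otimes C_z|\psi\rangle$ and $\beta(G) = \sup_{a,b,c\in\{\pm1\}}\sum_{xyz}T_{xyz}a_xb_yc_z$. I view the 3-player game as a 2-player XOR game with Alice holding $x$ and a ``combined Bob'' holding $(y,z)$. Any 3-player quantum strategy $A_x\otimes B_y\otimes C_z$ is a valid bipartite strategy $A_x\otimes E_{yz}$ (with $E_{yz}=B_y\otimes C_z$), so the 2-XOR version of Tsirelson's theorem --- which is just Grothendieck's inequality applied to the Gram-matrix factorization of the SDP --- yields
\[
\beta^*(G)\,\le\,K_G^\R\sup_{a_x,b_{yz}\in\{\pm1\}}\sum_{xyz} T_{xyz}\,a_xb_{yz}.
\]

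For the second step, fix the optimizing signs $a^\star$ from the display above and define the matrix $f_{yz}:=\sum_x T_{xyz}\,a^\star_x$, so the right-hand side equals $K_G^\R\|f\|_1$. I then apply Cauchy--Schwarz on each row $f_{y,\cdot}\in\mathbb{R}^Q$, using only the length $Q$ of that row: $\sum_z |f_{yz}|\le\sqrt{Q}\,\|f_{y,\cdot}\|_2$. Summing over $y$ gives $\|f\|_1 \le \sqrt{Q}\sum_y\|f_{y,\cdot}\|_2$. Here the choice of bipartition matters: only the $(A\mid BC)$ split puts $Q$ on the Cauchy--Schwarz axis (an $(AB\mid C)$ split would instead produce a factor $\sqrt{|X||Y|}$, which is useless).

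The final step converts $\sum_y\|f_{y,\cdot}\|_2$ back into the tripartite classical bias by a second invocation of Grothendieck's inequality. Taking unit vectors $r_y:=f_{y,\cdot}/\|f_{y,\cdot}\|_2$ and $e_z$ (standard basis) in $\mathbb{R}^Q$, one has
\[
\sum_y\|f_{y,\cdot}\|_2 \,=\, \sum_{y,z} f_{yz}\langle r_y,e_z\rangle \,\le\, K_G^\R\,\sup_{b_y,c_z\in\{\pm1\}}\sum_{yz}f_{yz}b_yc_z \,\le\, K_G^\R\,\beta(G),
\]
where the last inequality uses that combining $a^\star$ with any signs $(b,c)$ is a valid 3-player classical strategy. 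Chaining the three steps yields $\beta^*(G)\le (K_G^\R)^2\sqrt{Q}\,\beta(G)$.

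The main obstacle is sharpening this to the advertised constant $K_G^\R$ (rather than $(K_G^\R)^2$): this requires merging the two Grothendieck invocations into a single vector-valued inequality that exploits the product structure $B_y\otimes C_z$ on Bob's side of the bipartite reduction, so that one never actually passes through the intermediate bipartite classical bias $\|f\|_1$. I would also verify tightness (up to a $\tilde O(N)$ gap) by comparing with the construction of Theorem~\ref{thm:main}, which realizes $\beta^*/\beta = \tilde\Omega(\sqrt{N})$ with $Q=N^2$.
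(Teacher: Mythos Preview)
Your argument is correct and yields $\beta^*(G)\le (K_G^\R)^2\sqrt{Q}\,\beta(G)$, but it does not reach the stated constant $K_G^\R$, and the fix you propose (``merging the two Grothendieck invocations into a single vector-valued inequality'') is not how the paper closes the gap. The loss of the extra $K_G^\R$ is structural to your ordering of the steps: by first passing through the bipartite classical bias $\sup_{a_x,b_{yz}}\sum T_{xyz}a_xb_{yz}=\|f\|_1$, you are forced to use Grothendieck a second time to separate $b_{yz}$ into $b_yc_z$.

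The paper reverses the order: it decouples the third player \emph{before} ever invoking Grothendieck, and does so at the quantum level. Writing $M_k=\sum_{i,j}T_{ijk}A_i\otimes B_j$, one inserts i.i.d.\ signs $\eps_k$ via $\Exp[\eps_k\eps_\ell]=\delta_{k\ell}$ to get
\[
\beta^*(G)\,=\,\Exp\Big[\Big(\bra{\psi}\sum_k M_k\otimes(\eps_kI)\Big)\Big(\sum_\ell \eps_\ell\, I\otimes C_\ell\ket{\psi}\Big)\Big],
\]
and then applies Cauchy--Schwarz twice. The second factor is exactly $\sqrt{Q}$ because each $C_\ell$ is unitary, so $\|\sum_\ell \eps_\ell\, I\otimes C_\ell\ket{\psi}\|_2^2=Q$ deterministically. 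The first factor is bounded by $\max_{\ket{\phi},\,\zeta:[Q]\to\{\pm1\}}\bra{\phi}\sum_{i,j,k}T_{ijk}\zeta(k)\,A_i\otimes B_j\ket{\phi}$, which is the bias of a genuine \emph{two}-player quantum XOR strategy (the third player now being classical). A single application of Grothendieck then gives $K_G^\R\beta(G)$.

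The conceptual difference is this: your route spends Grothendieck to make Alice classical against a joint $(y,z)$ opponent, and then must pay again to decorrelate $y$ from $z$; the paper instead spends only Cauchy--Schwarz (and the unitarity of the $C_k$) to make Charlie classical, leaving a two-player quantum game on which Grothendieck is applied once. Your approach is more elementary in that it never manipulates the quantum state directly, which is a virtue; the price is the squared constant.
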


Finally, we also show that the dependence on the local dimension of the entangled state is optimal, re-proving in a simpler language a result first proved in~\cite{perezgarcia:2008}.

\begin{theorem}\label{thm:dimensionbound}
Let $G$ be a $3$-player XOR game in which the maximal entangled bias $\beta^*(G)$ is achieved by a strategy in which the third player's local dimension is $d$. Then 
$$\beta^*(G) \,\leq\, \sqrt{3d}\, \big(K_G^\C\big)^{3/2} \beta(G),$$
where $K_G^\C < 1.405$ is the complex Grothendieck constant.
\end{theorem}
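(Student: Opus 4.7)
The plan is to reduce $\beta^*(G)$ to a bilinear Hilbert--Schmidt form by Schmidt-decomposing $|\psi\rangle$ on the third player's side, and then bound this by the classical bias through Grothendieck-type inequalities. The dimension factor $\sqrt{d}$ enters through the identity $C_k^2 = I$, which forces $\|C_k\|_{HS}=\sqrt{d}$.

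Take an optimal entangled strategy $(|\psi\rangle, A_i, B_j, C_k)$ achieving $\beta^*(G)$ on $\C^{d_A}\otimes\C^{d_B}\otimes\C^d$, and Schmidt-decompose $|\psi\rangle = \sum_{r=1}^{d}\sqrt{\lambda_r}\,|\phi_r\rangle_{AB}|e_r\rangle_C$ across the $(AB)|C$ cut. Substituting into the bias rewrites it as $\sum_{ijk}G_{ijk}\langle M_{ij}, N_k\rangle_{HS}$, where $M_{ij}, N_k \in \mat{\C^d}$ have entries $(M_{ij})_{rs} = \sqrt{\lambda_r\lambda_s}\langle\phi_r|A_i\otimes B_j|\phi_s\rangle$ and $(N_k)_{rs} = \langle e_r|C_k|e_s\rangle$. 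Two norm bounds drive the argument: $\|M_{ij}\|_{HS} \leq 1$, using unitarity of $A_i\otimes B_j$ and $\sum_r \lambda_r = 1$; and $\|N_k\|_{HS} = \sqrt{d}$, from $C_k^2 = I$ giving $\Tr(C_k^2) = d$. The complex Grothendieck inequality applied to this bilinear form, viewing $(i,j)$ as one index and $k$ as the other, then yields
\[
  \beta^*(G) \;\leq\; K_G^\C\sqrt{d}\,R,\qquad R \;:=\; \max_{|s_{ij}|,\,|t_k|\leq 1\text{ complex}}\Big|\sum_{ijk}G_{ijk}\, s_{ij}\,\overline{t_k}\Big|.
\]

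It remains to bound $R \leq \sqrt{3K_G^\C}\,\beta(G)$, after which the desired inequality $\beta^*(G) \leq \sqrt{3d}\,(K_G^\C)^{3/2}\beta(G)$ follows by multiplication. This is the crux of the argument and the main obstacle: in general a bipartite bias can exceed its tripartite counterpart by an unbounded factor, so one must exploit the specific origin of $R$. The plan is to apply the complex Grothendieck inequality once more to factor the matrix-indexed signs $s_{ij}$ through a Hilbert space, combined with a symmetrization over the three bipartitions $(AB)|C$, $(AC)|B$, and $(BC)|A$; the three resulting inequalities combine via a Cauchy--Schwarz argument to produce the factor $\sqrt{3}$ together with an additional $\sqrt{K_G^\C}$, rather than a full $K_G^\C$ that a naive iteration would give. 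This symmetrization is where the present proof diverges most from the operator-space arguments of \cite{perezgarcia:2008}, and it is also what accounts for the somewhat unusual exponent $3/2$ of $K_G^\C$ in the final bound.
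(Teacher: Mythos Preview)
Your first reduction is fine: Schmidt-decomposing across the $(AB)|C$ cut and recognizing the bias as $\sum_{ijk}G_{ijk}\langle M_{ij},N_k\rangle$ with $\|M_{ij}\|_{HS}\le 1$ and $\|N_k\|_{HS}=\sqrt d$ is correct, and Grothendieck's inequality in the Hilbert space $\mat{d}$ legitimately gives $\beta^*(G)\le K_G^\C\sqrt d\,R$.

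The gap is the second step. After that first application of Grothendieck, $R=\max_{|s_{ij}|,|t_k|\le 1}\big|\sum_{ijk}G_{ijk}s_{ij}t_k\big|$ is simply the \emph{merged} bipartite bias of the tensor $G$, and it carries no memory whatsoever of the vectors $M_{ij}$ or of the original quantum strategy. There is nothing left to ``exploit'': for a generic $\pm 1$ tensor with $N$ questions per player one has $R=\Theta(N^{5/2})$ while $\beta(G)=\Theta(N^{3/2})$, so no inequality of the form $R\le C\,\beta(G)$ can hold. Your proposed remedy---applying Grothendieck again to ``factor $s_{ij}$ through a Hilbert space'' and symmetrizing over the three bipartitions---does not make sense in this direction: Grothendieck bounds vector inner products by scalars, not the other way around, and symmetrizing over bipartitions only gives you three different unbounded quantities $R_{(AB)|C},R_{(AC)|B},R_{(BC)|A}$, none of which is controlled by $\beta(G)$.

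The paper avoids this by \emph{never} collapsing to scalars in the third player's index. It writes $M=\sum_{i,j}E_{i,j}\otimes|v_i\rangle\langle v_j|$ and keeps the operator blocks $E_{i,j}$ throughout. The key estimate (Claim~\ref{claim:normbound}) bounds $\big\|\sum_j E_{i,j}E_{i,j}^\dagger\big\|_\infty\le (K_G^\C)^3\beta(G)^2$, using Grothendieck three times \emph{inside} an operator-norm calculation; this is where the exponent $3/2$ ultimately comes from. The dimension factor $\sqrt d$ and the constant $\sqrt 3$ then enter not via Grothendieck at all but via two applications of the non-commutative Khinchine inequality (Theorem~\ref{thm:khinchine}), which converts the square-function bound on the $E_{i,j}$ into an operator-norm bound on a Rademacher sum $\sum_{i,j}\eps'_i\eps_j E_{i,j}$. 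The point is that Khinchine preserves exactly the operator structure that your first Grothendieck step destroys.
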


\paragraph{Generalizations.} While we present our results in the case of three-player XOR games, they have straightforward extensions to an arbitrary number of players. In particular, one can show that the following holds, for any $r\geq 3$:
\begin{enumerate}
\item For any integer $N$ that is a power of $2$, there exists a $r$-player XOR game $G$, with $N^2$ questions per player, such that $\beta^*(G) \geq \Omega\big((N\log^{-5} N)^{(r-2)/2}\big)\beta(G)$, and there is a entangled strategy achieving this gap that involves only $N$-dimensional Pauli observables. 
\item If $G$ is a $r$-player XOR game in which at least $r-2$ of the players have at most $Q$ possible questions each, then $\beta^*(G) \leq O(Q^{(r-2)/2})\beta(G)$.
\item If $G$ is a $r$-player XOR game in which the shared state of the players is restricted to have local dimension $d$ on at least $r-2$ of the players, then $\beta^*(G) \leq O\big(d^{(r-2)/2}\big)\beta(G)$.
\end{enumerate}

\paragraph{Applications to operator space theory.} The original motivation for the construction in~\cite{perezgarcia:2008} was to show that a certain trilinear extension of Grothendieck's inequality does not hold. Our construction leads to an improved obstruction: the three-player XOR game constructed in Theorem~\ref{thm:main} can be used to show that two different norms are not equivalent for the space of trilinear functionals on~$\ell_\infty\times\ell_\infty\times\ell_\infty$.
More precisely, Theorem~\ref{thm:main} implies that for any $N=2^n$ there is a trilinear functional $T: \ell_\infty^{N^2} \times \ell_\infty^{N^2} \times \ell_\infty^{N^2}\to\C$ such that 
\beq\label{eq:os-norm}
\| T \|_{\text{cb}}\, \geq \, \Omega\big(\sqrt{N}\log^{-5/2} N\big)\,\| T\|,
\eeq
and the $N$th amplifications in the completely bounded norm suffice.
%$\ell_1 \otimes \ell_1 \otimes \ell_1$. 
%More precisely, Theorem~\ref{thm:main} implies that for any $N=2^n$ there is a tensor $t \in \ell_1^{N^2} \otimes \ell_1^{N^2} \otimes \ell_1^{N^2}$ such that 
%\beq\label{eq:os-norm}
%\| t \|_{\ell_1^{N^2} \otimes_{min} \ell_1^{N^2} \otimes_{min} \ell_1^{N^2}}\, \geq \, \Omega\big(\sqrt{N}\log^{-5/2} N\big)\,\| t\|_{\ell_1^{N^2} \otimes_\eps \ell_1^{N^2} \otimes_\eps \ell_1^{N^2}}.
%\eeq
%Here $\otimes_\eps$ denotes the injective tensor product of Banach spaces, while $\otimes_{min}$ is the minimal tensor product and $\ell_1$ is equipped with its natural operator space structure. 
Put differently, the injective and minimal tensor norms are inequivalent on~$\ell_1\otimes\ell_1\otimes\ell_1$.
This improves on the estimate from~\cite{perezgarcia:2008}, in which the bound was logarithmic in $N$.
For more details and background on relevant aspects of Grothendieck's inequality we refer to the excellent survey~\cite{PisierGT}, and to Section 20 in particular for the connection with XOR games.

In addition, Pisier~\cite{Pisiernote} recently applied our result to prove an almost-tight estimate on the norm of the re-ordering map 
\beq\label{eq:defJ}
 J:\, (H_1\otimes_2 K_1) \otimes_\epsilon \cdots \otimes_\epsilon (H_\ell \otimes_2 K_r)\,\to\, (H_1\otimes_2 \cdots \otimes_2 H_r) \otimes_\epsilon (K_1\otimes_2 \cdots \otimes_2 K_r), 
\eeq
where $H_i,K_i$ are $N$-dimensional Hilbert spaces, proving that $\| J \| = \tilde{\Omega}\big(N^{r-1}\big)$, where the $\tilde{\Omega}$ notation ignores possible poly-logarithmic factors. 

\begin{samepage}
\subsection{Proof overview and techniques}\label{sec:techniques}

\paragraph{Lower bound.} Our construction of a three-player XOR game $G_N$ proceeds through two independent steps. 
\end{samepage}
In the first step we assume given a $3$-tensor $T = T_{(i,i'),(j,j'),(k,k')}$ of dimension $N^2\times N^2\times N^2$, where $N$ is a power of $2$. Based on $T$, we define a three-player XOR game $G_N=G(T)$. Questions in this game are $N$-dimensional Pauli matrices $P,Q,R$, and the corresponding game coefficient\footnote{The equation below defines a complex number. Taking its real or imaginary part would result in a Bell functional, which can in turn easily be transformed into an XOR game through a proper normalization.} is defined as
$$ G(P,Q,R) \,=\,\langle T,P\otimes Q \otimes R \rangle \,:=\, \sum_{(i,i'),(j,j'),(k,k')} T_{(i,i'),(j,j'),(k,k')} \, P_{i,i'} Q_{j,j'} R_{k,k'}.$$
This definition results in a game whose entangled and classical biases can be directly related to \emph{spectral properties} of the tensor $T$.  On the one hand we show that the classical bias $\beta(G_N)$ reflects the tripartite structure of $T$, and is upper-bounded by the norm of $T$ as a trilinear operator. On the other hand we show that the entangled bias $\beta^*(G_N)$ is \emph{lower-bounded} by the norm of $T$ \emph{as a matrix} --- a bilinear operator on $N^3$-dimensional vectors, obtained by pairing up the indices $(i,j,k)$ and $(i',j',k')$. This new connection reduces the problem of constructing a game with large QC-gap to constructing a tensor $T$ with appropriate spectral properties. 

The second step of the proof is our main technical contribution. We give a probabilistic construction of a $3$-tensor $T$ having large norm when seen as a bilinear operator (giving a large entangled bias), but low norm when seen as a trilinear operator (giving a low classical bias). 
To this end, we simply take~$T$ to correspond to an (almost) rank-$1$ matrix: letting $(g_{ijk})$ be a random $N^3$-dimensional vector with i.i.d. entries distributed as standard Gaussians,\footnote{Our results also hold with the Gaussians replaced by i.i.d. standard Bernoulli random variables.} the $(i,i'),(j,j'),(k,k')$-th entry of $T$ is $g_{ijk}\,g_{i'j'k'}$ if $i\neq i'$, $j\neq j'$ and $k\neq k'$, and~$0$ otherwise. 
The fact that $T$, when seen as a matrix, is close to having rank~1 makes it easy to lower bound its spectral norm. An upper bound on the norm of $T$ as a trilinear operator is proved in two steps. In the first step we apply a concentration bound due to Lata\l a to show that for any fixed Hermitian $X,Y,Z$ with Frobenius norm at most~$1$, the product $|\langle T,X\otimes Y \otimes Z\rangle|$ is highly concentrated around its expected value, where the concentration is over the random choice of~$T$. We then conclude by a union bound, using a delicate $\eps$-net construction based on a decomposition of Hermitian matrices with Frobenius norm at most $1$ as linear combinations of (normalized, signed) projectors. 

\paragraph{Upper bounds.} 
We prove upper bounds on the largest possible QC-gap achievable by any three-player XOR game, both as a function of the local dimension of an optimal strategy, and of the number of questions per player in the game. Both bounds follow the same overall proof strategy: using a decoupling argument, we show that the third player can be restricted to applying a classical strategy while incurring only a bounded factor loss in the bias. We conclude by applying (the easy direction of) Tsirelson's Theorem and Grothendieck's inequality (see Section~\ref{sec:grothendieck}) to show that the first two players can be made classical at a further loss of a constant factor only.    

%The upper bound in terms of the dimension essentially follows an argument from~\cite{perezgarcia:2008}, but we make it more accessible, completely avoiding the language of operator spaces. Our proof is self-contained, and its main ingredient is a standard estimate for the operator norm of random linear combinations of matrices, known as the operator Khintchine inequality (a similar estimate was already used in~\cite{perezgarcia:2008}).

\paragraph{Organization of the paper.} We start with some preliminaries in Section~\ref{sec:prelim}. We describe our construction of a game with unbounded QC-gap in Section~\ref{sec:unbounded}. Our upper bounds on the QC-gap as a function of the number of questions and the local dimension are proved in Section~\ref{sec:ub}. We conclude with some open questions in Section~\ref{sec:conclusion}.

%----------------------------------------%
\section{Preliminaries}
\label{sec:prelim}
%----------------------------------------%

\subsection{Notation}

For a positive integer $N$ we define $[N]:=\{1,\dots,N\}$. For a positive integer $K$ we denote by~$[N]^K$ the Cartesian product of the set $[N]$ with itself $K$ times (\ie, $[N]\times\cdots\times[N]$).

For a subset $\mathcal W\subseteq \mathcal V$ of a normed vector space $(\mathcal V,\|\cdot\|)$ we let $\setft{S}(\mathcal W) := \{X\in \mathcal W\st \|X\| = 1\}$ be the unit sphere, $\setft B(\mathcal W,\tau) := \{X\ \in\mathcal W\st \|X\| \leq \tau\}$ the ball of radius $\tau$ and  $\setft B(\mathcal W) := \setft B(\mathcal W,1)$ the unit ball. We let $\|\cdot\|_2$ denote the usual Euclidean norm. Throughout we endow $\C^N$ with this norm.

We will usually use $g\sim\normal$ to denote a real-valued random variable distributed according to a standard normal (Gaussian) distribution (\ie, a variable with mean~$0$ and variance~$1$), and $\ket{g}\sim \normal^N$ for an $N$-dimensional vector whose entries are i.i.d. standard normal random variables.

\paragraph{Matrices.} Throughout $\mathcal{H}$ will denote a $N$-dimensional complex Hilbert space. We identify the set of linear operators $\lin{\Ht}$ on $\Ht$ with the set $\mat{N}$ of complex $N$-by-$N$ matrices. Let  $\herm{\Ht} = \{X\in \lin{\Ht}\st X^{\dagger} = X\}$ be the subset of Hermitian operators and  $\obs{\Ht}\subseteq\herm{\Ht}$ be the Hermitian operators with all eigenvalues in $\pmset{}$. In other words, $\obs{\Ht}$ is the set of $\pmset{}$-valued observables on $\mathcal{H}$. 
Note that operators in $\obs{\Ht}$ are unitary and square to the identity.
We will use the notation $\herm{N}$ and $\obs{N}$ when we think of the operators'  matrix representation. The space of matrices $\mat{N}$ is a Hilbert space for the inner product $(A,B)\mapsto \langle A,B\rangle := \Tr(AB^\dagger)$. The resulting norm is the {\em Frobenius norm} $A\mapsto \|A\|_F := \sqrt{\Tr(AA^\dagger)}$.
Throughout we tacitly endow $\mat{N}$ with the Frobenius norm, so the balls and sphere are always defined with respect to this norm.
Note that if we let the singular values of a matrix $X\in \mat{N}$ be $\sigma_1(X)\geq\dots\geq\sigma_N(X)$, then $\|X\|_F^2 = \sigma_1(X)^2 + \cdots + \sigma_N(X)^2$. We recall that for each eigenvalue $\lambda$ of a Hermitian matrix $X$ there is a corresponding singular value $\sigma = |\lambda|$.
We denote by $\|\cdot \|_\infty = \sigma_1(X)$ the operator norm on $\mat{N}$.
Let $\proj{N}_k\subseteq \herm{N}$ be the set of rank-$k$ (orthogonal) projectors on~$\C^N$ and let $\projnorm{N}_k = \proj{N}_k/\sqrt{k}$ be the set of rank-$k$ projectors that are normalized with respect to the Frobenius norm. Define the set of all $N$-dimensional normalized projectors by $\projnorm{N} = \bigcup_{k=1}^N\projnorm{N}_k$. 

If $N = 2^{n}$ for some positive integer $n$, we let $\mathcal{P}_{n} := \big\{\big(\begin{smallmatrix}1&0\\ 0&1\end{smallmatrix}\big), \big(\begin{smallmatrix}0&1\\ 1&0\end{smallmatrix}\big), \big(\begin{smallmatrix}0&-i\\ i&0\end{smallmatrix}\big), \big(\begin{smallmatrix}1&0\\ 0&-1\end{smallmatrix}\big) \big\}^{\otimes n}$ be the set of $n$-fold tensor products of Pauli matrices. The letters $P,Q,R$ will usually denote elements of $\mathcal{P}_n$. We have $|\mathcal{P}_{n}| = N^2$, and for $P,Q\in\mathcal{P}_n$ we have $\langle P,Q\rangle  = N\,\delta_{P,Q}$: the set $\mathcal{P}_n$ forms an orthogonal basis of observables for~$\mat{N}$.

\subsection{Tensors}\label{sec:tensors}

Given positive integers $r,N_1,\dots,N_r$, an {\em $r$-tensor of dimensions $N_1\times\cdots\times N_r$} is a map of the form $T:[N_1]\times\cdots\times[N_r]\to\C$. Every element $T(i_1,\dots,i_r)$ of such a tensor is specified by an $r$-tuple of indices $(i_1,\dots,i_r)\in[N_1]\times\cdots\times[N_r]$. We will mostly deal with $3$-tensors of dimensions $N^2\times N^2\times N^2$ for  some $N\in\N$. In this case we index the elements by three pairs of indices $(i,i'),(j,j')$ and $(k,k')\in[N]^2$. We will think of such a tensor in two different ways: as a \emph{bilinear} functional acting on $N^3$-dimensional complex vectors, and as a \emph{trilinear} functional acting on Hermitian $N\times N$ matrices. For the sake of concreteness we now describe in detail how these two perspectives relate to each other.

\paragraph{The bilinear view.}
Let $T$ be a $3$-tensor of dimension $N^2\times N^2\times N^2$.
The dimensions of the tensor $T$ allow us to view it as an $N^3$-by-$N^3$ complex matrix. Correspondingly, we define the spectral norm of $T$ by
\beqn
\|T\|_{3,3}\,:=\, \max_{x,y \in \sphere{N^3}}  \bigg|\sum_{(i,j,k),(i',j',k')\in[N]^3} T_{(i,i'),(j,j'),(k,k')} x_{i,j,k}y_{i',j',k'}\bigg|.
\eeqn

Suppose that that for some $n\in \N$, we have $N = 2^{n}$.
Since the set $\pauli{N^3} = \{X\otimes Y\otimes Z\st X,Y,Z\in\pauli{N}\}$ is an orthogonal basis for $\mat{N^3}$, we can define the ``Fourier coefficient'' of $T$ at $(P, Q, R)$ as 
\beqn
\widehat{T}(P,Q,R) := \langle T,P\otimes Q\otimes R\rangle = \sum_{(i,i'),(j,j'),(k,k')\in[N]^2} T_{(i,i'),(j,j'),(k,k')} P_{i,i'} Q_{j,j'} R_{k,k'}. 
\eeqn

With this definition, $T$ can be written as
\beqn
T = N^{-3} \sum_{P,Q,R\in\pauli{N}} \widehat{T}(P,Q,R) P\otimes Q \otimes R.
\eeqn

\paragraph{The trilinear view.}
Let $T$ be a $3$-tensor of dimensions $N^2\times N^2\times N^2$.
We can associate with $T$ a trilinear functional $L_T:\herm{N} \times \herm{N} \times \herm{N}\!\to\!\C$ defined by
\beqn
L_T(X,Y,Z) \,=\,  \langle T,X\otimes Y\otimes Z\rangle\,=\, \sum_{(i,j,k),(i',j',k')\in [N]^3} T_{(i,i'),(j,j'),(k,k')} X_{i,i'} Y_{j,j'} Z_{k,k'},
\eeqn
where $X,Y,Z\in\herm{N}$. 
The operator norm of $L_T$ induces the following norm on $T$
%The map $L_T \to T$ is exactly the map $J$ defined in~\eqref{eq:defJ}, and it leads to the following norm on $L_T$: 
\beqn
\|T\|_{2,2,2}\,:= \max_{X, Y, Z\in \hball{N}}  \big|L_T(X,Y,Z)\big| =  \max_{X,Y,Z\in\hball{N}}|\langle T,X\otimes Y\otimes Z\rangle|.\footnote{The restriction to Hermitian matrices in this definition is not essential, but it will be convenient later on.}
\eeqn
%
%\medskip
%
%These two different views of a same tensor $T$ are connected through $L_T(X,Y,Z) = \langle T,X\otimes Y \otimes Z \rangle$, where the inner-product is the usual matrix inner-product. The norm of $L_T$ can then be written as
%\beqn
%\|T\|_{2,2,2} = \max_{X,Y,Z\in\hball{N}}|\langle T,X\otimes Y\otimes Z\rangle|.
%\eeqn

\subsection{XOR games}\label{sec:prelimxor}

An $r$-player XOR game with $N$ questions per player is fully specified by a joint probability distribution $\pi$ on $\mathcal [N]^r$  and an $r$-tensor $M:[N]^r\to \pmset{}$.
The {\em classical bias} of an XOR game $G = (\pi, M)$ is defined by
\beqn
\beta(G) := \max_{\chi_1,\dots,\chi_r:[N]\to\pmset{} }\Exp_{(q_1,\dots,q_r)\sim\pi}\Big[M(q_1,\dots,q_r)\,\chi_1(q_1)\cdots \chi_r(q_r)\Big].
\eeqn
The maps $\chi_1,\dots,\chi_r$ in the above maximum are referred to as {\em strategies}: they should be interpreted as giving the players' answers to the questions $q_1,\dots,q_r$, respectively.
The \emph{entangled bias} of $G$ is defined by
\beqn
\beta^*(G) := \sup_{\substack{d\in\N,\,\ket{\Psi}\in\sphere{d^r}\\ A_1,\dots,A_r:[N]\to\obs{d}}}\Exp_{(q_1,\dots,q_r)\sim\pi}\Big[M(q_1,\dots,q_r)\, \psibra A_1(q_1)\otimes\cdots\otimes A_r(q_r)\psiket\Big].
\eeqn

In the sequel it will be convenient to merge $\pi$ and $M$ into a single tensor $T:[N]^r\to\R$ defined by $T(q_1,\dots,q_r)= \pi(q_1,\dots,q_r)M(q_1,\dots,q_r)$. Conversely, any tensor $T:[N]^r\to\R$ defines (up to normalization) an XOR game by setting the distribution to $\pi(q_1,\dots,q_r) = |T(q_1,\dots,q_r)|$ and the game tensor to $M(q_1,\dots,q_r) = \sign\big(T(q_1,\dots,q_r)\big)$.

\subsection{$\eps$-nets}\label{sec:epsnet}

Our probabilistic proof of the existence of a game for which there is
a large QC-gap relies on the construction of
specific $\eps$-nets over Hermitian matrices, which we describe in
this section.

\begin{definition} An $\eps$-net for a subset $\mathcal W$ of a metric space
$(\mathcal V,d)$ is a finite set $W\subseteq \mathcal V$ such that for every $x\in \mathcal W$,
there exists an $s\in W$ such that $d(x,s)\leq \eps$.
\end{definition}

\begin{fact}\label{fact:epsnet} 
For every $N\in\N$ and any $\eps>0$ there
exists an $\eps$-net $S_\eps$ for $\sphere{N}$ of cardinality $|S_\eps|\leq
(1+2/\eps)^N$.
\end{fact}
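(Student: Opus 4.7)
The plan is to use the classical volume-packing argument, constructing the net as a maximal $\eps$-separated subset of the sphere. Specifically, consider the family of all finite subsets $W \subseteq \sphere{N}$ with the property that any two distinct points $x,y \in W$ satisfy $\|x-y\|_2 > \eps$; since such sets are finite and the empty set works, there exists a maximal element $S_\eps$ (by Zorn's lemma, or simply by a greedy procedure that must terminate by the volume bound below).

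First I would observe that maximality immediately gives the net property: for any $x \in \sphere{N}$, if no $s \in S_\eps$ satisfied $\|x-s\|_2 \leq \eps$, then $S_\eps \cup \{x\}$ would be a strictly larger $\eps$-separated set, contradicting maximality. So every point of $\sphere{N}$ lies within distance $\eps$ of some element of $S_\eps$.

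The heart of the argument is then to bound $|S_\eps|$ by comparing volumes in the ambient (real) Euclidean space. For each $s \in S_\eps$, consider the open ball $B(s, \eps/2)$ of radius $\eps/2$ around $s$ in the ambient space. By the $\eps$-separation property these balls are pairwise disjoint. Moreover, each such ball is contained in the ball $B(0, 1+\eps/2)$ centered at the origin, since $s$ lies on $\sphere{N}$ and the triangle inequality bounds any point in $B(s,\eps/2)$ by $1 + \eps/2$ in norm. Comparing Lebesgue measures in the ambient real vector space of dimension $d$, and using the fact that the volume of a Euclidean ball scales as the $d$-th power of its radius, yields
\[
|S_\eps|\, (\eps/2)^d \;\leq\; (1+\eps/2)^d,
\]
so $|S_\eps| \leq (1+2/\eps)^d$, which is the desired estimate.

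The step I expect to be the main (minor) subtlety is simply bookkeeping the correct value of the dimension $d$ appearing in the volume comparison, since $\sphere{N}$ here denotes the unit sphere of $\C^N$; the rest of the argument is entirely standard and involves no hidden difficulty.
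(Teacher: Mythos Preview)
Your approach is exactly the volume/packing argument the paper invokes (the paper's proof is a one-line sketch of precisely this: take a maximal $\eps$-separated set, note the $\eps/2$-balls are disjoint, and compare volumes). The only point worth flagging is the one you already identify: since $\sphere{N}$ here is the unit sphere of $\C^N$, the ambient real dimension is $d=2N$, so the volume comparison literally yields $|S_\eps|\leq (1+2/\eps)^{2N}$ rather than $(1+2/\eps)^N$; the paper's sketch does not address this either, and the extra factor of $2$ in the exponent is harmless for all subsequent uses in the paper.
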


\begin{proof} This well-known fact follows from a volume argument: we
can choose $S_\eps$ so that the balls with radius $\eps/2$ centered at
the points in $S_\eps$ are disjoint.
(See e.g.~\cite[Lemma 4.10]{Pisier:1999}.)
\end{proof}

The following lemma shows that for any $\eps>0$ and $N>1$, an $\eps/(4\sqrt{\ln N})$-net over (normalized, signed) $N$-dimensional projections automatically induces an $\eps$-net over $N$-dimensional Hermitian matrices with Frobenius norm at most $1$. The lemma follows from a well known equivalence between the unit ball of normalized projections and the unit ball corresponding to the matrix norm derived from the Lorentz-sequence semi-norm $\ell_{2,1}$. We give a self-contained proof below. 

\begin{lemma}\label{lem:lorentz}
Let $N>1$ and $X\in B\big(\herm{N}\big)$. Then $X$ can be decomposed as a linear combination 
\beqn
 X = \sum_x \lambda_x X_x,
 \eeqn
where each $X_x\in \projnorm{N}$ is a normalized projector and $\sum_x |\lambda_x| \leq 4\sqrt{\ln N}$.  
\end{lemma}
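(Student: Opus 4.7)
My plan is to reduce the statement to a computation on the eigenvalues of $X$. First I would write $X = X_+ - X_-$ as the difference of its positive and negative parts, both positive semidefinite with orthogonal supports. Since the supports are orthogonal, $\|X_+\|_F^2 + \|X_-\|_F^2 = \|X\|_F^2 \leq 1$, so it suffices to decompose $X_+$ (and symmetrically $X_-$) as a non-negative linear combination of normalized projectors with total weight at most $\|X_+\|_F\sqrt{1+\ln N}$. The bound on $\sum_x|\lambda_x|$ will then follow by summing the two contributions and applying the Cauchy-Schwarz inequality $\|X_+\|_F + \|X_-\|_F \leq \sqrt{2}\,\sqrt{\|X_+\|_F^2 + \|X_-\|_F^2} \leq \sqrt{2}$.

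To decompose $X_+$, I would take its spectral decomposition $X_+ = \sum_{i=1}^m \mu_i |w_i\rangle\langle w_i|$, with $\mu_1 \geq \cdots \geq \mu_m > 0$, and for each $k\leq m$ set $\widetilde Q_k = \tfrac{1}{\sqrt{k}}\sum_{i=1}^k |w_i\rangle\langle w_i| \in \projnorm{N}_k$. With the convention $\mu_{m+1}=0$, the Abel-summation identity reads
\[
 X_+ \,=\, \sum_{k=1}^m \mu_k|w_k\rangle\langle w_k| \,=\, \sum_{k=1}^m (\mu_k - \mu_{k+1})\,\sqrt{k}\, \widetilde Q_k,
\]
which expresses $X_+$ as the desired non-negative combination of normalized projectors (non-negativity of the coefficients is automatic because $(\mu_k)$ is decreasing). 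A second Abel-summation rewrite transforms the resulting sum of coefficients into $\sum_{k=1}^m (\mu_k - \mu_{k+1})\sqrt{k} = \sum_{k=1}^m \mu_k(\sqrt{k} - \sqrt{k-1})$, which is essentially the Lorentz $\ell_{2,1}$ quasi-norm of the eigenvalue sequence.

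The main estimate is then to bound this last quantity by $\|X_+\|_F \sqrt{1 + \ln N}$. For this I would use the elementary inequality $\sqrt{k} - \sqrt{k-1} = 1/(\sqrt{k} + \sqrt{k-1}) \leq 1/\sqrt{k}$ followed by Cauchy-Schwarz:
\[
 \sum_{k=1}^m \mu_k\bigl(\sqrt{k}-\sqrt{k-1}\bigr) \,\leq\, \sum_{k=1}^m \frac{\mu_k}{\sqrt{k}} \,\leq\, \Bigl(\sum_{k=1}^m \mu_k^2\Bigr)^{1/2}\Bigl(\sum_{k=1}^N \tfrac{1}{k}\Bigr)^{1/2} \,\leq\, \|X_+\|_F\,\sqrt{1+\ln N}.
\]
Running the same argument on $X_-$ and summing yields $\sum_x |\lambda_x| \leq \sqrt{2}\sqrt{1+\ln N}$, which one checks numerically is at most $4\sqrt{\ln N}$ for every integer $N \geq 2$. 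No individual step is deep, but the content of the lemma lies in recognizing the right organizing principle: Abel summation converts the decomposition problem into an $\ell_{2,1}$-norm estimate on the eigenvalues, and the logarithmic blow-up arises precisely because Cauchy-Schwarz turns the harmonic sum $\sum 1/k$ into the factor $\sqrt{\ln N}$ — a naive bound coefficient-by-coefficient would yield a much worse $\sqrt{N}$ or $\ln N$ factor.
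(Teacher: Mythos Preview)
Your proof is correct and shares the same underlying idea as the paper's: both decompose $X$ via the layer-cake/Abel-summation identity into a combination of projectors onto top eigenspaces, and both extract the $\sqrt{\ln N}$ factor by Cauchy--Schwarz against a harmonic-type sum. The paper writes this as a continuous integral $X=\int_{-1}^1 \sign(t)P_t\,\mathrm{d}t$ and bounds $\int\sqrt{\rank P_t}\,\mathrm{d}t$ by splitting at $|t|=1/\sqrt{N}$, whereas you work discretely, first splitting $X=X_+-X_-$ and then applying Abel summation twice to reach $\sum_k \mu_k(\sqrt{k}-\sqrt{k-1})\leq \sum_k \mu_k/\sqrt{k}$. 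Your route is arguably the cleaner of the two---no integrals, no case split on $t$---and in fact yields the sharper constant $\sqrt{2(1+\ln N)}$ versus the paper's $2+\sqrt{(\ln N)/2}$; but conceptually they are the same $\ell_{2,1}$ argument in discrete versus integral form.
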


%In particular, the lemma shows that, up to a multiplicative factor $4\sqrt{\ln d}$ any matrix in the unit ball of $(\herm{d},\|\cdot \|_F)$ may be written as a convex combination of (scaled,signed) projections. The existence of such a decomposition follows from well-known properties of matrix Lorentz spaces, normed using the $\ell^{2,1}$-norm of their singular values. We give a direct proof below.  

\begin{proof} Let $X = \sum_i \lambda_i \ketbra{u_i}{u_i}$ be the spectral decomposition of a Hermitian matrix $X$ with norm 
\beq\label{eq:lorentz-1}
\big\|X\big\|_F^2 \,=\, \sum_i \lambda_i^2 \,\leq\, 1.
\eeq
For every $t\in [-1,1]$, let $P_t$ be the projector on Span$\{\ket{u_i}:\, \lambda_i \in [-1,-t)\}$ if $t<0$ and the projector on Span$\{\ket{u_i}:\, \lambda_i \in (t,1]\}$ if $t\geq 0$. Then the following holds:
\beq\label{eq:lorentz-2}
 X \,=\, \int_{t=-1}^1 \text{sign}(t)\,P_t\,\text{dt}\,=\, \int_{t=-1}^1 \sqrt{\text{rank} P_t}\,\frac{\text{sign}(t)\,P_t}{\sqrt{\text{rank}P_t}}\,  \text{dt},
\eeq
where the integral is taken coefficient-wise.\footnote{The coefficients of $P_t$ are step functions, so the integral is well-defined.} By a direct calculation,  
\beq\label{eq:lorentz-3}
\int_{t=-1}^{1} |t|\, \Tr(P_t) \text{dt} \,=\, \frac{1}{2}\sum_i \lambda_i^2 \,\leq\, \frac{1}{2},
\eeq
where the last equality follows from~\eqref{eq:lorentz-1}. Eq.~\eqref{eq:lorentz-2} shows that $X$ may be written as a non-negative linear combination of the $\text{sign}(t)\,P_t/\sqrt{\text{rank}P_t}$ with coefficients summing up to
\begin{align*}
\int_{t=-1}^1  \sqrt{\text{rank} P_t}\,  \text{dt} &\leq \int_{t=-1/\sqrt{N}}^{1/\sqrt{N}} \sqrt{N}\,\text{dt} + \int_{t=-1}^{-1/\sqrt{N}} \sqrt{\text{rank} P_t}\,  \text{dt} +\int_{t=1/\sqrt{N}}^1 \sqrt{\text{rank} P_t}\,  \text{dt}\\
&\leq 2+ \Big(2\int_{t=1/\sqrt{N}}^1 \frac{1}{t}\,\text{dt}\Big)^{1/2}\Big(\int_{t=-1}^{-1/\sqrt{N}} (-t)\,\text{rank} P_t\,\text{dt}+\int_{t=1/\sqrt{N}}^1 t\,\text{rank} P_t\,\text{dt}\Big)^{1/2}\\
&\leq 2+ \sqrt{\ln N/2},
\end{align*}
where the first inequality uses $\text{rank} P_t \leq \min\big( N,\, t^{-2}\big)$ for every $t$, the second inequality follows from Cauchy-Schwarz and the last uses~\eqref{eq:lorentz-3}, together with $\text{rank} P_t = \Tr P_t$. 
\end{proof}

The following lemma gives a straightforward construction of an $\eps$-net for the set of normalized rank-$k$ projectors on~$\C^d$ (see e.g.~\cite{Szarek:1982} for more general constructions of nets on Grassmannian spaces).

\begin{lemma}\label{lem:projnet}
For every $k\in[N]$ and any $0<\eps\leq 1$ there exists a set $$\net_\eps^k\subseteq \bigcup_{\ell=1}^k\projnorm{N}_\ell$$ of size $|\net_\eps^k|\leq 2(5/\eps)^{kN}$, such that for any $X\in \projnorm{N}_k$ there is an $\widetilde X\in\net_\eps^k$ satisfying $\|X - \widetilde X\|_F \leq \eps$.
\end{lemma}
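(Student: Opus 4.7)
The plan is to parameterize each rank-$k$ projector by an orthonormal basis of its range and to approximate that basis componentwise using a net on the complex sphere given by Fact~\ref{fact:epsnet}. Set $\eps':=\eps/\sqrt{2}$ and let $S\subseteq\sphere{N}$ be an $\eps'$-net of cardinality at most $(1+2/\eps')^N\leq(5/\eps)^N$, where the second inequality uses $\eps\le 1$. I would then define $\net_\eps^k$ to be the set of all normalized projectors of the form $\Pi/\sqrt{\rank(\Pi)}$, where $\Pi$ is the orthogonal projector onto the linear span of some $k$-tuple $(v_1,\ldots,v_k)\in S^k$. By construction $\net_\eps^k\subseteq\bigcup_{\ell=1}^k\projnorm{N}_\ell$, and $|\net_\eps^k|\leq |S|^k\leq(5/\eps)^{kN}$, comfortably within the claimed bound.

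For the approximation property, fix $X=P/\sqrt{k}\in\projnorm{N}_k$ and an orthonormal basis $u_1,\ldots,u_k$ for the range of $P$, so that $P=\sum_{i=1}^k\ketbra{u_i}{u_i}$. For each $i$, choose $\tilde u_i\in S$ with $\|u_i-\tilde u_i\|_2\leq\eps'$, and let $\Pi$ denote the orthogonal projector onto $\mathrm{span}(\tilde u_1,\ldots,\tilde u_k)$, of some rank $\ell\leq k$. The central observation is that $\tilde u_i$ lies in the range of $\Pi$, so $(I-\Pi)u_i=(I-\Pi)(u_i-\tilde u_i)$ has norm at most $\eps'$; summing over $i$ gives
\[
\Tr(P\Pi)=\sum_{i=1}^k\bra{u_i}\Pi\ket{u_i}=\sum_{i=1}^k\bigl(1-\|(I-\Pi)u_i\|_2^2\bigr)\geq k(1-\eps'^2).
\]

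Expanding $\|X-\Pi/\sqrt{\ell}\|_F^2=2-2\Tr(P\Pi)/\sqrt{k\ell}$ and combining this with $\sqrt{k/\ell}\geq 1$ (since $\ell\leq k$) yields
\[
\|X-\Pi/\sqrt{\ell}\|_F^2\leq 2-2(1-\eps'^2)\sqrt{k/\ell}\leq 2\eps'^2=\eps^2,
\]
so $\|X-\Pi/\sqrt{\ell}\|_F\leq\eps$ as required. The subtle point, and the only thing that requires care, is that the perturbed vectors $\tilde u_i$ may well be linearly dependent so that $\Pi$ has rank $\ell<k$; but this does not hurt the argument, because $\sqrt{k/\ell}\geq 1$ means a lower-rank $\Pi$ only strengthens the inequality. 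This is precisely what allows the same explicit net construction to work uniformly for every $\eps\in(0,1]$ without needing to choose $\eps'$ small relative to $k$.
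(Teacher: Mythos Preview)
Your proof is correct and follows essentially the same approach as the paper: parameterize projectors by $k$ vectors from a sphere net $S_{\eps/\sqrt{2}}$, take the normalized projector onto their span, and bound $\|X-\widetilde X\|_F^2 = 2 - 2\Tr(P\Pi)/\sqrt{k\ell}$ using $\Tr(P\Pi)\geq k(1-\eps'^2)$ and $\ell\leq k$. The only cosmetic differences are that the paper indexes by $k$-subsets rather than $k$-tuples (hence the extra factor $2$ in its size bound) and establishes $\bra{u_i}\Pi\ket{u_i}\geq 1-\eps'^2$ via $|\braket{u_i}{\tilde u_i}|^2$ rather than your cleaner $(I-\Pi)u_i=(I-\Pi)(u_i-\tilde u_i)$ identity.
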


\begin{proof}
Let $\eta = \eps/\sqrt{2}$ and $S_{\eta}$ be an $\eta$-net for the unit sphere $\sphere{N}$.
For every $k$-subset $T\subseteq S_{\eta}$ let $Y_T$ be the projector on the space spanned by the vectors in $k$ and let $\overline Y_T = Y_T/\sqrt{\rank Y}$. Define the set $\net_\eps^k$ by
\beqn
\net_\eps^k = \left\{\overline Y_T: T\subseteq S_\eta,\, |T| = k\right\}.
\eeqn
Note that for any $\overline Y\in \net_\eps^k$, we have $\rank (\overline Y) \leq k$. 
Moreover, by the upper bound on the minimal size of~$S_\eta$ from Fact~\ref{fact:epsnet}, we have
\beqn
|\net_\eps^k| \leq {|S_\eta|\choose k} \leq {(3/\eta)^N\choose k} \leq 2\left(\frac{5}{\eps}\right)^{kN}.
\eeqn

Fix $X\in \projnorm{N}_k$ and let $\ket{\phi_1},\dots,\ket{\phi_k}\in\sphere{N}$ be orthonormal eigenvectors of $X$ with eigenvalue~$1/\sqrt{k}$. Let $\ket{\psi_1},\dots,\ket{\psi_k}\in S_\eta$ be the vectors closest to $\ket{\phi_1},\dots,\ket{\phi_k}$ (resp.) with respect to the Euclidean distance. 
Let $Y$ be the projector on the space spanned by the $\ket{\psi_1},\dots,\ket{\psi_k}$ and let $\overline Y= Y/\sqrt{\rank Y}$. Clearly, $\overline Y\in \net_\eps^k$.
Since $Y$ is positive semidefinite and for every $i = 1,\dots, k$, the vector $\ket{\psi_i}$ is an eigenvector of $Y$ with eigenvalue~1, 
\beqn
\bra{\phi_i}Y\ket{\phi_i} \geq |\braket{\phi_i}{\psi_i}|^2 \geq 1 - \eta^2,
\eeqn
where the second inequality follows since $\ket{\psi_i}$ is closest to $\ket{\phi_i}$ in the $\eps$-net.\footnote{Notice that for any complex unit vectors $x,y$, we have $\|x-y\|^2 = 2- 2\Re(\langle x,y\rangle)$ and $|\langle x,y\rangle|^2 = \Re(\langle x,y\rangle)^2 + \Im(\langle x,y\rangle)^2$.}
By definition of the Frobenius norm and the fact that $X$ and $\overline Y$ are Hermitian, we get
\beqrn
\|X - \overline Y\|_F^2 &=& \|X\|_F^2 + \|\overline Y\|_F^2 - 2 \Tr(X\overline Y)\\
&\leq& 2 - 2\Tr(X\overline Y)\\
&\leq& 2\left(1 - \frac{1}{k}\sum_{i=1}^k \bra{\phi_i}Y\ket{\phi_i}\right)\\
&\leq& 2\left(1 - \Big(1-\eta^2\Big) \right)\\
&=& \eps^2,
\eeqrn
and the lemma is proved.
\end{proof}

\begin{definition}\label{def:triplenet}
For every triple of integers $(k,\ell,m)\in[N]^3$ and any real number $0<\eps\leq 1$, define
\beqn
\net_{\eps}^{(k,\ell, m)} = \{X\otimes Y\otimes Z\st (X,Y,Z)\in \net_\eps^k\times \net_\eps^\ell\times \net_\eps^m \},
\eeqn 
and $\net_{\eps} = \bigcup_{(k,\ell,m)\in[N]^3} \net_{\eps}^{(k,\ell, m)}$.
\end{definition}

\begin{proposition}\label{prop:triplenet}
For any $\eps>0$ and $d>1$, the set $\net_{\eps} $ is a $3\eps$-net for the set of matrices $X\otimes Y\otimes Z$ where $(X,Y,Z)\in\projnorm{N}\times\projnorm{N}\times\projnorm{N}$, with respect to the distance function defined by the Frobenius norm. 
\end{proposition}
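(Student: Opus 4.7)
The plan is to approximate each tensor factor separately using Lemma~\ref{lem:projnet} and then combine the three approximations via a standard telescoping identity, using multiplicativity of the Frobenius norm under tensor products.

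More concretely, fix normalized projectors $X\in\projnorm{N}_k$, $Y\in\projnorm{N}_\ell$, $Z\in\projnorm{N}_m$. First I would apply Lemma~\ref{lem:projnet} three times (once for each rank $k,\ell,m$) to produce $\widetilde X\in\net_\eps^k$, $\widetilde Y\in\net_\eps^\ell$, $\widetilde Z\in\net_\eps^m$ with $\|X-\widetilde X\|_F, \|Y-\widetilde Y\|_F, \|Z-\widetilde Z\|_F\leq \eps$. By Definition~\ref{def:triplenet} the tensor $\widetilde X\otimes \widetilde Y\otimes \widetilde Z$ lies in $\net_\eps^{(k,\ell,m)}\subseteq \net_\eps$, so it is a legitimate candidate to witness the net property.

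Next I would use the telescoping identity
\beqn
X\otimes Y\otimes Z - \widetilde X\otimes \widetilde Y\otimes \widetilde Z = (X-\widetilde X)\otimes Y\otimes Z + \widetilde X\otimes (Y-\widetilde Y)\otimes Z + \widetilde X\otimes \widetilde Y\otimes (Z-\widetilde Z),
\eeqn
apply the triangle inequality for the Frobenius norm, and then use the fact that $\|A\otimes B\|_F=\|A\|_F\|B\|_F$ on each term. Since $X,Y,Z$ and $\widetilde X,\widetilde Y,\widetilde Z$ are all normalized projectors (in the case of the approximants, this is built into the definition of $\net_\eps^k$, whose elements have unit Frobenius norm), every factor in the three terms other than the difference has Frobenius norm one. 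Thus the right-hand side is bounded by $\|X-\widetilde X\|_F+\|Y-\widetilde Y\|_F+\|Z-\widetilde Z\|_F\leq 3\eps$, which is exactly the desired conclusion.

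There is no real obstacle: the only thing to be careful about is that the approximating projectors returned by Lemma~\ref{lem:projnet} may have rank strictly less than the rank of the original (they lie in $\bigcup_{\ell=1}^k\projnorm{N}_\ell$), but since all elements of $\projnorm{N}$ have unit Frobenius norm by construction, this causes no issue with the tensor-product norm bound.
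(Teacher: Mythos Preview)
Your proposal is correct and follows essentially the same approach as the paper: approximate each factor via Lemma~\ref{lem:projnet}, use a telescoping decomposition of $X\otimes Y\otimes Z-\widetilde X\otimes\widetilde Y\otimes\widetilde Z$ together with the triangle inequality and multiplicativity of the Frobenius norm, and conclude using that all factors have unit Frobenius norm. The only cosmetic difference is the particular arrangement of the telescoping identity (the paper iterates the two-factor identity $A\otimes B-\tilde A\otimes\tilde B=A\otimes(B-\tilde B)+(A-\tilde A)\otimes\tilde B$ twice), which is equivalent to yours.
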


\begin{proof} Let $1\leq k,\ell,m\leq N$ and $X\in\projnorm{N}_k$, $Y\in\projnorm{N}_\ell$ and $Z\in\projnorm{N}_m$.
Let $\widetilde X\in\net_\eps^k$, $\widetilde Y\in\net_\eps^\ell$ and $\widetilde Z\in\net_\eps^m$ be the closest elements in the nets to (resp.) $X$, $Y$ and $Z$ in Frobenius distance.
Using the trivial identity $A\otimes B - \tilde A\otimes \tilde B = A\otimes (B - \tilde B) - (A - \tilde A)\otimes \tilde B$ twice in a row, and the triangle inequality, we can upper bound the distance~$\| X\otimes Y\otimes Z- \tilde X\otimes \tilde Y\otimes \tilde Z\|_F$ by
\beqn
\|X\otimes Y\otimes (Z - \tilde Z)\|_F + \|X\otimes (Y - \tilde Y)\otimes \tilde Z\|_F + \|(X - \tilde X)\otimes \tilde Y \otimes \tilde Z\|_F.
\eeqn
Since for any $A,B$, $\|A\otimes B\|_F = \|A\|_F\,\|B\|_F$, the quantity above is less than~$3\eps$.
\end{proof}

\subsection{Deviation bounds}

In this section we collect some useful large deviation bounds.

\begin{fact}[Gaussian tail bound]\label{fact:gausstail} Let $g\sim\normal$ be a standard normal random variable. Then for any $t\geq 0$, 
$$\Pr\big[ |g| \geq t \big] \,\leq\, 2e^{-t^2/2}.$$
\end{fact}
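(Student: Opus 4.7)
The plan is to use the standard Chernoff-style argument based on the moment generating function of a standard normal, which yields the cleanest constants for the stated bound.

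First I would handle the one-sided tail $\Pr[g \geq t]$. For any $\lambda \geq 0$, Markov's inequality applied to the non-negative random variable $e^{\lambda g}$ gives
\[
\Pr[g \geq t] \;=\; \Pr\bigl[e^{\lambda g} \geq e^{\lambda t}\bigr] \;\leq\; e^{-\lambda t}\, \Exp\bigl[e^{\lambda g}\bigr].
\]
Next I would compute (or recall) the Gaussian moment generating function $\Exp[e^{\lambda g}] = e^{\lambda^2/2}$, which follows from completing the square in the exponent of the standard normal density and using that the total density integrates to one. Substituting this into the previous display gives $\Pr[g \geq t] \leq e^{-\lambda t + \lambda^2/2}$ for every $\lambda \geq 0$.

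Then I would optimize over $\lambda \geq 0$: for $t \geq 0$, the choice $\lambda = t$ minimizes the right-hand side and yields $\Pr[g \geq t] \leq e^{-t^2/2}$. Finally, using the symmetry of the standard normal distribution about $0$, I have $\Pr[g \leq -t] = \Pr[g \geq t]$, so by a union bound
\[
\Pr[|g| \geq t] \;\leq\; 2\,\Pr[g \geq t] \;\leq\; 2 e^{-t^2/2},
\]
which is the claimed inequality.

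There is no real obstacle here; the only "choice" is which proof to use. An alternative would be to bound the Gaussian integral $\frac{1}{\sqrt{2\pi}} \int_t^\infty e^{-x^2/2}\,dx$ directly using the trick $1 \leq x/t$ for $x \geq t$, but this produces the sharper bound $\tfrac{1}{t\sqrt{2\pi}} e^{-t^2/2}$ with worse behavior near $t=0$, so it would need a separate argument to cover small $t$. The moment generating function approach avoids this case split and immediately gives the clean stated form.
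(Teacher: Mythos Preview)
Your argument is correct and is the standard Chernoff/MGF proof of the Gaussian tail bound. The paper states this as a \emph{Fact} without proof, so there is no ``paper's own proof'' to compare against; your proposal fills in exactly the routine verification the authors omitted.
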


\begin{fact}[Hoeffding's inequality]\label{fact:hoeffding} Let $h_1,\ldots,h_N$ be independent centered random variables such that for every $i\in [N]$, we have $\Pr\big[h_i \in [a_i,b_i]\big]=1$ . Then for any $t\geq 0$, 
$$\Pr\left[ \Big|\sum_{i=1}^N h_i  \Big| \geq t \right] \,\leq\, 2e^{-2t^2/\sum_i (b_i-a_i)^2}.$$
\end{fact}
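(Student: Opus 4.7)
The plan is to apply the classical Chernoff–Cram\'er exponential moment method. First I would fix any $\lambda>0$ and bound the upper tail using Markov's inequality on the exponentiated sum:
$$\Pr\left[\sum_{i=1}^N h_i \geq t\right] \,\leq\, e^{-\lambda t}\,\Exp\big[e^{\lambda \sum_i h_i}\big] \,=\, e^{-\lambda t}\prod_{i=1}^N \Exp\big[e^{\lambda h_i}\big],$$
where the factorization into a product uses independence of the $h_i$. This reduces the problem to controlling the individual moment generating functions $\Exp[e^{\lambda h_i}]$.

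The heart of the argument is Hoeffding's lemma: a centered random variable $h$ supported in $[a,b]$ satisfies $\Exp[e^{\lambda h}] \leq \exp(\lambda^2(b-a)^2/8)$. To prove this I would exploit convexity of $x\mapsto e^{\lambda x}$ on $[a,b]$, writing each $x$ as a convex combination $\theta b+(1-\theta)a$ with $\theta = (x-a)/(b-a)$ to obtain
$$\Exp[e^{\lambda h}] \,\leq\, \tfrac{-a}{b-a}\,e^{\lambda b} + \tfrac{b}{b-a}\,e^{\lambda a},$$
where the inequality uses $\Exp[h]=0$. Setting $u=\lambda(b-a)$ and $p=-a/(b-a)\in[0,1]$, the logarithm of the right-hand side equals $\phi(u):= -pu + \log\bigl(1-p+pe^u\bigr)$. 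A direct calculation gives $\phi(0)=\phi'(0)=0$ and $\phi''(u)=q(1-q)\leq 1/4$, where $q=pe^u/(1-p+pe^u)$. Taylor's theorem with remainder then yields $\phi(u)\leq u^2/8$, proving the lemma.

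Combining Hoeffding's lemma with the Markov bound gives
$$\Pr\left[\sum_{i=1}^N h_i \geq t\right] \,\leq\, \exp\Bigl(-\lambda t + \tfrac{\lambda^2}{8}\sum_{i=1}^N(b_i-a_i)^2\Bigr),$$
and optimizing by choosing $\lambda = 4t/\sum_i(b_i-a_i)^2$ yields the one-sided bound $e^{-2t^2/\sum_i(b_i-a_i)^2}$. Running the same argument with $-h_i$ in place of $h_i$ handles the lower tail; a union bound then produces the two-sided inequality with the stated factor of $2$ out front.

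The main (and really the only nontrivial) obstacle is Hoeffding's lemma, and within it the bound $\phi''\leq 1/4$: this is just the familiar fact that a random variable supported in $[0,1]$ has variance at most $1/4$, but it is precisely what makes the constant $1/8$ (and hence the final constant $2$ in the exponent) sharp. Once this is in hand, the remaining work is the mechanical optimization over $\lambda$ and the symmetrization step.
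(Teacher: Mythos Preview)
Your proof is the standard, correct derivation of Hoeffding's inequality via the exponential Markov bound and Hoeffding's lemma. The paper itself does not supply a proof of this statement---it records it as a known ``Fact'' in the preliminaries---so there is nothing to compare against; your argument is exactly the classical one that would appear in any textbook reference.
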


\begin{fact}[Bernstein's inequality, see eg. Prop.~16 in~\cite{Versh_matrices}]\label{fact:bernstein} Let $h_1,\ldots,h_N$ be independent centered random variables and $K>0$ be such that $\Pr\big[|h_i| \geq t\big] \leq e^{1-t/K}$ for all $i$ and $t\geq 0$. Then for any $a\in \R^N$ and $t\geq 0$,
$$\Pr\left[\Big|\sum_{i=1}^N a_i h_i\Big| \geq t\right] \leq 2e^{-\frac{1}{4e} \min\big( \frac{t^2}{2eK^2 \|a\|_2^2}, \frac{t}{K \|a\|_\infty}\big)}.$$
\end{fact}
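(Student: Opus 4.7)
The plan is to prove this concentration bound by the standard Chernoff-type exponential moment method, tailored to the subexponential tail assumption. The assumption $\Pr[|h_i| \geq t] \leq e^{1-t/K}$ is exactly the statement that each $h_i$ is subexponential with parameter $O(K)$, so the strategy is to control the moment generating function of each $h_i$ and then multiply.

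First, I would use Markov's inequality applied to the exponential: for any $\lambda > 0$,
\beqn
\Pr\Big[ \sum_i a_i h_i \geq t \Big] \,\leq\, e^{-\lambda t}\,\Exp\Big[e^{\lambda \sum_i a_i h_i}\Big] \,=\, e^{-\lambda t}\,\prod_i \Exp\big[e^{\lambda a_i h_i}\big],
\eeqn
using independence of the $h_i$. The first key step is a moment-generating-function bound of the form
\beqn
\Exp\big[e^{\lambda a_i h_i}\big] \,\leq\, \exp\big(C\lambda^2 a_i^2 K^2\big)\qquad\text{whenever}\ |\lambda a_i| \leq \frac{c}{K},
\eeqn
for some absolute constants $C,c>0$. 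This is obtained by expanding the exponential as a power series, bounding $\Exp[|h_i|^p]$ via the tail hypothesis ($\Exp[|h_i|^p] \leq p! \,(eK)^p$ up to constants, by integrating the tail), using centering ($\Exp[h_i]=0$) to kill the linear term, and then resumming. The explicit constants $2e$ and $1/(4e)$ that appear in the statement will come out of this computation; a cleaner but looser way is to use the equivalence of the subexponential Orlicz norm $\|h_i\|_{\psi_1}$ with the tail parameter $K$.

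Given this MGF bound, I would plug it in and optimize $\lambda$. Setting $S = \sum_i a_i h_i$, the two-sided bound (by applying the same argument to $-h_i$ and a union bound over signs) gives
\beqn
\Pr\big[|S|\geq t\big] \,\leq\, 2\exp\!\big(-\lambda t + C\lambda^2 K^2 \|a\|_2^2\big),\qquad\text{subject to}\ \lambda \leq \frac{c}{K\|a\|_\infty}.
\eeqn
The unconstrained optimum is $\lambda^* = t / (2C K^2 \|a\|_2^2)$. If $\lambda^*$ satisfies the constraint (the subgaussian regime, i.e.\ $t$ small relative to $\|a\|_\infty/\|a\|_2^2$), one gets a bound of the form $\exp(-c' t^2 / (K^2\|a\|_2^2))$. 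Otherwise (the subexponential regime, $t$ large), I pick $\lambda = c/(K\|a\|_\infty)$ at the boundary of the constraint, which yields $\exp(-c'' t/(K\|a\|_\infty))$. Combining the two regimes gives precisely the $\min$ appearing in the stated bound, up to the explicit constants $1/(4e)$ and $2e$.

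The main obstacle is getting the explicit constants right: the crude approach above gives the correct shape of the inequality but only loose absolute constants. To recover the precise constants $2e$ and $1/(4e)$, one needs a careful truncation argument or a tight power-series bound on $\Exp[e^{\lambda h_i}]$ that accounts for the factor $e$ appearing in the tail hypothesis $e^{1-t/K}$. For the purposes of using this fact downstream, however, only the shape of the bound matters, so I would appeal directly to the cited reference~\cite{Versh_matrices} for the sharp constants rather than reproving them.
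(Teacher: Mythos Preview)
The paper does not actually prove this fact: it is stated with a citation to Vershynin's notes and used as a black box. Your sketch via the exponential moment method (bound the MGF using the subexponential tail hypothesis, optimize $\lambda$ subject to the constraint $|\lambda|\leq c/(K\|a\|_\infty)$, and split into subgaussian and subexponential regimes) is exactly the standard argument given in the cited reference, so your proposal is both correct and in line with what the paper relies on.
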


\begin{corollary}[$\chi^2$ tail bound]\label{cor:chisquare} Let $\ket{g}$ be a random vector distributed according to $\normal^{N}$. Then for every $t \geq 0$,
$$ \Pr \Big[ \big| \|\ket{g}\|_2^2 - N \big| \geq t \Big] \,\leq\, 2\,e^{-\frac{1}{8e} \min\big( \frac{t^2}{4eN}, t\big)}.$$
\end{corollary}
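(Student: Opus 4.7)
The plan is to reduce the claim to a direct application of Bernstein's inequality (Fact~\ref{fact:bernstein}). Write
$$\|\ket{g}\|_2^2 - N \,=\, \sum_{i=1}^N h_i, \qquad h_i := g_i^2-1,$$
where the $h_i$ are i.i.d.\ centered random variables. To apply Bernstein with the coefficient vector $a=(1,\dots,1)$ (so $\|a\|_2^2=N$ and $\|a\|_\infty=1$), I need to exhibit a constant $K>0$ such that $\Pr[|h_i|\geq t]\leq e^{1-t/K}$ for all $t\geq 0$.

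First, I would verify that $K=2$ works. For any $t\geq 0$ one has $|g_i^2-1|\geq t$ only if $g_i^2\geq 1+t$ (the event $g_i^2\leq 1-t$ being empty once $t>1$, and trivially bounded otherwise). Using the Gaussian tail bound (Fact~\ref{fact:gausstail}) with the substitution $s^2=1+t$,
$$\Pr\big[|h_i|\geq t\big] \,\leq\, \Pr\big[g_i^2\geq 1+t\big] \,\leq\, 2\,e^{-(1+t)/2}.$$
For $t>2$, this is bounded by $e^{1-t/2}$ since $2e^{-1/2-t/2}\leq e^{1-t/2}$ is equivalent to $2e^{-1/2}\leq e$. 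For $t\in[0,2]$, the bound $e^{1-t/2}\geq 1$ holds automatically. Hence the sub-exponential tail estimate is satisfied with $K=2$.

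Plugging $K=2$, $\|a\|_2^2=N$, $\|a\|_\infty=1$ into Fact~\ref{fact:bernstein} gives
$$\Pr\Big[\,\big|\|\ket{g}\|_2^2-N\big|\geq t\,\Big] \,\leq\, 2\,\exp\!\Big(-\tfrac{1}{4e}\min\!\Big(\tfrac{t^2}{2e\cdot 4\cdot N},\,\tfrac{t}{2}\Big)\Big) \,=\, 2\,e^{-\frac{1}{8e}\min(t^2/(4eN),\,t)},$$
which is exactly the stated inequality. There is no real obstacle here; the only mildly delicate step is the case analysis needed to verify the sub-exponential tail of $g_i^2-1$ uniformly in $t$, which I would handle as above.
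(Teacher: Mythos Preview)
Your proof is correct and follows essentially the same approach as the paper: write $\|\ket{g}\|_2^2-N=\sum_i(g_i^2-1)$, verify that $h_i=g_i^2-1$ satisfies the sub-exponential tail hypothesis of Fact~\ref{fact:bernstein} with $K=2$, and plug in $a=(1,\dots,1)$. The only cosmetic difference is in how the tail of $h_i$ is checked: the paper absorbs the event $\{g_i^2\leq 1-t\}$ by replacing the factor $2$ in the Gaussian tail bound by $e$ (so that the resulting bound $e\cdot e^{-(t+1)/2}$ is trivially $\geq 1$ whenever $t\leq 1$), while you split explicitly at $t=2$; both yield the same conclusion. One small presentational point: your displayed inequality $\Pr[|h_i|\geq t]\leq\Pr[g_i^2\geq 1+t]$ is only literally true for $t>1$, so it would be cleaner to state it for that range and then invoke the trivial bound for $t\in[0,2]$, as you in fact do immediately afterwards.
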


\begin{proof} 
Write $\ket{g} = g_1\ket{1} + \cdots + g_N\ket{N}$ where $g_1,\dots,g_N$ are i.i.d. standard normal random variables. By Fact~\ref{fact:gausstail}, for every $i$ the $g_i$ satisfy that for every $t \geq 0$,
\begin{align*}
\Pr\big[ |g_i^2-1| \geq t\big] &= \Pr\big[ g_i^2 \geq t+1 \big] + \Pr\big[ g_i^2 \leq 1-t\big]\\
&\leq e e^{-(t+1)/2},
 \end{align*}
 where the factor $e$ in front ensures that the bound is trivial whenever the second term $ \Pr( g_i^2 \leq 1-t)$ is nonzero. Hence the random variables $h_i:=g_i^2-1$ satisfy the hypothesis of Fact~\ref{fact:bernstein} with $K=2$, which immediately gives the claimed bound. 
\end{proof}

\begin{corollary}[Projections of Bernoulli vectors]\label{cor:berntail} Let $\eps_{ij}$, $i,j\in[N]$ be $i.i.d.$ Bernoulli random variables, and $a\in \R^N$. Then
$$\Pr\left[ \Big| \sum_{j=1}^N \Big( \sum_{i=1}^N a_i \eps_{ij} \Big)^2 - N\|a\|_2^2 \Big| > t \right] \,\leq\,2e^{-\frac{1}{4e} \min\big( \frac{t^2}{8e \|a\|_2^4 N}, \frac{t}{2\|a\|_2^2 }\big)}.$$
\end{corollary}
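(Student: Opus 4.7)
The plan is to mirror the proof of Corollary~\ref{cor:chisquare}, replacing the Gaussian tail bound by Hoeffding's inequality. I would start by writing $X_j = \sum_{i=1}^N a_i\eps_{ij}$ so that the random variables $X_1,\ldots,X_N$ are independent, and the quantity of interest is $\sum_j X_j^2$. Since the $\eps_{ij}$ are centered Bernoullis, $\Exp[X_j^2]=\|a\|_2^2$, so the centered summands are $h_j:=X_j^2-\|a\|_2^2$ and their sum has expectation zero. The overall strategy is then to (i) show each $h_j$ is subexponential with a parameter $K$ proportional to $\|a\|_2^2$, and (ii) feed this into Bernstein's inequality (Fact~\ref{fact:bernstein}) applied with the coefficient vector $(1,\ldots,1)\in\R^N$.

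For step (i), apply Hoeffding's inequality (Fact~\ref{fact:hoeffding}) to the independent centered terms $a_i\eps_{ij}\in[-|a_i|,|a_i|]$, which gives $\Pr[|X_j|\geq s]\leq 2e^{-s^2/(2\|a\|_2^2)}$ for every $s\geq 0$. Squaring and substituting $s=\sqrt{\|a\|_2^2+t}$ yields the upper-tail estimate
\[
\Pr\big[X_j^2\geq \|a\|_2^2+t\big]\,\leq\,2e^{-1/2}\,e^{-t/(2\|a\|_2^2)}
\]
for $t\geq 0$. The lower tail $\Pr[X_j^2\leq \|a\|_2^2 - t]$ is trivially zero when $t\geq\|a\|_2^2$, and is at most $1$ otherwise; as in the proof of Corollary~\ref{cor:chisquare}, I would absorb it by inflating the leading constant. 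Concretely, setting $K=2\|a\|_2^2$ one checks that $e^{1-t/K}$ dominates both terms in every regime (when $t<\|a\|_2^2$ one has $e^{1-t/K}\geq e^{1/2}>1$, which absorbs the indicator; when $t\geq\|a\|_2^2$ the indicator vanishes and $2e^{-1/2}\leq e$ handles the exponential bound). This gives
\[
\Pr\big[|h_j|\geq t\big]\,\leq\, e^{\,1-t/K},\qquad K=2\|a\|_2^2,
\]
which is exactly the hypothesis needed for Fact~\ref{fact:bernstein}.

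For step (ii), apply Bernstein's inequality to $\sum_{j=1}^N h_j$ with the coefficient vector $(1,\ldots,1)\in\R^N$, so that $\|a\|_2^2=N$ and $\|a\|_\infty=1$ in the notation of Fact~\ref{fact:bernstein}. Plugging $K=2\|a\|_2^2$ into the bound gives
\[
\Pr\Big[\Big|\sum_j h_j\Big|\geq t\Big]\,\leq\, 2\exp\!\Big(-\tfrac{1}{4e}\min\!\big(\tfrac{t^2}{2eK^2 N},\,\tfrac{t}{K}\big)\Big)\,=\,2\exp\!\Big(-\tfrac{1}{4e}\min\!\big(\tfrac{t^2}{8e\|a\|_2^4 N},\,\tfrac{t}{2\|a\|_2^2}\big)\Big),
\]
which is precisely the claimed estimate.

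The only slightly delicate point is the bookkeeping in step (i) around the lower tail of $X_j^2$: one has to choose $K$ large enough that $e^{1-t/K}$ trivially covers the regime $t<\|a\|_2^2$ (where the lower tail is nonzero but crude), while remaining small enough to yield a useful final bound. This is entirely analogous to the factor of $e$ inserted in the Gaussian proof, and the choice $K=2\|a\|_2^2$ works cleanly. No other conceptual difficulty arises — the rest is a direct invocation of Hoeffding's and Bernstein's inequalities.
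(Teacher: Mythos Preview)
Your proposal is correct and follows essentially the same approach as the paper: define the centered variables $h_j=X_j^2-\|a\|_2^2$, use Hoeffding's inequality to verify the subexponential tail hypothesis of Fact~\ref{fact:bernstein} with $K=2\|a\|_2^2$, and then apply Bernstein with the all-ones coefficient vector. The paper's proof is a two-line sketch of precisely this argument; you have simply filled in the bookkeeping (in particular the treatment of the lower tail of $X_j^2$, handled via the trivial bound $\Pr[\cdot]\leq 1$ in the regime $t<\|a\|_2^2$) that the paper leaves implicit.
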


\begin{proof}
For any $j\in [N]$ let $\eta_j =  \big( \sum_{i=1}^N a_i \eps_{ij} \big)^2 - \|a\|_2^2$. The $\eta_j$ are independent centered random variables, and by Fact~\ref{fact:hoeffding} they satisfy a tail bound as required by Fact~\ref{fact:bernstein}, with $K= 2\|a\|_2^2$. The corollary follows.  
\end{proof}

The following is a special case of a result due to Lata\l a (see Corollary~1 in~\cite{Lataa2006}).

\begin{corollary}\label{cor:gaussherm} Let $A\in\herm{N}$ be a Hermitian matrix and $\ket{g}\sim\normal^N$. 
Then, for any $t\geq 0$,
$$\Pr\Big[ \big|\bra{g}A\ket{g} - \Tr(A)\big|  \geq  t\Big] \,\leq\, 2\,e^{-\frac{1}{24 e}\min\big(\frac{t^2}{12e\|A\|_F^2},\frac{t}{\|A\|_\infty}\big)}. $$
\end{corollary}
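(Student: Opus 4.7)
The plan is to reduce $\bra{g}A\ket{g}-\Tr(A)$ to a weighted sum of independent centered sub-exponential random variables and then invoke Bernstein's inequality (Fact~\ref{fact:bernstein}) in exactly the manner already used in the proof of Corollary~\ref{cor:chisquare}.

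First I would reduce to the case where $A$ is real and symmetric. Writing $A = B + iC$ with $B := (A+\bar A)/2$ real symmetric and $C := (A-\bar A)/(2i)$ real antisymmetric, the fact that $\ket{g}$ has real entries forces $\ket{g}^T C\ket{g}=0$, so $\bra{g}A\ket{g}=\bra{g}B\ket{g}$, and $\Tr(A)=\Tr(B)$ because $C$ has zero diagonal. Since $B$ agrees with $A$ on every real quadratic form we have $\|B\|_\infty \le \|A\|_\infty$, and $\|B\|_F \le \|A\|_F$ follows from the orthogonal decomposition $\|A\|_F^2 = \|B\|_F^2 + \|C\|_F^2$. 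It therefore suffices to prove the bound for a real symmetric $A$.

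Next I would orthogonally diagonalize $A = QDQ^T$ with eigenvalues $\lambda_1,\dots,\lambda_N\in\R$. By rotational invariance of the standard Gaussian, the vector $\ket{h}:=Q^T\ket{g}$ is again distributed as $\normal^N$, and
\[
\bra{g}A\ket{g} - \Tr(A) \;=\; \sum_{i=1}^N \lambda_i\,(h_i^2-1).
\]
Setting $\eta_i := h_i^2-1$, the $\eta_i$ are independent and centered, and, as shown in the proof of Corollary~\ref{cor:chisquare}, they satisfy $\Pr[|\eta_i|\ge t] \le e^{1-t/2}$, so Fact~\ref{fact:bernstein} applies with $K=2$.

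Finally I would apply Fact~\ref{fact:bernstein} with $a_i := \lambda_i$, using the identifications $\|a\|_2^2 = \sum_i\lambda_i^2 = \|A\|_F^2$ and $\|a\|_\infty = \max_i|\lambda_i| = \|A\|_\infty$; the resulting inequality matches the claimed tail bound up to the numerical constants appearing in the exponent (in fact Bernstein already delivers slightly better constants than stated). I do not anticipate any genuine obstacle: the Hermitian-to-real-symmetric reduction is purely mechanical (the antisymmetric part annihilates real quadratic forms), and the remainder is a direct specialization of the Bernstein-based argument that already underlies Corollary~\ref{cor:chisquare}.
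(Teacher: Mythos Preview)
Your proof is correct and follows the same overall strategy as the paper: diagonalize, reduce $\bra{g}A\ket{g}-\Tr(A)$ to a weighted sum of independent centered sub-exponential variables, and apply Bernstein's inequality (Fact~\ref{fact:bernstein}). The one substantive difference is your preliminary step: you first pass to the real symmetric part $B$ of $A$ (using that the real-antisymmetric part kills real quadratic forms) and then diagonalize $B$ by a \emph{real orthogonal} matrix, whereas the paper diagonalizes $A$ directly by a complex unitary $U$ and then has to split each $|\bra{i}U\ket{g}|^2$ into the squares of $\Re(\bra{i}U)\ket{g}$ and $\Im(\bra{i}U)\ket{g}$. Your route is cleaner on two counts: the independence of the $h_i=(Q^Tg)_i$ is immediate (a real orthogonal transform of a real standard Gaussian vector is again standard Gaussian), and you can reuse the tail estimate for $h_i^2-1$ from Corollary~\ref{cor:chisquare} verbatim, giving $K=2$ instead of the paper's $K\approx 6$ and hence slightly better constants than the stated bound.
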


\begin{proof}
Since $A$ is Hermitian, it is unitarily diagonalizable: $A = UDU^\dagger$ where $D = \text{diag}(\lambda_i)$, and the $\lambda_i$ are its real eigenvalues. Then 
$$ \bra{g}A\ket{g} = \sum_{i=1}^N \lambda_i |\bra{i} U\ket{g}|^2,$$
where the $g_i$ are the standard normal distributed coefficients of the random vector $\ket{g}$. Since the rows of $U$ are orthogonal, the $\bra{i}U\ket{g}$ are independent random variables. Moreover, since $\ket{g}$ is real, we have $|\bra{i}U\ket{g}|^2 = \big(\Re(\bra{i}U)\ket{g}\big)^2 + \big(\Im(\bra{i}U)\ket{g}\big)^2$, where $\Re(\bra{i}U)$ and $\Im(\bra{i}U)$ are the real and imaginary parts of the unit vector $\bra{i}U$ forming the $i$-th row of $U$. By rotation invariance, we have that for arbitrary $\ket{x}\in\R^N$, the random variable $\braket{x}{g}$ is distributed as $\mathcal{N}\!(0,\|\ket{x}\|^2)$. It follows from Fact~\ref{fact:gausstail} that for every $i\in[N]$, we have
\begin{align*}
\Pr\Big[|\bra{i}U\ket{g}|^2 \geq t\Big] & \leq \Pr\Big[ \big(\Re(\bra{i}U)\ket{g}\big)^2 \geq t/2\Big] + \Pr\Big[\big(\Re(\bra{i}U)\ket{g}\big)^2 \geq t/2\Big]\\
&\leq 2e^{-t/(4\|\Re(\bra{i}U\|)^2)}+2e^{-t/(4\|\Im (\bra{i}U_i)\|^2)}\\
&\leq 4e^{-t/4}
\end{align*}
Hence we can apply Fact~\ref{fact:bernstein} with $K = 4(\ln(4/e)+1)\leq 6$ to obtain for any $t\geq 0$:
$$\Pr\left[ \Big|\sum_{i=1}^N \lambda_i \Big(|\bra{i}U\ket{g}|^2 - \Exp\big[|\bra{i}U\ket{g}|^2\big]\Big)\Big| \geq t \right]  \,\leq\, 2\,e^{-\frac{1}{24e} \min\big(\frac{t^2}{12e\|A\|_F^2},\frac{t}{\|A\|_\infty}\big)} $$
which proves the claim since $\Exp\big[|\bra{i}U\ket{g}|^2\big] =1$ for every $i\in[N]$ and $\sum_{i=1}^N \lambda_i = \Tr(A)$. 
\end{proof}

Finally, we state without proof an analogue of the preceding Corollary which applies to Bernoulli random variables, and is a special case of a result of Hanson and Wright~\cite{HW71}.

\begin{theorem}\label{thm:hw} 
There exists a constant $D>0$ such that the following holds. Let $A\in\herm{N}$ be a Hermitian matrix  and $\eps_{ij}$ i.i.d. Bernoulli random variables. Then, for any $t\geq 0$,
$$\Pr\left[ \Big|\sum_{i,j} A_{ij} \eps_{ij} - \Tr(A)\Big|\geq t\right] \,\leq\, 2\,e^{-C \min\big(\frac{t^2}{\|A\|_F^2},\frac{t}{\|A\|_\infty}\big)}. $$
\end{theorem}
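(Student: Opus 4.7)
The plan is to prove this Hanson--Wright type tail bound by the standard route of \emph{decoupling followed by conditional Hoeffding}. Reading the statement in its natural form (where $\eps_{ij}$ denotes $\eps_i\eps_j$ for a single sequence $(\eps_i)_{i\in[N]}$ of i.i.d.\ Rademacher variables -- the standard Hanson--Wright setting, matching the appearance of $\Tr(A)$ as the expectation), the crucial first observation is that $\eps_i^2=1$ almost surely, so
\beqn
\sum_{i,j}A_{ij}\eps_i\eps_j - \Tr(A) \;=\; \sum_{i\neq j}A_{ij}\eps_i\eps_j.
\eeqn
Replacing $A$ by the matrix $\tilde A$ obtained by zeroing out its diagonal costs at most a factor of $2$ in Frobenius and operator norm, so it suffices to bound the off-diagonal sum $S=\sum_{i,j}\tilde A_{ij}\eps_i\eps_j$.

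Next I would invoke the decoupling inequality of de la Pe\~na--Montgomery-Smith: there is an absolute constant $K$ such that $\Pr(|S|>t)\leq K\cdot\Pr(|S'|>t/K)$ where $S'=\sum_{i,j}\tilde A_{ij}\eps_i\eps'_j$ and $(\eps'_j)_j$ is an independent copy of $(\eps_j)_j$. The reason this helps is that, conditional on $\eps'$, the decoupled sum is a weighted Rademacher sum $S'=\sum_i \eps_i\,b_i$ with fixed coefficients $b_i=(\tilde A\eps')_i$. Applying Hoeffding's inequality (Fact~\ref{fact:hoeffding}) to this linear combination yields the conditional bound
\beqn
\Pr_{\eps}\bigl(|S'|>t\,\big|\,\eps'\bigr)\;\leq\;2\,e^{-t^2/(2\|\tilde A\eps'\|_2^2)}.
\eeqn

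The remaining -- and, I expect, main technical -- step is to control the random ``variance'' $\|\tilde A\eps'\|_2^2={\eps'}^\top B\,\eps'$, where $B:=\tilde A^\top\tilde A$ is PSD with $\Tr(B)=\|\tilde A\|_F^2\leq\|A\|_F^2$, $\|B\|_\infty\leq\|A\|_\infty^2$, and $\|B\|_F\leq\|A\|_\infty\|A\|_F$. Bounding the fluctuations of ${\eps'}^\top B\eps'$ around $\Tr(B)$ is itself a quadratic-form concentration problem, which is exactly the type of statement we are trying to prove; the apparent circularity would be broken by exploiting the PSD structure. Write the SVD $\tilde A=\sum_k\sigma_k u_kv_k^\top$, so that ${\eps'}^\top B\eps' = \sum_k\sigma_k^2\langle v_k,\eps'\rangle^2$. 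Each $\langle v_k,\eps'\rangle$ is a unit-weight Rademacher sum, hence subgaussian with $\psi_2$-norm $O(1)$, and so $\langle v_k,\eps'\rangle^2-1$ is sub-exponential with $\psi_1$-norm $O(1)$. Although the $\langle v_k,\eps'\rangle$ are not independent, a Bonami--Beckner (hypercontractive) moment bound on degree-$2$ Rademacher polynomials of the centred form $\sum_k\sigma_k^2(\langle v_k,\eps'\rangle^2-1)$ yields
\beqn
\Pr\bigl(\|\tilde A\eps'\|_2^2 > \|A\|_F^2 + s\bigr)\;\leq\;2\,e^{-c\,\min(s^2/(\|A\|_\infty^2\|A\|_F^2),\,s/\|A\|_\infty^2)}.
\eeqn
Combining the two tails by the law of total probability, and optimizing the cut-off $s$ to balance the exponent $t^2/(\|A\|_F^2+s)$ coming from Hoeffding against the exponent above, yields the announced bound $2e^{-C\min(t^2/\|A\|_F^2,\,t/\|A\|_\infty)}$.
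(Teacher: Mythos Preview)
The paper does not prove this statement: it is stated as a special case of the Hanson--Wright inequality and simply cited to~\cite{HW71}. So there is no ``paper's proof'' to compare against; the relevant question is whether your sketch stands on its own.

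Your overall architecture --- reduce to the off-diagonal part, decouple \`a la de~la~Pe\~na--Montgomery-Smith, apply Hoeffding conditionally on $\eps'$, then control the random variance $\|\tilde A\eps'\|_2^2$ --- is exactly the standard modern route (Rudelson--Vershynin). Your reading of the notation $\eps_{ij}=\eps_i\eps_j$ is also the intended one, given that the statement is presented as the Bernoulli analogue of Corollary~\ref{cor:gaussherm} and is centred by $\Tr(A)$.

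There is, however, a genuine gap in the third step. You claim that Bonami--Beckner hypercontractivity on the degree-$2$ polynomial $\sum_k\sigma_k^2(\langle v_k,\eps'\rangle^2-1)$ yields the mixed tail
\[
\Pr\bigl(\|\tilde A\eps'\|_2^2 > \|A\|_F^2 + s\bigr)\;\leq\;2\,e^{-c\,\min\bigl(s^2/(\|A\|_\infty^2\|A\|_F^2),\,s/\|A\|_\infty^2\bigr)}.
\]
Hypercontractivity only gives $\|P\|_p\le (p-1)\|P\|_2$ with $\|P\|_2\asymp\|B\|_F\le\|A\|_\infty\|A\|_F$, hence a \emph{pure} sub-exponential tail $e^{-cs/(\|A\|_\infty\|A\|_F)}$; it cannot see $\|B\|_\infty$. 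If you feed that weaker tail into your optimisation, the final bound you obtain is $e^{-c\min(t^2/\|A\|_F^2,\;t/\sqrt{\|A\|_\infty\|A\|_F})}$, whose sub-exponential regime has parameter $\sqrt{\|A\|_\infty\|A\|_F}$ rather than $\|A\|_\infty$ --- strictly weaker than the theorem whenever $A$ has rank $>1$. The displayed bound you wrote down is in fact the full Hanson--Wright inequality applied to $B=\tilde A^\top\tilde A$, so invoking it here is circular.

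The fix is to bypass the tail bound on $\|\tilde A\eps'\|_2^2$ and work with moment generating functions throughout: bound $\Exp_\eps[e^{\lambda S'}\mid\eps']\le e^{\lambda^2\|\tilde A\eps'\|_2^2/2}$, then control $\Exp_{\eps'}[e^{\mu\|\tilde A\eps'\|_2^2}]$ for $\mu<c/\|A\|_\infty^2$ by a Gaussian comparison (write $e^{\mu x^2}=\Exp_g e^{\sqrt{2\mu}xg}$, use sub-gaussianity of $\eps'$ to pass to $\Exp_g e^{\mu\|\tilde A^\top g\|_2^2}$, and evaluate the Gaussian integral explicitly). This yields $\Exp[e^{\lambda S'}]\le e^{C\lambda^2\|A\|_F^2}$ valid for all $|\lambda|\le c/\|A\|_\infty$, from which the correct mixed tail follows by the usual Chernoff optimisation.
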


\subsection{Grothendieck's inequality}
\label{sec:grothendieck}

We use the following version of Grothendieck's inequality~\cite{Grothendieck:1953}. The bounds on the constants involved come from~\cite{Haagerup:1987} and~\cite{BMMN11}.

\begin{theorem}[Grothendieck's inequality]
There exists a universal constant $K_G^\R<1.783$ such that the following holds. Let $N$ and $d$ be positive integers. Then, for any matrix $M \in \mat{N}$ with real coefficients and any complex unit vectors $x_1,\dots,x_N$, $y_1,\dots,y_N\in \sphere{d}$, we have
\beq\label{eq:grothineq}
\Big| \sum_{i,j=1}^N M_{ij} \langle x_i,y_j\rangle\Big| \leq K_G^\R\max_{\chi,\upsilon:[N]\to\pmset{} }\sum_{i,j=1}^N M_{ij} \chi(i)\upsilon(j),
\eeq
If we allow $\chi,\upsilon$ on the right-hand side of~\eqref{eq:grothineq} to take values in the set of all complex numbers with modulus (at most) $1$, then the constant $K_G^\R$ may be replaced by the complex Grothendieck constant $K_G^\C < 1.405$.  
\end{theorem}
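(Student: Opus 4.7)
The plan is to follow Krivine's rounding-based proof, which is the standard route to an explicit numerical bound on $K_G^\R$. First I would reduce to the real case and assume without loss of generality that $x_1,\dots,x_N,y_1,\dots,y_N$ are unit vectors in a real Hilbert space $\mathcal{H}$ (the integrand is bilinear in the $x_i,y_j$, so one can normalize by $\max_i \|x_i\|\cdot \max_j \|y_j\|$ at the start). The objective becomes: exhibit some $c>0$ and $\pmset{}$-valued random strategies $\chi(i),\upsilon(j)$ (on a common probability space) such that $\Exp[\chi(i)\upsilon(j)] = c\,\langle x_i, y_j\rangle$ for all $i,j$, which immediately yields the inequality with constant $1/c$.

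The heart of the proof is Krivine's trick: build new unit vectors $u_i, v_j$ in a larger Hilbert space $\mathcal{H}'$ such that
\beqn
\langle u_i, v_j\rangle \,=\, \sin(\alpha \langle x_i, y_j\rangle) \qquad\text{with }\alpha = \sinh^{-1}(1) = \ln(1+\sqrt 2).
\eeqn
To do this, expand $\sin(\alpha t) = \sum_{k\ge 0} a_k\, t^{2k+1}$ with $a_k = (-1)^k \alpha^{2k+1}/(2k+1)!$, and set
\beqn
u_i = \bigoplus_{k\ge 0} \sqrt{|a_k|}\;\sgn(a_k)\, x_i^{\otimes(2k+1)}, \qquad v_j = \bigoplus_{k\ge 0} \sqrt{|a_k|}\, y_j^{\otimes(2k+1)},
\eeqn
where the sum runs over a direct sum of symmetric tensor powers of $\mathcal{H}$. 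A direct computation using $\langle x^{\otimes m}, y^{\otimes m}\rangle = \langle x,y\rangle^m$ gives the required identity, and $\|u_i\|^2 = \|v_j\|^2 = \sum_k |a_k| = \sinh(\alpha) = 1$ by the choice of $\alpha$.

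Next, apply random hyperplane rounding in $\mathcal{H}'$: pick a standard Gaussian $g\sim\normal^{\dim \mathcal{H}'}$ and set $\chi(i) := \sgn(\langle u_i, g\rangle)$, $\upsilon(j) := \sgn(\langle v_j, g\rangle)$. Grothendieck's identity for the signs of correlated Gaussians yields
\beqn
\Exp[\chi(i)\upsilon(j)] \,=\, \frac{2}{\pi}\arcsin(\langle u_i, v_j\rangle) \,=\, \frac{2\alpha}{\pi}\langle x_i, y_j\rangle,
\eeqn
where the key cancellation $\arcsin\circ\sin = \text{id}$ relied on the fact that $\alpha t\in[-\alpha,\alpha]\subset[-\pi/2,\pi/2]$ for $t\in[-1,1]$. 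Integrating against $M$ on both sides,
\beqn
\Big|\sum_{i,j} M_{ij}\langle x_i,y_j\rangle\Big| \,=\, \frac{\pi}{2\alpha}\Big|\Exp\big[\sum_{i,j} M_{ij}\chi(i)\upsilon(j)\big]\Big| \,\le\, \frac{\pi}{2\ln(1+\sqrt 2)}\,\max_{\chi,\upsilon:[N]\to\pmset{}}\sum_{i,j}M_{ij}\chi(i)\upsilon(j),
\eeqn
and $\pi/(2\ln(1+\sqrt 2)) < 1.783$.

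The main technical obstacle lies not in Krivine's argument itself but in squeezing out the sharp constant $<1.783$ claimed here: Krivine's method is known to be slightly suboptimal, and the improvement of Braverman–Makarychev–Makarychev–Naor~\cite{BMMN11} requires perturbing the Gaussian rounding scheme by a small, carefully chosen polynomial correction and showing that the resulting extra slack is strictly positive via a delicate calculation. For the complex version, the scheme is carried out with complex Gaussians so that $\chi(i),\upsilon(j)$ take values in the unit circle of $\C$; the analogue of Grothendieck's identity involves the Haagerup function in place of $\arcsin$, and Haagerup's explicit analysis produces the bound $K_G^\C<1.405$.
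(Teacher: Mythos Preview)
The paper does not prove this theorem at all: it is stated in the preliminaries with references to Grothendieck~\cite{Grothendieck:1953}, Haagerup~\cite{Haagerup:1987} and Braverman--Makarychev--Makarychev--Naor~\cite{BMMN11}, and then used as a black box. Your Krivine-style proof is correct and is the standard route to an explicit constant; the reduction from complex unit vectors with real $M$ to real unit vectors is legitimate (rotate by a global phase to make the sum real, then embed $\C^d\hookrightarrow\R^{2d}$), and the tensor-power construction with the $\sin/\arcsin$ trick is exactly Krivine's argument.

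One small confusion in your final paragraph: Krivine's bound already gives
\[
K_G^\R \,\le\, \frac{\pi}{2\ln(1+\sqrt 2)} \,=\, 1.78221\ldots \,<\, 1.783,
\]
so you do \emph{not} need the BMMN perturbation to establish the inequality as stated here. The BMMN result shows the strict inequality $K_G^\R < \pi/(2\ln(1+\sqrt 2))$, i.e.\ that Krivine's constant is not optimal, which is stronger than what the theorem asserts. The paper's citation of~\cite{BMMN11} for the bound is therefore slightly misleading; Krivine (1979) suffices for the real constant, and Haagerup~\cite{Haagerup:1987} for the complex one, exactly as you outline.
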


%%%%%%%%%%%%%%%%%%%%%%%%%%%%%%%%%%%%%%%%%%%%%%
%%%%%%%   Unbounded violations   %%%%%%%%%%%%%
%%%%%%%%%%%%%%%%%%%%%%%%%%%%%%%%%%%%%%%%%%%%%%

\section{Unbounded gaps}\label{sec:unbounded}

This section is devoted to the proof of Theorem~\ref{thm:main}. The theorem is proved in two steps. In the first step we associate a three player XOR game $G$ to any $3$-tensor $T$, and relate the quantum-classical gap for that game to spectral properties of $T$. We emphasize that the game $G=G(T)$ is \emph{not} defined from $T$ in the most straightforward way (using $T$ as the game tensor), but through a more delicate transformation, based on the use of the Fourier transform, which is exposed in Section~\ref{sec:pauligames}.

\begin{proposition}\label{prop:pauligame} Let $n$ be an integer and let $N = 2^{n}$. Let $T$ be any $3$-tensor of dimensions $N^2\times N^2\times N^2$. Then there exists a $3$-player XOR game $G=G(T)$ such that 
$$\frac{\beta^*(G)}{\beta(G)} \geq \frac{1}{4N^{3/2}} \frac{\|T\|_{3,3}}{\|T\|_{2,2,2}}.$$
Moreover, in the game $G$ there are $N^2$ questions to each player, and there is a entangled strategy which achieves the claimed violation and uses only $N$-dimensional Pauli observables. 
\end{proposition}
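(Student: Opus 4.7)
The plan is to build $G=G(T)$ from the Pauli Fourier expansion of $T$, letting each player's question set be $\mathcal{P}_n$ (so $|\mathcal{P}_n|=N^2$ as required) and taking the (signed) game tensor proportional to the Fourier coefficients $\widehat T(P,Q,R):=\langle T,P\otimes Q\otimes R\rangle$. Since these can be complex while an XOR game must be real-valued, I would define two candidate games $G_1,G_2$ using $\mathrm{Re}(\widehat T)$ and $\mathrm{Im}(\widehat T)$ respectively, normalize each by $Z:=\sum_{P,Q,R}|\omega(P,Q,R)|$ so that $\pi=|\omega|/Z$ is a probability distribution and $M=\mathrm{sign}(\omega)$; the normalization $Z$ appears in both the classical and entangled bounds and cancels in the ratio. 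The game $G$ claimed by the proposition is whichever of $G_1,G_2$ makes the entangled bias larger.

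To upper bound $\beta(G)$, I would observe that every classical strategy $\chi_i\colon\mathcal{P}_n\to\{\pm1\}$ lifts to a Hermitian matrix $X:=\sum_P\chi_1(P)P\in\herm{N}$, and analogously $Y,Z$. Because $\mathcal{P}_n$ is orthogonal with $\langle P,P'\rangle=N\delta_{P,P'}$, we have $\|X\|_F^2=N\cdot|\mathcal{P}_n|=N^3$, and similarly for $Y,Z$. By bilinearity, $\sum_{P,Q,R}\widehat{T}(P,Q,R)\chi_1(P)\chi_2(Q)\chi_3(R)=\langle T,X\otimes Y\otimes Z\rangle$, which the trilinear norm dominates by $\|T\|_{2,2,2}\|X\|_F\|Y\|_F\|Z\|_F=N^{9/2}\|T\|_{2,2,2}$. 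Taking real or imaginary part only shrinks the absolute value, so $\beta(G)\leq N^{9/2}\|T\|_{2,2,2}/Z$.

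For the lower bound on $\beta^*(G)$, I would have each player hold an $N$-dimensional register, share a pure state $|\psi\rangle\in(\mathbb{C}^N)^{\otimes 3}$ to be chosen, and use the observable assignment $A(P):=P$, $B(Q):=Q$, $C(R):=R$ (each a valid $\pm1$ observable since Paulis are Hermitian and square to $I$). Applying the Pauli completeness identity $\sum_{P\in\mathcal{P}_n}P_{ab}P_{cd}=N\,\delta_{ad}\delta_{bc}$ once per player collapses the triple Fourier sum into $\sum_{P,Q,R}\widehat{T}(P,Q,R)\langle\psi|P\otimes Q\otimes R|\psi\rangle=N^3\langle\psi|T^{\mathrm T}|\psi\rangle$, where $T^{\mathrm T}$ is $T$ regarded as an $N^3\times N^3$ matrix and then transposed, so $\|T^{\mathrm T}\|_{3,3}=\|T\|_{3,3}$. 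Writing $T^{\mathrm T}=H+iK$ with $H,K$ Hermitian and using $\|T^{\mathrm T}\|_{3,3}\leq\|H\|_\infty+\|K\|_\infty$, at least one of $\|H\|_\infty,\|K\|_\infty$ is at least $\|T\|_{3,3}/2$; picking $|\psi\rangle$ to be an extremal eigenvector of whichever is larger yields $\max\bigl(|\mathrm{Re}\langle\psi|T^{\mathrm T}|\psi\rangle|,\;|\mathrm{Im}\langle\psi|T^{\mathrm T}|\psi\rangle|\bigr)\geq\|T\|_{3,3}/2$. This dictates the choice between $G_1$ and $G_2$, and gives $\beta^*(G)\geq N^3\|T\|_{3,3}/(2Z)$; the constant $4$ in the proposition absorbs any remaining slack.

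Dividing the two bounds cancels $Z$ and yields $\beta^*(G)/\beta(G)\geq\|T\|_{3,3}/(4N^{3/2}\|T\|_{2,2,2})$, with the entangled strategy using only $N$-dimensional Pauli observables on an $N^3$-dimensional shared state, as required. I expect the main obstacle to be the transition from the operator norm $\|T\|_{3,3}=\max_{x,y}|\langle x|T|y\rangle|$, which involves \emph{two independent} unit vectors, to the quadratic form $\langle\psi|T^{\mathrm T}|\psi\rangle$ induced by a \emph{single} pure state. The numerical-radius-type inequality $w(M)\geq\|M\|_\infty/2$ limits this loss to a constant, but avoiding it seems to intrinsically require either doubling each player's local dimension (so two independent registers can be paired through the observables) or imposing structure on $T$ that fails for general tensors, either of which would violate the guarantees in the proposition.
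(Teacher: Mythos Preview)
Your proposal is correct and follows essentially the same approach as the paper: define the game via the Pauli Fourier coefficients of $T$, bound $\beta(G)$ by lifting $\pm1$ strategies to Hermitian matrices of Frobenius norm $N^{3/2}$, and lower-bound $\beta^*(G)$ via the Pauli strategy and an eigenvector of (the Hermitian part of) $T$. The only cosmetic difference is that the paper first replaces $T$ by $(T+T^\dagger)/2$ or $i(T-T^\dagger)/2$ so that the Fourier coefficients are already real and $\bra{\Psi}T\ket{\Psi}$ attains $\|T\|_{3,3}$ directly, whereas you keep $T$ general and achieve the same effect through the decomposition $T^{\mathrm T}=H+iK$; these are equivalent bookkeepings of the same factor-of-two loss (and your explicit handling of the transpose is in fact more careful than the paper's write-up).
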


In the second step we show the existence of a tensor $T$ such that $\|T\|_{3,3}/\|T\|_{2,2,2}$ is large. 

\begin{proposition}\label{prop:goodtensor}
There is a constant $C>0$ such that for any integer $N$ there exists a $3$-tensor $T$ of dimensions $N^2\times N^2\times N^2$ such that 
$$\frac{\|T\|_{3,3}}{\|T\|_{2,2,2}}\, \geq\, CN^2 \log^{-5/2} N.$$ 
\end{proposition}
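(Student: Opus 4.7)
The plan is to follow Section~\ref{sec:techniques}: realize $T$ as an almost-rank-one random Gaussian tensor, lower-bound $\|T\|_{3,3}$ with a single test vector, and upper-bound $\|T\|_{2,2,2}$ via an $\epsilon$-net plus concentration. Concretely, let $(g_{ijk})_{i,j,k\in[N]}$ be i.i.d.\ standard Gaussians and set
\begin{equation*}
T_{(i,i'),(j,j'),(k,k')} \,:=\, \begin{cases} g_{ijk}\,g_{i'j'k'} & \text{if } i\neq i',\ j\neq j',\ k\neq k',\\ 0 & \text{otherwise}. \end{cases}
\end{equation*}
I will argue that with probability at least $1/2$ both $\|T\|_{3,3}\geq\Omega(N^3)$ and $\|T\|_{2,2,2}\leq O(N\log^{5/2} N)$ hold, which gives the desired ratio $\Omega(N^2\log^{-5/2} N)$.

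For the lower bound I would plug in the product test vector $\ket{x}=\ket{y}=\ket{g}/\|g\|_2$, so that
\begin{equation*}
\bra{x}T\ket{y} \,=\, \|g\|_2^{-2}\Big(\|g\|_2^4 - \Delta\Big),
\end{equation*}
where $\Delta$ collects the products $g_{ijk}^{2}g_{i'j'k'}^{2}$ with $(i,j,k)\neq(i',j',k')$ agreeing in at least one coordinate. The main term satisfies $\|g\|_2^4\approx N^6$ by the $\chi^2$ bound (Corollary~\ref{cor:chisquare}); $\Delta$ has expectation $O(N^5)$ (each of the three coincidence subsums contains $\sim N^5$ independent products of expected magnitude $O(1)$) and concentrates around its mean by Bernstein's inequality (Fact~\ref{fact:bernstein}). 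Dividing by $\|g\|_2^2\approx N^3$ yields $\bra{x}T\ket{y} = N^3(1-O(1/N))$ with high probability, hence $\|T\|_{3,3}\geq\Omega(N^3)$.

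The upper bound is the technical core. The decisive identity is that, because the indicator in $T$ kills exactly those entries of $X\otimes Y\otimes Z$ whose row and column indices agree in at least one slot,
\begin{equation*}
\langle T, X\otimes Y\otimes Z\rangle \,=\, \bra{g}\,(\widetilde X\otimes\widetilde Y\otimes\widetilde Z)\,\ket{g},\qquad \widetilde X := X-\diag(X),
\end{equation*}
so the trilinear form is a Gaussian chaos of order two with tensor-product kernel $M:=\widetilde X\otimes\widetilde Y\otimes\widetilde Z$. For a normalized rank-$k$ projector $X=P/\sqrt{k}$ I will establish two structural estimates: (i) $\|\widetilde X\|_F^{2}= (k-\sum_i P_{ii}^2)/k \leq 1-k/N$, via $\sum_i P_{ii}^2 \geq k^2/N$ (Cauchy--Schwarz applied to $\sum_i P_{ii}=k$); and (ii) $\|\widetilde X\|_\infty\leq 1/\sqrt{k}$, because for any unit $\ket{w}$ both $\bra{w}P\ket{w}$ and $\sum_a P_{aa}|w_a|^2$ lie in $[0,1]$, so their difference lies in $[-1,1]$. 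Taking tensor products yields $\|M\|_F \leq \sqrt{(1-k/N)(1-\ell/N)(1-m/N)}$, $\|M\|_\infty\leq (k\ell m)^{-1/2}$, and $\Tr M = 0$.

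I will then apply Latała's concentration (Corollary~\ref{cor:gaussherm}) to each $M$ arising from a triple in the net $\net_\epsilon^{(k,\ell,m)}$ of Proposition~\ref{prop:triplenet}, choosing $\epsilon=N^{-c}$ so that $\log(5/\epsilon)=O(\log N)$ and the net-approximation error (at most $3\epsilon\|T\|_F\leq 3\epsilon\|g\|_2^2 = O(\epsilon N^3)$) is negligible. A union bound over the $O(N^3)$ rank triples---triples with $\max(k,\ell,m)=N$ drop out since the only rank-$N$ normalized projector is $I/\sqrt{N}$, whose off-diagonal part vanishes---followed by Lemma~\ref{lem:lorentz} to lift from normalized projectors to the full unit Frobenius ball of Hermitian matrices, gives
\begin{equation*}
\|T\|_{2,2,2}\,\lesssim\, \log^{3/2} N\cdot \max_{(k,\ell,m)\in[N-1]^3}t_{k,\ell,m},
\end{equation*}
where $t_{k,\ell,m}$ is the least $t$ for which Latała's exponent $\min(t^2/\|M\|_F^2,\,t/\|M\|_\infty)$ exceeds the log net size $\sim (k+\ell+m)N\log(5/\epsilon)$. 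The main obstacle, and the most delicate calculation, is to prove $\max_{k,\ell,m}t_{k,\ell,m}=O(N\log N)$ uniformly: this requires a case split in which the Frobenius-dominated regime of Latała absorbs the large net when ranks are close to $N$ (thanks to $\|M\|_F^2\leq \prod(1-k/N)$ being small), while the operator-norm-dominated regime handles small ranks (where $(k\ell m)^{-1/2}$ is controlled). Combining everything yields $\|T\|_{2,2,2}\leq O(N\log^{5/2}N)$ and hence the proposition.
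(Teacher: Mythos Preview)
Your overall architecture matches the paper exactly: the same random tensor $T$, the same test vector for the lower bound, and the same Lorentz decomposition plus $\eps$-net plus Gaussian-chaos concentration for the upper bound. The identity $\langle T,X\otimes Y\otimes Z\rangle=\bra{g}\widetilde X\otimes\widetilde Y\otimes\widetilde Z\ket{g}$ is also correct and is morally what the paper uses (it takes the optimizers to have zero diagonal and then subtracts traces after the Lorentz step).

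However, there is a genuine gap in your upper-bound case split. Take, for concreteness, $k=\lfloor N/2\rfloor$ and $\ell=m=1$. The log of the net size $|\net_\eps^{(k,\ell,m)}|$ is of order $kN\log(1/\eps)\asymp N^2\log N$. Your structural estimates give $\|M\|_\infty\le(k\ell m)^{-1/2}\asymp N^{-1/2}$ and $\|M\|_F^2\le(1-k/N)\asymp 1/2$. Since Lata{\l}a's exponent is a \emph{minimum}, you need \emph{both} $t^2/\|M\|_F^2\gtrsim N^2\log N$ and $t/\|M\|_\infty\gtrsim N^2\log N$; the second forces $t\gtrsim N^{3/2}\log N$, which is a factor $\sqrt{N}$ too large. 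Your heuristic ``Frobenius absorbs large ranks, operator norm handles small ranks'' breaks exactly in this mixed regime $\ell m<k$: the Frobenius bound $(1-k/N)$ is only small when $k$ is within $O(1)$ of $N$, not for $k\asymp N/2$, and removing the single case $k=N$ does not help.

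The paper closes this gap with a different argument in the regime $k\ge\ell m$ (Case~2 of Lemma~\ref{lem:netconc}) that \emph{avoids netting over the large-rank factor $X$ altogether}. Writing $X=k^{-1/2}\sum_p\ketbra{x_p}{x_p}$ and $\ket{g(y,z)}=(I\otimes\bra{y}\otimes\bra{z})\ket{g}$, one has
\[
\big|\bra{g}X\otimes Y\otimes Z\ket{g}\big|\;=\;(k\ell m)^{-1/2}\sum_{p,q,r}\big|\braket{x_p}{g(y_q,z_r)}\big|^2\;\le\;\sqrt{\ell m/k}\;\max_{\ket{y},\ket{z}}\big\|\ket{g(y,z)}\big\|_2^2,
\]
using orthonormality of the $\ket{x_p}$ to drop the $p$-sum. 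Now only $(y,z)$ need be netted (log net size $\asymp N\log N$), and the $\chi^2$ tail on $\|\ket{g(y,z)}\|_2^2$ gives the desired $O(N\log N)$ bound uniformly in $X$. Your proposal would go through once this second case is inserted.
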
 

Theorem~\ref{thm:main} trivially follows from the two propositions above. While we have not made the constants in the preceding propositions completely explicit, it is not hard to extract numerical values from our proofs; in particular we give precise estimates for all our probabilistic arguments. 
Proposition~\ref{prop:pauligame} is proved in Section~\ref{sec:pauligames}, and Proposition~\ref{prop:goodtensor} is proved in Section~\ref{sec:constructT}. 

\subsection{Pauli XOR games}\label{sec:pauligames}

Let $T$ be a complex $3$-tensor of dimensions $N^2\times N^2\times N^2$, where $N=2^{n}$ and $n$ is an arbitrary integer. Based on $T$ we define a three-player XOR game $G=G(T)$ with the following properties:
\begin{enumerate}
\item There are $N^2$ questions per player,
\item The best classical strategy for game $G(T)$ achieves a bias of at most $N^{9/2}\|T\|_{2,2,2}$,
\item There is a entangled strategy which uses only Pauli matrices as observables and entanglement of local dimension $N$ per player and achieves a bias of at least $ (N^3/4)\|T\|_{3,3}$.
\end{enumerate}
Properties 2. and 3. imply that in game $G(T)$, the ratio between the entangled and classical biases is at least
$$ \frac{\beta^*(G)}{\beta(G)} \,\geq\, \frac{1}{4N^{3/2}} \frac{ \|T\|_{3,3}}{\|T\|_{2,2,2}}, $$
proving Proposition~\ref{prop:pauligame}. 

\medskip

Let $T$ be a $N^2\times N^2\times N^2$ tensor. By replacing $T$ by either $(T+T^\dagger)/2$ or $i(T-T^\dagger)/2$, we may assume that $T$, when seen as an $N^3\times N^3$ matrix, is also Hermitian. One of these two possible choices necessarily results in a ratio of the $\|\cdot \|_{3,3}$ norm to the $\|\cdot \|_{2,2,2}$ norm that is at least half of what it was for $T$. In order to associate an XOR game to $T$, we first define (possibly complex) coefficients indexed by Pauli matrices $P,Q,R\in \mathcal{P}_{n}$ as follows 
$$M_{P,Q,R}\,:=\,\widehat{T}(P,Q,R)\,=\, \sum_{(i,i'),(j,j'),(k,k')\in [N]^2} T_{(i,i'),(j,j'),(k,k')} P_{i,i'} Q_{j,j'} R_{k,k'}.$$ 
In order to obtain an XOR game $G=G(T)$, we take either the real or the imaginary part of the coefficients $M_{P,Q,R}$ (whichever allows for the largest entangled bias), and normalize the resulting sequence according to its $\ell_1$ norm (note that this normalization has no effect on the ratio of the biases that is considered in Proposition~\ref{prop:pauligame}). This results in a game with $N^2$ questions per player, indexed by the Pauli matrices. Since we are ultimately only concerned with the ratio $\|T\|_{3,3}/\|T\|_{2,2,2}$, without loss of generality we assume that the two transformations made above (making $T$ Hermitian and such that the coefficients defined above are all real) resulted in the $\|\cdot \|_{3,3,3}$ norm being divided by a factor at most $4$, and the $\|\cdot\|_{2,2}$ norm remaining unchanged. 

The fact that property 1. above holds is clear, by definition. Next we prove that property 2. holds. Let $\chi,\upsilon,\zeta:\pauli{n}\to\pmset{}$ be an optimal classical strategy. Define the matrices $X = \sum_{P\in\mathcal{P}_{n}}\chi(P) P$, $Y = \sum_{Q\in\mathcal{P}_{n}} \upsilon(Q) \,Q$ and $Z = \sum_{R\in\mathcal{P}_{n}} \zeta(R)\,R$. Then $X,Y$ and $Z$ are Hermitian, and 
$$ \|X\|_F^2\,=\,\Tr(X^\dagger X) \,=\,\sum_{P,P'\in\pauli{n}} \chi(P) \chi(P')\Tr(P^\dagger P') \,=\, N \sum_{P\in\pauli{n}} \chi(P)^2 \,=\, N^3,$$
and the same holds for $Y$ and $Z$. By the Cauchy-Schwarz inequality, the classical bias can be bounded as
\begin{align*}
 \beta(M) &= \sum_{P,Q,R}  \widehat{T}(P,Q,R) \,\chi(P)\upsilon(Q)\zeta(R)\\
  &= \sum_{P,Q,R\in\pauli{n}} \langle T,X\otimes Y\otimes Z\rangle \,\chi(P)\upsilon(Q)\zeta(R)\\
  &\leq \max_{X,Y,Z\in\mathcal{B}(\herm{N},N^{3/2})}\, \langle T,X\otimes Y \otimes Z \rangle \\
 &\leq N^{9/2}  \|T\|_{2,2,2}.
 \end{align*}

Finally, we prove property 3. by exhibiting a good entangled strategy for $G(T)$. We simply let the observable corresponding to question $P$ (resp. $Q,R$) be the $n$-qubit Pauli matrix $P$ (resp. $Q,R$). Let $\ket{\Psi}$ be a shared entangled state. The bias of the corresponding strategy is
$$
\sum_{P,Q,R} \widehat{T}(P,Q,R) \,\bra{\Psi} P\otimes Q \otimes R \ket{\Psi} \,=\, N^3 \bra{\Psi} T \ket{\Psi} \,=\, N^3 \|T\|_{3,3}, 
$$
where for the last equality we chose $\ket{\Psi}$ an eigenvector of $T$ with largest eigenvalue.  

\begin{remark} In our construction, the only properties of the Pauli matrices that we use is that they form a family of observables that is orthogonal with respect to the Hilbert-Schmidt inner product on $\herm{N}$. Any other such family would lead to a completely analogous construction (in which the player's observables in the entangled strategy are replaced by the corresponding elements).
\end{remark}

\subsection{Constructing a good tensor $T$}\label{sec:constructT}

In this section we prove Proposition~\ref{prop:goodtensor} by giving a probabilistic argument for the existence of a tensor $T$ with good spectral properties. Let $N$ be an integer, and $\ket{g}$ the (random) $N^3$-dimensional vector
\beqn
\ket{g} := \sum_{i,j,k=1}^Ng_{ijk}\ket{i}\ket{j}\ket{k} \sim\normal^{N^3},
\eeqn
where the $g_{ijk}$ are i.i.d. $\normal$ random variables.
We define a tensor $T$ depending on the $g_{ijk}$, and then prove bounds on the $\|\cdot\|_{3,3}$ and $\|\cdot\|_{2,2,2}$ norms of $T$ that hold with high probability over the choice of the $g_{ijk}$. 
Let 
\begin{align}
T &:= \sum_{i\neq i',j\neq j',k\neq k'} g_{ijk}\, g_{i'j'k'} \ketbra{i,j,k}{i',j',k'}.\label{eq:deft}
\end{align}
$T$ is a real $N^3\times N^3$ symmetric matrix that equals $\ketbra{g}{g}$ with some coefficients zeroed out, including those on the diagonal. Hence $T$ is very close to a rank $1$ matrix and it should therefore be no surprise that its spectral norm is large, as we show in Section~\ref{sec:Tlb} below. More work is needed to upper bound the $\|\cdot\|_{2,2,2}$ norm of $T$. In particular, we note that zeroing out the diagonal coefficients is essential to getting a good bound on $\|T\|_{2,2,2}$. While we show in Section~\ref{sec:Tub} that with high probability over $\ket{g}$ we have $\|T\|_{2,2,2} = O(N\log^{5/2}N)$, it is not hard to see that in expectation we already have $\|\ketbra{g}{g}\|_{2,2,2} = \Omega(N\sqrt{N})$ (indeed, simply choose $X=Y=Z = I/\sqrt{N}$ in the definition of $\|\cdot\|_{2,2,2}$). Zeroing out some entries of $\ketbra{g}{g}$  approximately preserves the spectral norm, but decreases its norm as a trilinear operator by almost a factor~$\sqrt{N}$.

\begin{remark} The same construction, with the normal random variables $g_{ijk}$ replaced by i.i.d. Bernoulli random variables, can be used to obtain similar results.\footnote{We thank Ignacio Villanueva for asking this question.} Indeed, Lemma~\ref{lem:tensorlb} below holds trivially in that case, and to obtain the analogue of Lemma~\ref{lem:tensorub} it suffices to replace the use of Corollary~\ref{cor:chisquare} and Corollary~\ref{cor:gaussherm} in the proof of Lemma~\ref{lem:netconc} by Corollary~\ref{cor:berntail} and Theorem~\ref{thm:hw} respectively. 
\end{remark}

\subsubsection{A lower bound on the spectral norm}\label{sec:Tlb}

A lower-bound on the spectral norm of $T$ as defined in~\eqref{eq:deft} follows easily from the fact that it is, by definition, very close to a rank-$1$ matrix. 
We show the following.

\begin{lemma}\label{lem:tensorlb}
For any $\tau>0$ and all large enough $N$ it holds that 
$$\|T\|_{3,3} \,\geq\, N^3 - \tau N^2$$
with probability at least $1-e^{-\Omega(\tau^2)}$.
\end{lemma}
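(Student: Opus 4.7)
The plan is to realize $T$ as a small perturbation of a rank-one matrix. Write $T = \ketbra{g}{g} - E$, where $E$ collects the terms removed from $\ketbra{g}{g}$ in the definition~\eqref{eq:deft}, namely those indexed by pairs $((i,j,k),(i',j',k'))$ in which at least one of the equalities $i=i'$, $j=j'$, $k=k'$ holds. Since $\ketbra{g}{g}$ has rank one, its spectral norm is exactly $\|g\|_2^2$, and the triangle inequality yields
$$
\|T\|_{3,3} \,\geq\, \|g\|_2^2 - \|E\|_{3,3}.
$$
It then remains to lower bound $\|g\|_2^2$ and upper bound $\|E\|_{3,3}$. The first is immediate: applying Corollary~\ref{cor:chisquare} with $t=\tau N^{3/2}$ gives $\|g\|_2^2 \geq N^3 - \tau N^{3/2}$ with probability at least $1-2e^{-\Omega(\tau^2)}$ (for $\tau \leq N^{3/2}$; outside this range the statement is vacuous).

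To upper bound $\|E\|_{3,3}$ I will decompose $E$ by inclusion-exclusion according to which of the three equalities are enforced. For each nonempty $S\subseteq\{1,2,3\}$, let $E_S$ denote the matrix obtained by summing the entries of $\ketbra{g}{g}$ over the pairs in which the coordinates indexed by $S$ agree (with no restriction on the others). Then $E = \sum_{\ell} E_{\{\ell\}} - \sum_{\ell<m} E_{\{\ell,m\}} + E_{\{1,2,3\}}$, and each $E_S$ is block-diagonal with respect to the coordinates in $S$: concretely, its $\alpha$-th block is the rank-one matrix $\ketbra{g_\alpha}{g_\alpha}$, where $\ket{g_\alpha}$ is the $N^{3-|S|}$-dimensional slice of $\ket{g}$ along $\alpha\in[N]^{|S|}$. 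Hence $\|E_S\|_{3,3} = \max_\alpha \|g_\alpha\|_2^2$ is a maximum of $N^{|S|}$ independent $\chi^2$-variables with $N^{3-|S|}$ degrees of freedom. Corollary~\ref{cor:chisquare} combined with a union bound over the $N^{|S|}$ blocks controls $\|E_S\|_{3,3}$ within its mean $N^{3-|S|}$ up to an additive fluctuation of order $O\big(N^{(3-|S|)/2}(\sqrt{\log N}+\tau)\big)$. Summing the seven contributions, the three singletons yield the leading $3N^2$, and all other terms are of lower order, giving $\|E\|_{3,3}\leq 3N^2+O(N(\tau+\sqrt{\log N}))$ with probability at least $1-e^{-\Omega(\tau^2)}$.

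Combining the two estimates, $\|T\|_{3,3} \geq N^3 - 3N^2 - O(N\sqrt{\log N}) - \tau N^{3/2}$ with probability at least $1-e^{-\Omega(\tau^2)}$. For $\tau$ larger than an absolute constant and $N$ large enough, a mild rescaling turns this into the advertised bound $\|T\|_{3,3}\geq N^3-\tau N^2$; for smaller $\tau$ the probability bound is loose and the statement is essentially trivial. The step I expect to require the most care is the inclusion-exclusion control of $\|E\|_{3,3}$: the union-bound losses must be traded against the Gaussian tail carefully enough to ensure that the $|S|\geq 2$ terms truly remain of order $o(N^2)$, so that the $3N^2$ coming from the three singletons really is the dominant error.
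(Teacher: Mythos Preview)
Your argument is correct and arrives at the same conclusion, but the route differs from the paper's. The paper does not bound $\|E\|_{3,3}$ at all: instead it tests $T$ directly against the (unnormalized) vector $\ket{g}$ and expands
\[
\bra{g}T\ket{g}\,=\,\sum_{i\neq i',\,j\neq j',\,k\neq k'} g_{ijk}^2\,g_{i'j'k'}^2,
\]
a sum of \emph{nonnegative} terms. Because of this positivity, only the easy (union-bound) direction of inclusion--exclusion is needed: one simply subtracts the three ``one index agrees'' sums $\sum_i\big(\sum_{j,k}g_{ijk}^2\big)^2$ from the full square $\big(\sum_{i,j,k}g_{ijk}^2\big)^2$, and controls each of those via the same $\chi^2$ concentration (Corollary~\ref{cor:chisquare}) that you use. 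Dividing by $\|\ket{g}\|^2$ gives the bound. Your approach, by contrast, works at the level of operator norms: you decompose $E$ via full (signed) inclusion--exclusion into seven block-diagonal pieces $E_S$, identify each $\|E_S\|_{3,3}$ as a maximum of independent $\chi^2$-variables, and sum. Both arguments hinge on exactly the same concentration input; the paper's is shorter because positivity of the quadratic form lets it skip the $|S|\geq 2$ terms entirely, while yours is more structural and would transfer unchanged to settings where no convenient nonnegative expansion is available. Your explicit acknowledgement that the stated probability bound is only meaningful for $\tau$ above an absolute constant (so that the $\log N$ from the union bound can be absorbed) is also more careful than the paper's treatment of that point.
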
 

\begin{proof}
 Define $\ket{\Psi} = N^{-3/2}\ket{g}$. By Corollary~\ref{cor:chisquare}, for any $\delta>0$ we have that
 \beqn
 \Pr\Big[\sum_{i,j,k} g_{ijk}^2 \leq (1-\delta)N^3 \Big] \leq 2e^{-\delta^2 N^3/(64e^2)}.
 \eeqn
% $\sum_{i,j,k} g_{ijk}^2 \leq (1-\delta)N^3$ is at most $2e^{-\delta^2 N^3/(64e^2)}$. 
Provided this holds,
\beq\label{eq:tensor-1}
\|\ket{\Psi}\|^2 = \frac{1}{N^3}\sum_{i,j,k} g_{ijk}^2 \leq 1-\delta.
\eeq
Another application of Corollary~\ref{cor:chisquare}, together with a union bound, shows that the probability that there exists an $i\in[N]$ such that
 $ \sum_{j,k}  g_{ijk}^2 \geq (1+\delta)N^2 $ is at most $2Ne^{-\delta^2 N^2/(64e^2)}$. Provided this holds, 
\beq\label{eq:tensor-2}
 \sum_{i}\Big(\sum_{j,k} |g_{ijk}|^2\Big)^2 \,\leq\,(1+\delta)^2 N^5
\eeq
and the same holds symmetrically for $j$ or $k$. This lets us bound
\begin{align*}
\bra{\Psi} T \ket{\Psi} &= \frac{1}{N^3} \sum_{i\neq i',j\neq j',k\neq k'} |g_{ijk}|^2 \,|g_{i'j'k'}|^2\\
&\geq \frac{1}{N^3} \Big(\Big(\sum_{i,j,k} |g_{ijk}|^2 \Big)^2 - \sum_{i}\Big(\sum_{j,k} |g_{ijk}|^2\Big)^2-\sum_{j}\Big(\sum_{i,k} |g_{ijk}|^2\Big)^2-\sum_{k}\Big(\sum_{i,j} |g_{ijk}|^2\Big)^2\Big)\\
&\geq \frac{(1-\delta)^2 N^6-3(1+\delta)^2 N^5 }{N^3} \geq (1-3\delta)\,N^3,
\end{align*} 
where the second inequality uses~\eqref{eq:tensor-1} and~\eqref{eq:tensor-2}, and the last holds for large enough $N$. Hence, using~\eqref{eq:tensor-1} once more,
$$\|T\|_{3,3} \,\geq\, \frac{\bra{\Psi} T \ket{\Psi}}{\|\ket{\Psi}\|^2}\,\geq\, (1-3\delta)\,N^3 (1-\delta)^{-1} \,\geq\, (1-6\delta)N^3$$
for small enough $\delta$. The claimed bound follows by setting $\delta = \tau/(6N)$.
\end{proof}

\subsubsection{Upper-bounding $\|T\|_{2,2,2}$}\label{sec:Tub}

In this section we give an upper bound for $\|T\|_{2,2,2}$ that holds with good probability over the choice of $T$, where $T$ is as in~\eqref{eq:deft}, a $3$-tensor of dimensions $N^2\times N^2\times N^2$. 
Recall that
\beqn
\|T\|_{2,2,2} = \max_{X,Y,Z\in\hball{N}} |\langle T,X\otimes Y \otimes Z\rangle|.
\eeqn
We prove the following.  

\begin{lemma}\label{lem:tensorub} There exist universal constants $d,D>0$ such that for all large enough $N$, we have
$$ \|T\|_{2,2,2} \leq  D N (\ln N)^{5/2} $$
with probability at least $1-e^{-d N}$ over the choice of $\ket{g}$.
\end{lemma}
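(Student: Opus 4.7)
The strategy is to combine an $\varepsilon$-net argument on the admissible triples $(X,Y,Z)$ with Lata\l a's deviation inequality for Gaussian quadratic forms (Corollary~\ref{cor:gaussherm}). First I would reduce from general Hermitian matrices to normalized projectors: by Lemma~\ref{lem:lorentz}, any $X \in \hball{N}$ admits a signed decomposition into elements of $\projnorm{N}$ with coefficients of total absolute value at most $4\sqrt{\ln N}$. Applying this to each of $X,Y,Z$ and using trilinearity of the form yields
\beqn
\|T\|_{2,2,2} \,\leq\, \big(4\sqrt{\ln N}\big)^3 \max_{X,Y,Z \in \projnorm{N}} \big|\langle T, X \otimes Y \otimes Z\rangle\big|,
\eeqn
reducing the problem to the maximum over projector triples at the cost of a factor $64(\ln N)^{3/2}$.

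The heart of the argument is a concentration estimate for a \emph{fixed} triple of normalized projectors $(X,Y,Z)$ of ranks $(k,\ell,m)$. Writing $\langle T, X \otimes Y\otimes Z\rangle = \bra{g} A \ket{g}$, where $A$ is obtained from $M := X\otimes Y\otimes Z$ by zeroing every entry $((i,j,k),(i',j',k'))$ with $i=i'$, $j=j'$, or $k=k'$, inclusion--exclusion expresses $A = \sum_{S\subseteq\{1,2,3\}} (-1)^{|S|} D_S M$, where $D_S M$ restricts $M$ to indices matching on the coordinates in $S$. Each $D_S M$ is a tensor product in which some factors among $X,Y,Z$ have been replaced by their diagonals; since a normalized rank-$k$ projector satisfies $|X_{i,i}| \leq 1/\sqrt{k}$ (because $X = P/\sqrt{k}$ with $P$ a projector), this gives $\|D_S M\|_\infty \leq 1/\sqrt{k\ell m}$ for every $S$, and hence $\|A\|_F \leq 1$, $\Tr(A) = 0$, and $\|A\|_\infty \leq 8/\sqrt{k\ell m}$. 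Corollary~\ref{cor:gaussherm} then yields
\beqn
\Pr\!\big[\,|\bra{g}A\ket{g}| \geq t\,\big] \,\leq\, 2\exp\!\Big(-c\,\min\big(t^2,\, t\sqrt{k\ell m}\,\big)\Big)
\eeqn
for an absolute constant $c>0$.

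To pass from this pointwise estimate to a bound on the maximum, I would invoke Lemma~\ref{lem:projnet} and Proposition~\ref{prop:triplenet} to construct, for each rank triple $(k,\ell,m) \in [N]^3$, a product $\varepsilon$-net of cardinality at most $8(5/\varepsilon)^{(k+\ell+m)N}$. Choosing $\varepsilon = N^{-4}$ makes the rounding error negligible: Cauchy--Schwarz combined with the elementary bound $\|T\|_F \leq \sum_{i,j,k} g_{ijk}^2 \leq 2N^3$ (which holds with probability $1-e^{-\Omega(N^3)}$ by Corollary~\ref{cor:chisquare}) shows that replacing a triple by its closest net point changes the value of the form by at most $3\varepsilon\|T\|_F = O(1/N)$. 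Union bounding over all net points and over the $N^3$ rank triples forces $t$ to satisfy $\min\big(t^2,\,t\sqrt{k\ell m}\big)$ exceeding a constant times $(k+\ell+m)N\log N$. A short case analysis in $(k,\ell,m)$ shows that the tightest constraint arises at $k=\ell=m=1$, where the exponential tail kicks in and forces $t = O(N\log N)$; at the opposite extreme $k=\ell=m=N$ the Gaussian regime gives only $t = O(N\sqrt{\log N})$. Multiplying by the factor $64(\ln N)^{3/2}$ from the Lorentz reduction then delivers $\|T\|_{2,2,2} \leq DN(\ln N)^{5/2}$ with probability $1 - e^{-dN}$.

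The main obstacle I anticipate is the two-regime bookkeeping: Lata\l a's tail bound interpolates between Gaussian and exponential, and the $\varepsilon$-net size depends on the rank through the factor $(k+\ell+m)N$, so one must check that the bound $t = O(N\log N)$ is consistent \emph{uniformly} across all rank triples. The crucial technical input enabling this is the sharp operator-norm estimate $\|A\|_\infty \leq 8/\sqrt{k\ell m}$ derived from the inclusion--exclusion decomposition; relying instead on the trivial bound $\|A\|_\infty \leq \|A\|_F \leq 1$ would force the exponential tail to dominate at all ranks, spoiling the argument at high rank where the net is largest.
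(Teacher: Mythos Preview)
Your overall strategy---Lorentz reduction to normalized projectors, pointwise concentration via Corollary~\ref{cor:gaussherm}, then a union bound over a product net stratified by rank---matches the paper's, and your inclusion--exclusion computation giving $\|A\|_\infty \leq 8/\sqrt{k\ell m}$ is correct. However, the ``short case analysis in $(k,\ell,m)$'' that you defer contains a genuine gap: you only checked the balanced extremes $k=\ell=m=1$ and $k=\ell=m=N$, but the \emph{unbalanced} triples are where the union bound breaks.

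Take $k=N$, $\ell=m=1$. The net $\net_\eps^{(N,1,1)}$ has cardinality $\exp\big(\Theta(N^2\log N)\big)$, while $\sqrt{k\ell m}=\sqrt{N}$ forces you into the exponential tail of Corollary~\ref{cor:gaussherm} for any $t\gg\sqrt N$. Beating the net then requires $t\sqrt{N}\gtrsim N^2\log N$, i.e.\ $t\gtrsim N^{3/2}\log N$, far worse than the $t=O(N\log N)$ you claim. More generally, your union bound goes through with $t=O(N\log N)$ only when $\sqrt{k\ell m}\gtrsim \max\{k,\ell,m\}$, and this fails whenever one rank exceeds the product of the other two.

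The paper's proof (Lemma~\ref{lem:netconc}) isolates exactly this regime as a separate case. Assuming $k\geq\max\{\ell,m\}$, when $k\geq\ell m$ it abandons the net over $X$ entirely and instead uses the structural bound
\[
|\bra{g}X\otimes Y\otimes Z\ket{g}|\,\leq\,\sqrt{\tfrac{\ell m}{k}}\,\max_{\ket y,\ket z}\big\|(I\otimes\bra y\otimes\bra z)\ket{g}\big\|_2^2,
\]
exploiting that $X$ is a normalized rank-$k$ projector so that $\sum_p|\braket{x_p}{\cdot}|^2\leq\|\cdot\|^2$. The maximum on the right is now over pairs of unit vectors, requiring a net of size $\exp\big(O(N\log N)\big)$ \emph{independent of $k$}, and for fixed $\ket y,\ket z$ the marginal $(I\otimes\bra y\otimes\bra z)\ket g$ is an $N$-dimensional Gaussian whose squared norm concentrates around $N$ by Corollary~\ref{cor:chisquare}. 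This decoupling of the largest rank from the net size is the missing ingredient in your plan; without it the argument cannot close uniformly over $(k,\ell,m)$.
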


\noindent
We note that if $T$ was a random tensor with entries i.i.d. standard normal, then a result by Nguyen et al.~\cite{NDT10} would show that $\|T\|_{2,2,2} = O\big(N\sqrt{\log N}\big)$ holds with high probability. However, the entries of our tensor $T$ are not independent, and we need to prove a bound tailored to our specific setting.

Our first step consists in showing that the supremum in the definition of $\|T\|_{2,2,2}$ can be restricted to a supremum over projector matrices, at the cost of the loss of a logarithmic factor in the bound.\footnote{We thank Gilles Pisier for suggesting the use of this decomposition.}

\begin{lemma}\label{lem:projbound}
Let $\ket{g}$ be a vector in~$\R^{N^3}$ and let $T$ be the associated tensor, as in~\eqref{eq:deft}. Then
\beq\label{eq:sup-proj}
\|T\|_{2,2,2} \leq 64\, (\ln N)^{3/2}\: \max\: \big| \bra{g}  X\otimes Y\otimes Z \ket{g} - \Tr(X\otimes Y\otimes Z)\big|,
\eeq
where the maximum is taken over all triples $(X,Y,Z)\in\projnorm{N}^3$.
\end{lemma}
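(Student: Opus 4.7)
The plan is to exploit the key structural feature of $T$: by definition~\eqref{eq:deft}, the coefficient $T_{(i,i'),(j,j'),(k,k')}$ vanishes as soon as any one of the pairs $(i,i')$, $(j,j')$, $(k,k')$ coincides. This suggests replacing $X,Y,Z$ by their off-diagonal parts, after which the Lorentz-type decomposition of Lemma~\ref{lem:lorentz} can be applied independently to each tensor factor, contributing one factor of $4\sqrt{\ln N}$ per factor and giving the announced $(4\sqrt{\ln N})^3 = 64(\ln N)^{3/2}$.

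Concretely, for any $X\in\hball{N}$ write $X = \widetilde X + X_0$, where $\widetilde X := \sum_i X_{ii}\ketbra{i}{i}$ is the diagonal part and $X_0 := X-\widetilde X$ has zero diagonal. Both matrices are Hermitian, and since they are orthogonal in the Frobenius inner product, $\|X_0\|_F \leq \|X\|_F \leq 1$, so $X_0\in\hball{N}$. Define $Y_0, Z_0$ analogously. The first step is the identity
\begin{equation*}
\langle T, X\otimes Y\otimes Z\rangle \;=\; \bra{g}(X_0\otimes Y_0\otimes Z_0)\ket{g} - \Tr(X_0\otimes Y_0\otimes Z_0).
\end{equation*}
On the one hand, because $T_{(i,i'),(j,j'),(k,k')}=0$ whenever any pair is equal, the sum defining the left-hand side is unchanged when $X,Y,Z$ are replaced by $X_0,Y_0,Z_0$. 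On the other hand, $\ket{g}\bra{g}-T$ is supported precisely on those ``diagonal slices'' where $i=i'$, $j=j'$ or $k=k'$, and every such entry of $X_0\otimes Y_0\otimes Z_0$ is zero by construction. Finally $\Tr(X_0\otimes Y_0\otimes Z_0)=\Tr(X_0)\Tr(Y_0)\Tr(Z_0)=0$, so inserting the trace term is free.

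Set $F(A,B,C) := \bra{g}(A\otimes B\otimes C)\ket{g} - \Tr(A\otimes B\otimes C)$, which is trilinear in $(A,B,C)$; the identity above reads $\langle T, X\otimes Y\otimes Z\rangle = F(X_0,Y_0,Z_0)$. Since $X_0,Y_0,Z_0\in\hball{N}$, Lemma~\ref{lem:lorentz} produces decompositions $X_0 = \sum_x \lambda_x X_x$, $Y_0 = \sum_y \mu_y Y_y$, $Z_0 = \sum_z \nu_z Z_z$ into normalized projectors $X_x,Y_y,Z_z\in\projnorm{N}$ with each of the three coefficient sums bounded by $4\sqrt{\ln N}$. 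Trilinearity then yields
\begin{equation*}
|F(X_0,Y_0,Z_0)| \;\leq\; (4\sqrt{\ln N})^3 \max_{(X',Y',Z')\in\projnorm{N}^3} |F(X',Y',Z')|,
\end{equation*}
and taking the supremum over $X,Y,Z\in\hball{N}$ on the left delivers~\eqref{eq:sup-proj}.

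I do not anticipate any genuine obstacle: the reduction to off-diagonal parts is forced by the support of $T$, and Lemma~\ref{lem:lorentz} is directly applicable once the $F$-identity is established. The only small sanity check is that $X_0$ remains in $\hball{N}$, which is immediate from the Frobenius-orthogonality of $\widetilde X$ and $X_0$.
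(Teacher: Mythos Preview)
Your proof is correct and follows essentially the same approach as the paper's. The paper is slightly terser: it asserts directly that the maximizing $X,Y,Z$ may be taken traceless (really, zero-diagonal, for exactly the reason you give), so that $\langle T, X\otimes Y\otimes Z\rangle = \bra{g}X\otimes Y\otimes Z\ket{g}$ and $\Tr(X\otimes Y\otimes Z)=0$, and then applies Lemma~\ref{lem:lorentz} to each factor and uses trilinearity, just as you do. Your explicit passage to the off-diagonal parts $X_0,Y_0,Z_0$ makes this reduction more transparent.
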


%This lemma shows that it suffices to prove that with high probability over~$\ket{g}\sim\normal^{N^3}$, for any triple $(X,Y,Z)\in\projnorm{N}^3$, we have that $ \bra{g}  X\otimes Y\otimes Z \ket{g}$ deviates little from its expectation $\Tr(X\otimes Y\otimes Z)$.

\begin{proof}
Let $X,Y,Z\in B\big(\herm{N}\big)$ be traceless Hermitian matrices such that
\beqn
\|T\|_{2,2,2} = \langle T,X\otimes Y\otimes Z\rangle = \bra{g}X\otimes Y\otimes Z\ket{g},
\eeqn
where the second equality follows from the definition of $T$. 
Decompose $X,Y,Z$ as per Lemma~\ref{lem:lorentz}, giving
$$
X = \sum_{x}\alpha_x X_x,\qquad Y = \sum_{y}\beta_y  Y_y\qquad\text{and}\qquad Z = \sum_{z}\gamma_z  Z_z,
$$
where $\|(\alpha_x)_x\|_1, \|(\beta_y)_y\|_1, \|(\gamma_z)_z\|_1\leq 4\sqrt{\ln N}$ and $X_x,Y_y,Z_z \in \projnorm{N}$.
Note that
\beqn
0 = \Tr(X\otimes Y\otimes Z) = \sum_{x,y,z}\alpha_x\beta_y\gamma_z\,\Tr(X_x\otimes Y_y\otimes Z_z).
\eeqn
\noindent By linearity and H\"{o}lder's inequality, we have
\begin{align*}
 \bra{g}X\otimes Y\otimes Z\ket{g} &- \Tr(X\otimes Y\otimes Z)\\ 
 &= \sum_{x,y,z}\alpha_x\beta_y\gamma_z\, \Big(\bra{g}X_x\otimes Y_y\otimes Z_z\ket{g} - \Tr(X_x\otimes Y_y\otimes Z_z)\Big)\\
 &\leq 64 (\ln N)^{3/2}\max_{x,y,z} \: \big| \bra{g}  X_x\otimes Y_y\otimes Z_z \ket{g} - \Tr(X_x\otimes Y_y\otimes Z_z)\big|,
\end{align*}
 proving the lemma.
\end{proof}

Our next step is to show that we may further restrict the maximum on the right-hand side of~\eqref{eq:sup-proj} to a maximum over projectors taken from the $\eps$-net $\net_\eps$ given in Definition~\ref{def:triplenet}.

%Recall that by Lemma~\ref{lem:projbound}, it suffices to show that with high probability over the choice of random~$\ket{g}$, for any triple $(X,Y,Z)\in\projnorm{N}^3$, we have that $\bra{g}  X\otimes Y\otimes Z \ket{g}$ is close to its expectation  $\Tr(X\otimes Y\otimes Z)$.
%Note that since $\projnorm{N}^3$ has infinitely many elements, union bounding over it is not an option.
%For an $\eps$-net argument to work, we need to show that if $\widetilde X\otimes\widetilde Y\otimes\widetilde Z$ is close to $ X\otimes Y\otimes Z$, then the value  $\bra{g}\tilde X\otimes \tilde Y\otimes \tilde Z \ket{g} - \Tr(\tilde X\otimes \tilde Y\otimes \tilde Z)$ is close to $\bra{g}  X\otimes Y\otimes Z \ket{g} - \Tr(X\otimes Y\otimes Z)$.
%This is the content of the following simple proposition.

\begin{lemma}\label{lem:netbound}
Let $\ket{g}$ be a vector in~$\R^{N^3}$, $T$ the associated tensor and $\eps>0$. Then
\beq\label{eq:sup-net}
\|T\|_{2,2,2} \leq 64\, (\ln N)^{3/2}\:\Big( \max\: \big| \bra{g}  X\otimes Y\otimes Z \ket{g} - \Tr(X\otimes Y\otimes Z)\big| +3\eps\,\big(N^{3/2}+\big\|\ket{g}\big\|_2^2\big)\Big),
\eeq
where the maximum is taken over all  $X\otimes Y\otimes Z\in \net_{\eps} $.
\end{lemma}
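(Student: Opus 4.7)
\noindent\textbf{Proof plan for Lemma~\ref{lem:netbound}.} The plan is to start from the bound already established in Lemma~\ref{lem:projbound}, which reduces the problem to maximizing $\bigl|\bra{g} X\otimes Y\otimes Z\ket{g} - \Tr(X\otimes Y\otimes Z)\bigr|$ over triples of normalized projectors $(X,Y,Z)\in \projnorm{N}^{3}$. The goal is then to discretize this maximum using the $\eps$-net $\net_{\eps}$ from Definition~\ref{def:triplenet}, controlling the incurred approximation error in terms of $\eps$, $N^{3/2}$, and $\|\ket{g}\|_2^2$, so that the factor $3\eps\,(N^{3/2}+\|\ket{g}\|_2^2)$ in the claimed bound arises naturally.

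Concretely, fix any triple $(X,Y,Z)\in\projnorm{N}^{3}$ and write $A=X\otimes Y\otimes Z$. By Proposition~\ref{prop:triplenet}, there exists $\widetilde A = \widetilde X\otimes \widetilde Y\otimes \widetilde Z\in \net_{\eps}$ with $\|A-\widetilde A\|_F \leq 3\eps$. I will control the two error terms separately. For the quadratic form, writing $|\bra{g}(A-\widetilde A)\ket{g}|\leq \|\ket{g}\|_2^2\,\|A-\widetilde A\|_\infty$ and then using $\|\cdot\|_\infty\leq\|\cdot\|_F$ yields
\[
\bigl|\bra{g}A\ket{g}-\bra{g}\widetilde A\ket{g}\bigr| \;\leq\; 3\eps\,\|\ket{g}\|_2^2.
\]
For the trace term, I use the identity $\Tr(M)=\langle I, M\rangle$ together with Cauchy--Schwarz on the Hilbert--Schmidt inner product, giving $|\Tr(A-\widetilde A)|\leq \|I\|_F\,\|A-\widetilde A\|_F\leq 3\eps\,N^{3/2}$ (since $I$ acts on an $N^3$-dimensional space).

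Combining these two estimates by the triangle inequality gives
\[
\bigl|\bra{g}A\ket{g}-\Tr(A)\bigr| \;\leq\; \bigl|\bra{g}\widetilde A\ket{g}-\Tr(\widetilde A)\bigr| \;+\; 3\eps\bigl(N^{3/2}+\|\ket{g}\|_2^2\bigr).
\]
Taking the maximum over $(X,Y,Z)\in\projnorm{N}^{3}$ on the left-hand side and bounding the right-hand side by the maximum over $\net_{\eps}$ plus the error term, and finally multiplying through by the factor $64(\ln N)^{3/2}$ supplied by Lemma~\ref{lem:projbound}, yields exactly the claimed inequality.

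There is no serious obstacle here: the lemma is essentially a routine net-discretization of Lemma~\ref{lem:projbound}, and the only step requiring mild care is keeping track of the correct norm when bounding each of the two pieces $\bra{g}(A-\widetilde A)\ket{g}$ and $\Tr(A-\widetilde A)$, so that a single Frobenius bound $\|A-\widetilde A\|_F\leq 3\eps$ suffices to control both. The actual work, and the probabilistic content, will come in the subsequent step where one must take a union bound over $\net_{\eps}$ (whose cardinality is controlled by Lemma~\ref{lem:projnet}) and invoke the Lata\l a-type concentration from Corollary~\ref{cor:gaussherm} to estimate the discretized maximum, ultimately proving Lemma~\ref{lem:tensorub}.
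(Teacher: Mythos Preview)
Your proposal is correct and follows essentially the same route as the paper: reduce via Lemma~\ref{lem:projbound} to a maximum over normalized projectors, approximate each triple by an element of $\net_\eps$ via Proposition~\ref{prop:triplenet}, and control the two error terms separately to produce the $3\eps(N^{3/2}+\|\ket{g}\|_2^2)$ contribution. The only cosmetic difference is that the paper bounds $|\bra{g}(A-\widetilde A)\ket{g}|$ via the Hilbert--Schmidt Cauchy--Schwarz inequality $|\langle A-\widetilde A,\ketbra{g}{g}\rangle|\leq \|A-\widetilde A\|_F\,\|\ketbra{g}{g}\|_F$, whereas you pass through the operator norm first; both yield the same bound $3\eps\,\|\ket{g}\|_2^2$.
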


\begin{proof} Fix a triple $(X,Y,Z)\in \projnorm{N}^3$. By Proposition~\ref{prop:triplenet}, there exists an $\tilde{X}\otimes\tilde{Y}\otimes\tilde{Z}\in\net_\eps$ such that
$$ \big\| X \otimes Y \otimes Z - \tilde{X}\otimes\tilde{Y}\otimes\tilde{Z} \big\|_F\,\leq\,3\,\eps.$$
%By the triangle inequality, the left-hand side of Eq.~\eqref{eq:simclose} is less than
%\beqn
%\big| \bra{g} X\otimes Y\otimes Z\ket{g} - \bra{g}\tilde X\otimes \tilde Y\otimes \tilde Z \ket{g}  \big| + \big| \Tr(X\otimes Y\otimes Z -  \tilde X\otimes \tilde Y\otimes \tilde Z) \big|.
%\eeqn
By the Cauchy-Schwarz inequality, we have
\beqrn
|\bra{g} X\otimes Y\otimes Z\ket{g} - \bra{g} \tilde X\otimes \tilde Y\otimes \tilde Z\ket{g}| &=& |\bra{g}  X\otimes Y\otimes Z- \tilde X\otimes \tilde Y\otimes \tilde Z\ket{g}|\nonumber\\[.1cm]
 &\leq&  \| X\otimes Y\otimes Z- \tilde X\otimes \tilde Y\otimes \tilde Z\|_F\: \|\ketbra{g}{g}\|_F.
\eeqrn
Another application of the Cauchy-Schwarz inequality and the definition of the Frobenius norm give
\beqrn
\big| \Tr(X\otimes Y\otimes Z -  \tilde X\otimes \tilde Y\otimes \tilde Z) \big| &=& |\langle I, X\otimes Y\otimes Z -  \tilde X\otimes \tilde Y\otimes \tilde Z\rangle|\\
&\leq& N^{3/2}\, \| X\otimes Y\otimes Z- \tilde X\otimes \tilde Y\otimes \tilde Z\|_F.
\eeqrn
Hence the lemma follows from Lemma~\ref{lem:projbound}.
\end{proof}

We upper-bound the right-hand side of~\eqref{eq:sup-net} by first showing that for any fixed triple $(k,\ell,m)\in [N]^3$ and $X\otimes Y\otimes Z\in \net_\eps^{(k,\ell,m)}$, this quantity is bounded with high probability over the choice of $\ket{g}$. We conclude by applying a union bound over the net $\net_{\eps} = \bigcup_{(k,\ell,m)\in [N]^3}\net_\eps^{(k,\ell,m)}$.

\begin{lemma}\label{lem:netconc}
There exist constants $C,c>0$ such the following holds. 
For any $0<\eps\leq N^{-3}$ and $\tau \geq CN\ln(1/\eps)$, the probability over the choice of~$\ket{g}$ that there exists an $X\otimes Y\otimes Z\in \net_{\eps} $ such that
\beq\label{eq:netconc}
 \big|\bra{g}X\otimes Y\otimes Z\ket{g} - \Tr(X\otimes Y\otimes Z)\big| > \tau
\eeq
is at most $e^{-c\tau}$.
\end{lemma}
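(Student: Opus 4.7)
My plan is to apply the pointwise concentration bound from Corollary~\ref{cor:gaussherm} to each element of $\net_\eps$ and then take a union bound, doing this in a stratified way over the index triple $(k,\ell,m) \in [N]^3$. Fix such a triple and an element $A = X \otimes Y \otimes Z \in \net_\eps^{(k,\ell,m)}$. Because $X, Y, Z$ are normalized projectors of rank at most $k, \ell, m$ respectively, we have $\|A\|_F \leq 1$ and, more importantly, $\|A\|_\infty \leq (k\ell m)^{-1/2}$. Applying Corollary~\ref{cor:gaussherm} to the Hermitian matrix $A$ therefore yields
$$\Pr\bigl[\,|\bra{g}A\ket{g} - \Tr(A)| \geq \tau\,\bigr] \;\leq\; 2\exp\!\bigl(-c_0\,\min(\tau^{2}, \tau\sqrt{k\ell m})\bigr)$$
for a universal $c_0 > 0$.

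Next, I would union bound over $\net_\eps^{(k,\ell,m)}$, whose cardinality is at most $8(5/\eps)^{(k+\ell+m)N}$ by Lemma~\ref{lem:projnet} applied coordinatewise, and then sum the resulting bounds over all $N^3$ triples $(k,\ell,m) \in [N]^3$. This produces a bound of the form
$$ \Pr\bigl[\,\eqref{eq:netconc}\text{ holds for some } A\,\bigr] \;\leq\; 16 \sum_{k,\ell,m \in [N]^3}\exp\!\Big( (k+\ell+m)\,N\ln(5/\eps) - c_0\min(\tau^{2}, \tau\sqrt{k\ell m})\Big),$$
and the remaining task is to verify that for $\tau \geq C N \ln(1/\eps)$ with $C$ sufficiently large, every exponent in the sum is at most $-(c+3/N)\tau - 3\ln N$, so the total probability is at most $e^{-c\tau}$.

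The hard part, and the place where the two-sided nature of Latała's inequality really matters, is the case analysis. When the rank product $k\ell m$ is large, the sub-exponential rate $\tau\sqrt{k\ell m}$ is very strong and easily overcomes the $(k+\ell+m)N\ln(1/\eps)$ contribution from the net; the balanced regime $k \asymp \ell \asymp m$ falls cleanly into this case. The delicate regime is the small/unbalanced one, where $\sqrt{k\ell m}$ is comparable to $1$: here the sub-exponential rate only gives $e^{-c_0\tau}$, but the net size is simultaneously small, of order $(5/\eps)^{O(N)}$, so that $\tau \geq CN\ln(1/\eps)$ still suffices provided $C > 3/c_0$. The intermediate unbalanced cases (one or two of $k,\ell,m$ small, the others growing) have to be handled by interpolating: one shows that whichever of $\tau^2$ and $\tau\sqrt{k\ell m}$ is smaller still dominates the corresponding net exponent $(k+\ell+m)N\ln(1/\eps)$, using the hypothesis $\eps \leq N^{-3}$ (so that $\ln(1/\eps) \geq 3\ln N$) to absorb the $N^3$ factor from summing over triples. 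Once all three subcases check out, taking $C$ (and hence $c$) as the worst of the finitely many constants produced in each subcase completes the proof.
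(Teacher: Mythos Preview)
Your plan has a genuine gap in the ``intermediate unbalanced'' regime, and the interpolation you describe does not go through. Consider the concrete triple $k=N$, $\ell=m=1$. Then $\sqrt{k\ell m}=\sqrt{N}$, and since $\tau=CN\ln(1/\eps)\gg\sqrt{N}$ the active branch of the concentration exponent is $\tau\sqrt{k\ell m}=CN^{3/2}\ln(1/\eps)$. But the net $\net_\eps^{(N,1,1)}$ has log-cardinality of order $(k+\ell+m)N\ln(1/\eps)\approx N^{2}\ln(1/\eps)$, which for large $N$ is strictly larger. So the union bound blows up: the net for the high-rank coordinate $X$ is exponentially too large relative to the concentration rate, which is governed only by the \emph{product} $k\ell m$ and not by $k+\ell+m$. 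More generally, your approach would require $C\sqrt{k\ell m}\gtrsim k+\ell+m$, which fails whenever one rank is much bigger than the product of the other two.

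This is precisely why the paper splits into two cases according to whether $\ell m>k$ or $k\ge\ell m$ (after ordering so that $k$ is largest). Your argument is essentially the paper's Case~1 ($\ell m>k$), where indeed $\sqrt{k\ell m}>k\ge\tfrac13(k+\ell+m)$ and the union bound over $\net_\eps^{(k,\ell,m)}$ succeeds. In Case~2 ($k\ge\ell m$) the paper abandons Corollary~\ref{cor:gaussherm} and the projector net entirely: it observes that for normalized projectors,
\[
|\bra{g}X\otimes Y\otimes Z\ket{g}|\;\le\;\sqrt{\tfrac{\ell m}{k}}\,\max_{\|y\|,\|z\|\le 1}\big\|(I\otimes\bra{y}\otimes\bra{z})\ket{g}\big\|_2^2\;\le\;\max_{y,z}\big\|\ket{g(y,z)}\big\|_2^2,
\]
which is independent of $X$ (and of $k$). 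One then only needs an $\eps$-net over \emph{unit vectors} $y,z\in S_\eps$, of size $(5/\eps)^{O(N)}$, combined with the $\chi^2$ tail bound (Corollary~\ref{cor:chisquare}) for $\|\ket{g(y,z)}\|_2^2$; the trace term is handled separately via $\Tr(X\otimes Y\otimes Z)\le\sqrt{k\ell m}\le k\le N$. This decouples the net size from the large rank $k$ and is the missing idea in your proposal.
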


\begin{proof} Fix a triple $(k,\ell,m)\in [N]^3$, and assume that $k\geq \max\{\ell, m\}$, the other cases being reduced to this one by permutation of the indices. Since $k+\ell + m \leq 3k$, we have
\beq\label{eq:netconc-0}
\big|\net_\eps^{(k,\ell,m)}\big| \leq 8\left(\frac{5}{\eps}\right)^{(k+\ell+m)N} \leq e^{3kN\ln(5/\eps)+3}.
\eeq
We distinguish two cases.

Case 1: $\ell m >k$. Fix an  $X\otimes Y\otimes Z\in\net_\eps^{(k,\ell,m)}$.
By definition of the nets $\net_\eps^j$,
$$\| X\otimes Y\otimes Z\|_F \leq 1 \quad\text{and}\quad\| X\otimes Y\otimes Z\|_{\infty} = \frac{1}{\sqrt{k\ell m}}.$$
Hence, by Corollary~\ref{cor:gaussherm} there exists a constant $c'>0$ such that for any $\tau>0$
\beq\label{eq:netconc2}
\Pr_{\ket{g}}\Big[ \big|\bra{g}X\otimes Y\otimes Z\ket{g} - \Tr(X\otimes Y\otimes Z)\big| \geq \tau\Big] \leq e^{-c'\min\left\{\tau^2, \tau\sqrt{k\ell m} \right\}}.
\eeq
Our assumption $\ell m> k$ implies $\sqrt{k\ell m} > k$, hence the probability above is at most $e^{-c'\min\{\tau^2, k\tau\}}$. Using the bound~\eqref{eq:netconc-0} on the size of $\net_\eps^{(k,\ell,m)}$, by a union bound there exists a $C'>0$ such that for any $\tau\geq C'\,N\ln (1/\eps)$ the probability that there exists an $X'\otimes Y'\otimes Z'\in\net_\eps^{(k,\ell,m)}$ such that
\beqn
\big|\bra{g}X'\otimes Y'\otimes Z'\ket{g} - \Tr(X'\otimes Y'\otimes Z')\big| \geq \tau
\eeqn
is at most $e^{-\Omega(\tau)}$.
\medskip

Case 2: $k\geq \ell m$. 
Fix an $X\otimes Y\otimes Z\in\net_\eps^{(k,\ell,m)}$.
Since $X$, $Y$ and $Z$ are normalized projectors, 
\beqn
\Tr(X\otimes Y\otimes Z) \leq \sqrt{k\ell m} \leq k \leq N.
\eeqn
Write the spectral decompositions of $X$, $Y$ and $Z$ as
\begin{align*}
X = \frac{1}{\sqrt{k}}\sum_{p} \ketbra{x_p}{x_p},& &Y = \frac{1}{\sqrt{\ell}}\sum_{q}  \ketbra{y_q}{y_q}& &\text{and}& & Z = \frac{1}{\sqrt{m}}\sum_{r} \ketbra{z_r}{z_r},
\end{align*}
where the indices $p,q,r$ run from 1 to at most $k,\ell,m$, respectively.
For any unit $\ket{y},\ket{z}\in\C^N$, define the $N$-dimensional vector $\ket{g(y,z)} = (I\otimes \bra{y}\otimes \bra {z})\ket{g}$. By rotation invariance of the Gaussian distribution, $\ket{g(y,z)}$ is distributed according to~$\normal^N$.
Since  $\ket{x_1},\ket{x_2},\dots$ are pairwise orthogonal, we have
\begin{align}
|\bra{g}X\otimes Y\otimes Z\ket{g}| &= \frac{1}{\sqrt{k\ell m}}\sum_{p,q,r} \big|\braket{x_p}{g(y_q,z_r)}\big|^2 \notag\\
&\leq \sqrt{\frac{\ell m}{k}} \max_{\ket{y},\ket{z}\in\C^N,\,\|\ket{y}\|,\|\ket{z}\|\leq 1} \big\|\ket{g(y,z)} \big\|_2^2\notag\\
&\leq \max_{\ket{y},\ket{z}\in S_\eps} \big\|\ket{g(y,z)} \big\|_2^2 + 4\,\eps \|\ket{g}\|_2^2,\label{eq:netconc-1}
\end{align}
where for the last inequality we used that $\sqrt{\ell m/k} \leq 1$ (which follows from our assumption $k\geq \ell m$), and that for any unit $y,\tilde{y},z,\tilde{z}$,
\begin{align*}
\Big| \big\|\ket{g(y,z)} \big\|_2^2 - \big\|\ket{g(\tilde{y},\tilde{z})} \big\|_2^2 \Big| &=\Big| \big( \bra{g(y,z)}+\bra{g(\tilde{y},z)} \big)\big( \ket{g(y,z)}-\ket{g(\tilde{y},z)}\big) \\
&\qquad + \big( \bra{g(\tilde{y},z)}+\bra{g(\tilde{y},\tilde{z})} \big)\big( \ket{g(\tilde{y},z)}-\ket{g(\tilde{y},\tilde{z})}\big) \Big|\\
&\leq \big\|\ket{g(y+\tilde{y},z)} \big\|_2 \big\|\ket{g(y-\tilde{y},z)}\big\|_2 +  \big\|\ket{g(\tilde{y},z+\tilde{z})} \big\|_2 \big\|\ket{g(\tilde{y},z-\tilde{z})}\big\|_2\\
&\leq 2 \big\| \ket{g} \big\|_2^2 \big( \big\|y-\tilde{y}\big\|_2 + \big\|z-\tilde{z}\big\|_2\big).
\end{align*}
Applying Corollary~\ref{cor:chisquare}, there exists a $c''>0$ such that for any $\tau>0$ the maximum in~\eqref{eq:netconc-1} is greater than $N + \tau$ with probability at most $e^{-c''\min\{\tau^2/N,\tau\}}$. Since by Fact~\ref{fact:epsnet} $|S_\eps| \leq e^{-2\ln(1/\eps)N}$, a union bound shows that there exists a $C''>0$ such that for all $\tau \geq C''N\ln(1/\eps)$ the bound
$$ |\bra{g}X\otimes Y\otimes Z\ket{g}| \leq \eps (N^{3} + N\tau) + \tau $$
holds with probability at least $e^{-c'''\tau}$ over the choice of $\ket{g}$, for some $c'''>0$. (Here we again used Corollary~\ref{cor:chisquare} to upper-bound $\|\ket{g}\|_2^2 \leq N^3+N\tau$ with probability at least $1-e^{-\Omega(\tau)}$.)

\medskip

The lemma follows for some $c,C>0$ by combining the two cases analyzed above and performing a union bound over all $N^3$ triples $(k,\ell,m)$.
\end{proof}

We are now in a position to prove Lemma~\ref{lem:tensorub}.

\begin{proof}[of Lemma~\ref{lem:tensorub}]
Let $\eps = N^{-3}$ and $\tau = C N\ln(1/\eps)$, where $C$ is the constant appearing in the statement of Lemma~\ref{lem:netconc}. That lemma shows that the bound 
$$ \big|\bra{g}X\otimes Y\otimes Z\ket{g} - \Tr(X\otimes Y\otimes Z)\big| \leq CN\ln(1/\eps)$$
holds except with probability at most $e^{-cCN\ln(1/\eps)}$. Moreover, by Corollary~\ref{cor:chisquare}, there is a $C'>0$ such that
\beqn
3\eps\, \Big(N^{3/2} + \big\|\ket{g}\big\|_2^2\Big) \leq 6\eps N^{3/2},
\eeqn
except with probability at most $1-e^{-C'N}$. Combining these two bounds with the estimate of Lemma~\ref{lem:netbound} proves the lemma, provided $d$ is chosen small enough and $D$ large enough. 
\end{proof}

%%%%%%%%%%%%%%%%%%%%%%%%%%%%%%%%%%%%%%%%%%%%%%
%%%%%%%   Upper bounds     %%%%%%%%%%%%%%%%%%%
%%%%%%%%%%%%%%%%%%%%%%%%%%%%%%%%%%%%%%%%%%%%%%

\section{Upper bounds on violations}\label{sec:ub}

\subsection{Bounds in terms of the number of questions}

In this section we prove Theorem~\ref{thm:questionbound}, which we restate here for convenience.

\begin{theorem2}\label{thm:questionbound2}
For any $3$-player XOR game $G$ in which there are at most $Q$ possible questions to the third player,
 $$\beta^*(G) \,\leq\, \sqrt{Q}\,K_G^\R\,\beta(G),$$
 where $K_G^\R <1.783$ is the real Grothendieck constant.
\end{theorem2}

The two main ingredients in the proof are a useful technique of Paulsen and Grothendieck's inequality. Paulsen's technique (see~\cite[Proposition 2.10]{paulsen:1992}) lets us ``decouple'' the third player from the other two players and turn his part of the entangled strategy into a classical one at a loss of a factor $\sqrt{Q}$ in the overall bias.\footnote{This technique is based on so-called Rademacher averaging, a well-know method in the field of Banach spaces.}
Slightly more precisely, the proof goes as follows.
By grouping the game tensor and the observables of the first two players together, the entangled bias takes the form
\beqn
\beta^*(G)\,=\,\psibra \sum_{k=1}^Q M_k\otimes C_k\psiket,
\eeqn
where the $C_k$ are the third player's observables in an optimal entangled strategy.
The decoupling technique relies on a collection of i.i.d. $\pmset{}$-valued symmetrically distributed Bernoulli random variables $\eps_1,\dots,\eps_Q$ which are used to split the above sum into two sums. Using the fact that $\Exp[\eps_k\eps_\ell] = \delta_{k\ell}$, the above expression can be written as
\beqn
\Exp\left[\Big(\psibra\sum_{k=1}^Q M_k\otimes (\eps_kI) \Big)\Big(\sum_{\ell=1}^Q\eps_{\ell} I\otimes C_\ell\psiket\Big)\right].
\eeqn
After two applications of the Cauchy-Schwarz inequality, the third player's classical strategy will be a certain instantiation of the random variables $\eps_k$ appearing in the left brackets, while the factor $\sqrt{Q}$ will come from the term between the right brackets. An application of Grothendieck's inequality will let us turn the first two players' entangled strategy into a classical one at a loss of an extra constant factor in the overall bias. 
We proceed with the formal proof of the theorem.

\begin{proof}[of Theorem~\ref{thm:questionbound}]
Suppose that the game $G$ is defined by the probability distribution $\pi$ and sign tensor $M$. Define the tensor $T_{ijk} = \pi(ijk)M(ijk)$.
By setting some entries to zero we may assume without loss of generality that  $T$ has dimension $Q\times Q\times Q$.
Fix an arbitrary constant $\epsilon>0$ and
let $\psiket$, $A_i,B_j,C_k$ be a finite-dimensional state and $\pmset{}$-valued observables such that\footnote{A standard approximation argument based on the Spectral Theorem shows that this is always possible.}
\beqn
\beta^*(G) \leq (1+\epsilon)\sum_{i,j,k=1}^Q T_{ijk}\psibra A_i\otimes B_j\otimes C_k\psiket.
\eeqn

Define for every $k\in[Q]$ the matrix $M_k = \sum_{i,j=1}^Q T_{ijk}A_i\otimes B_j$. Let $\eps_1,\dots,\eps_Q$ be i.i.d. $\pmset{}$-valued symmetrically distributed Bernoulli random variables. Using the fact that $\Exp[\eps_k\eps_\ell] = \delta_{k\ell}$ and the Cauchy-Schwarz inequality, the right-hand side of the above inequality can be written as and bounded by
\beqrn
\Exp\left[\Big(\psibra\sum_{k=1}^Q M_k\otimes (\eps_kI) \Big)\Big(\sum_{\ell=1}^Q\eps_{\ell} I\otimes C_\ell\psiket\Big)\right] &\leq& \Exp\left[\Big\|\psibra\sum_{k=1}^Q M_k\otimes (\eps_kI)\Big\|_2 \Big\|\sum_{\ell=1}^Q\eps_{\ell} I\otimes C_\ell\psiket\Big\|_2\right].
\eeqrn

Another application of Cauchy-Schwarz gives that the right-hand side is bounded from above by
\beq\label{eq:CStwo}
\left(\Exp\left[\Big\|\psibra\sum_{k=1}^Q M_k\otimes (\eps_kI)\Big\|_2^2\right]\right)^{1/2} \left(\Exp\left[\Big\|\sum_{\ell=1}^Q\eps_{\ell} I\otimes C_\ell\psiket\Big\|_2^2 \right]\right)^{1/2}.
\eeq

The fact that the matrices $\eps_\ell I\otimes C_\ell$ are unitary and $\psiket$ is a unit vector shows that the above term on the right equals $\sqrt{Q}$. Since the matrices $M_k\otimes (\eps_k I)$ are Hermitian, the left term in~\eqref{eq:CStwo} is at most
\beqn
\max_{\phiket,\,\zeta:[Q]\to\pmset{} }\phibra\sum_{k=1}^QM_k\otimes \big(\zeta(k)I\big)\phiket.
\eeqn

Expanding the definition of $M_k$, we have shown that
\beq\label{eq:3classical}
\beta^*(G) \leq (1+\epsilon)\sqrt{Q} \max_{\phiket,\,\zeta:[Q]\to\pmset{} }\phibra \sum_{i,j,k=1}^Q T_{ijk}A_i\otimes B_j\otimes\big(\zeta(k)I\big)\phiket.
\eeq
The matrices $\zeta(k)I$ may be interpreted as observables corresponding to single-outcome projective measurements. The outcome of such a measurement does not depend on the particular entangled state shared with the other players nor on their measurement outcomes.
The entangled bias of the game $G$ is thus at most $(1+\epsilon)\sqrt{Q}$ times the bias achievable with strategies in which the third player uses a classical  strategy.
The maximum on the right-hand side of \eqref{eq:3classical}  thus equals\footnote{Another way to see this  is by writing $ \sum_{i,j,k=1}^Q T_{ijk}A_i\otimes B_j\otimes\big(\zeta(k)I\big) =  \big(\sum_{i,j,k=1}^Q T_{ijk}A_i\otimes B_j\zeta(k)\big)\otimes I$ and using the facts that that the operator norm is multiplicative under tensor products and the identity matrix has operator norm~1.}
\beqn
 \max_{\ket{\phi'},\,\zeta:[Q]\to\pmset{} }\bra{\phi'} \sum_{i,j,k=1}^Q T_{ijk}A_i\otimes B_j\zeta(k)\ket{\phi'}.
 \eeqn

Let $\ket{\phi'}$ and $\zeta:[Q]\to\pmset{}$ be such that the maximum above is achieved. Define the $Q$-by-$Q$ matrix $H_{ij} = \sum_{k=1}^Q T_{ijk}\zeta(k)$. Rearranging terms gives that the above maximum equals $\sum_{i,j=1}^Q H_{ij} \bra{\phi'} A_i\otimes B_j\ket{\phi'}$. Define the unit vectors $x_i = A_i\otimes I\ket{\phi'}$ and $y_j = I\otimes B_j\ket{\phi'}$. Clearly we have $\bra{\phi'} A_i\otimes B_j\ket{\phi'} = \langle x_i,y_j\rangle$. The result now follows by applying Grothendieck's inequality~\eqref{eq:grothineq} and expanding the definition of $H_{ij}$.
\end{proof}

\subsection{Bounds in terms of the Hilbert space dimension}

In this section we give a proof of Theorem~\ref{thm:dimensionbound}, which we restate for convenience. 

\begin{theorem3} Let $G$ be a $3$-player XOR game in which the maximal entangled bias $\beta^*(G)$ is achieved by a strategy in which the third player's local dimension is $d$. Then 
$$\beta^*(G) \,\leq\, \sqrt{3d}\, \big(K_G^\C\big)^{3/2} \beta(G),$$
where $K_G^\C < 1.405$ is the complex Grothendieck constant. 
\end{theorem3}

As the bound in terms of the number of questions presented in the previous section, the proof of Theorem~\ref{thm:dimensionbound} relies on a decoupling technique, by which the third player is reduced to using a classical strategy, while only reducing the bias that the players achieve in the game by a factor depending on the local dimension of his share of the entangled state.
We use the following version of the non-commutative Khinchine's inequality, proved with optimal constants in~\cite{HM07}. 

\begin{theorem}[Khinchine's inequality, Proposition 2.12 in~\cite{HM07}]\label{thm:khinchine} Let $A_i$ be complex $d\times d$ matrices, and $\eps_i$ i.i.d. $\pmset{}$ symmetrically distributed. Then there exists a matrix random variable $\tilde{A}$ such that $\textsc{E}\big[ \eps_i \tilde{A}\big] = 0$ for every $i$, and for \emph{every} possible joint value taken by the tuple of random variables $(\eps_1,\ldots,\eps_d,\tilde{A})$ it holds that
\beq\label{eq:k-dual}
\Big\| \sum_i \eps_i A_i + \tilde{A} \Big\|_\infty \,\leq\, \sqrt{3} \max\Big\{ \Big\|\sum_i A_iA_i^\dagger \Big\|_\infty^{1/2},\,\Big\|\sum_i A_i^\dagger A_i \Big\|_\infty^{1/2}\Big\}.
\eeq
\end{theorem}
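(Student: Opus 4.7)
The plan is to recast the statement via Walsh expansions and then construct $\tilde A$ through a self-adjoint dilation, following the strategy of Haagerup and Musat. Any matrix-valued function $Y:\pmset{d}\to\mat{d}$ admits a Walsh expansion $Y(\eps) = \sum_{S \subseteq [d]} \hat Y(S)\prod_{i\in S}\eps_i$, and the orthogonality conditions $\textsc{E}[\eps_i \tilde A]=0$ say precisely that $\tilde A$ has trivial degree-$1$ Walsh coefficients. Writing $C:=\max\{\|\sum_i A_i A_i^\dagger\|_\infty^{1/2},\|\sum_i A_i^\dagger A_i\|_\infty^{1/2}\}$, the statement is therefore equivalent to finding some $F:\pmset{d}\to\mat{d}$ with $\hat F(\{i\})=A_i$ for each $i$ and $\|F(\eps)\|_\infty \leq \sqrt{3}\,C$ for every $\eps$; one then sets $\tilde A(\eps):=F(\eps)-\sum_i \eps_i A_i$.

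To build $F$ I would pass to a self-adjoint dilation, replacing each $A_i$ by the Hermitian block $\hat A_i := \bigl(\begin{smallmatrix}0 & A_i\\ A_i^\dagger & 0\end{smallmatrix}\bigr)$ of size $2d$. Then $\|\hat A_i\|_\infty = \|A_i\|_\infty$, and since $\hat A_i^2$ is block-diagonal with diagonal blocks $A_iA_i^\dagger$ and $A_i^\dagger A_i$, one has $\bigl\|\sum_i \hat A_i^2\bigr\|_\infty = C^2$. It suffices to exhibit a self-adjoint random matrix $\hat F(\eps)$ with degree-$1$ Walsh coefficients $\hat A_i$ and $\|\hat F(\eps)\|_\infty \leq \sqrt{3}\,C$ pointwise; the desired $F$ is then read off from its off-diagonal block. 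The $C^*$-identity $\|X\|_\infty^2 = \|X^2\|_\infty$ for self-adjoint $X$ reduces this to controlling $\bigl\|\bigl(\sum_i \eps_i \hat A_i + \hat{\tilde A}\bigr)^2\bigr\|_\infty$, whose expansion splits into the deterministic term $\sum_i \hat A_i^2$ of norm $C^2$, the cross terms $\sum_{i\ne j}\eps_i\eps_j \hat A_i \hat A_j$, and further contributions involving $\hat{\tilde A}$ that must be arranged to cancel the cross terms pointwise.

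The main obstacle is choosing $\hat{\tilde A}$ (which one expects to be a degree-$\geq 2$ Walsh polynomial, typically of the form $\sum_{i<j}\eps_i \eps_j B_{ij}$) to effect this cancellation with the sharp constant. The cleanest route is via free probability: embed the $\hat A_i$ into a larger von Neumann algebra containing free semicircular generators $s_i$, for which the deterministic identity $\|\sum_i s_i \hat A_i\|_\infty = 2C$ is known, and then transfer this bound back to the Rademacher setting by a comparison argument. The discrepancy between the free sum $\sum_i s_i \hat A_i$ and its classical counterpart $\sum_i \eps_i \hat A_i$ is precisely what the polynomial correction $\hat{\tilde A}$ must absorb, and carefully tracking constants through the interpolation produces the factor $\sqrt{3}$. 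Executing this interpolation rigorously is the delicate operator-algebraic step of~\cite{HM07}, which is why the result is invoked here as a black box rather than redone from scratch.
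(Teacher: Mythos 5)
The paper does not prove this statement: it is imported verbatim as Proposition 2.12 of~\cite{HM07} and used as a black box, so there is no internal proof to compare against. Your preliminary reductions are correct. Replacing $\tilde A$ by its conditional expectation given $(\eps_1,\dots,\eps_d)$ preserves both the orthogonality conditions and (by convexity of the operator norm) the pointwise bound, so one may indeed take $\tilde A$ to be a function on $\pmset{d}$ with vanishing degree-one Walsh coefficients, and the claim is equivalent to exhibiting $F:\pmset{d}\to\mat{d}$ with $\hat F(\{i\})=A_i$ and $\|F(\eps)\|_\infty\le\sqrt{3}\,C$ for every $\eps$. The self-adjoint dilation is also sound and correctly reduces to Hermitian $A_i$ with $\big\|\sum_i A_i^2\big\|_\infty\le C^2$.

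The remainder, however, is not a proof: the existence of the correction term achieving the constant $\sqrt{3}$ \emph{is} the theorem, and you explicitly hand that step back to~\cite{HM07}. Moreover, the mechanism you gesture at does not make quantitative sense as stated. The Haagerup--Pisier identity bounds the norm of the free sum $\sum_i s_i\otimes \hat A_i$ by $2C$ with \emph{no} correction term, and any comparison that transfers a bound of $2C$ to the Rademacher side with further losses gives a constant at least $2>\sqrt{3}$; the gain down to $\sqrt{3}$ must come entirely from the freedom in choosing $\tilde A$, i.e.\ from the quotient by the degree-$\geq 2$ Walsh polynomials, and you give no argument for how that freedom is exploited. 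Concretely, by trace duality your reformulation is equivalent to the $L_1$ noncommutative Khintchine inequality $\big|\sum_i \Tr(A_iB_i^\dagger)\big| \le \sqrt{3}\, C\,\Exp\big[\big\|\sum_i \eps_i B_i\big\|_1\big]$ for all $B_1,\dots,B_d$ (with $\|\cdot\|_1$ the trace norm), and establishing that constant is precisely the analytic content of~\cite{HM07} that your sketch leaves untouched. Note also that some correction term is genuinely necessary (take all $A_i=U/\sqrt{d}$ for a fixed unitary $U$: then $C=1$ while $\sup_\eps\|\sum_i\eps_i A_i\|_\infty=\sqrt{d}$), so nothing short of an explicit construction or a duality argument will do. Since the paper itself treats the result as external, the defensible course is simply to cite it; presenting a proof outline whose decisive step is the cited result is circular.
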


\medskip

\begin{proof}[of Theorem~\ref{thm:dimensionbound}]
Suppose that the game $G$ is defined by the probability distribution $\pi$ and sign tensor $M$. Define the tensor $T_{ijk} = \pi(ijk)M(ijk)$. Fix an arbitrary constant $\epsilon>0$ and
let $\psiket$, $A_i,B_j,C_k$ be a finite-dimensional state and $\pmset{}$-valued observables, where $C_k$ has dimension $d\times d$ and $A_i$, $B_j$ have (finite) dimension $D\times D$,  such that
\beqn
\beta^*(G) \leq (1+\epsilon)\sum_{i,j,k=1}^Q T_{ijk}\psibra A_i\otimes B_j\otimes C_k\psiket.
\eeqn
For each $k$, let $M_k = \sum_{i,j,k} T_{ijk}\, A_i\otimes B_j$.
 Let $\ket{\Psi} = \sum_i \lambda_i \ket{u_i}\ket{v_i}$ be the Schmidt decomposition, where $\ket{u_i}$ is a vector on the system held by the first two players, and $\ket{v_i}$ is on the third player's. Assume without loss of generality that the $\ket{v_i}$ span the local space of the third player. Letting $M = \sum_k M_k\otimes C_k$, the bias achieved by this strategy is $\bra{\Psi} M \ket{\Psi}\geq (1+\epsilon)^{-1}\beta^*(G)$. Decompose $M$ as $M = \sum_{i,j} E_{i,j} \otimes \ket{v_i}\bra{v_j}$, where for every $(i,j)\in [d]^2$ $E_{i,j}$ is a $D^2\times D^2$ matrix on Alice and Bob's systems; by definition 
$$ E_{i,j} = \sum_{k}  \bra{v_i} C_k \ket{v_j}\,M_k.$$
Since $M$ is Hermitian, we have $E_{i,j} = (E_{j,i})^\dagger$. We will need the following bound. 
\begin{claim}\label{claim:normbound} For every $i\in [d]$, 
\begin{align}
\max\Big\{\Big\| \sum_{j} E_{i,j} E_{i,j}^\dagger \big\|_\infty,\Big\| \sum_{j} E_{i,j}^\dagger E_{i,j}  \big\|_\infty\Big\} & \leq \big(K_G^\C\big)^3\, \beta(G)^2.\label{eq:mepsbound}
\end{align}
\end{claim}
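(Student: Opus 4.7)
My plan starts by rewriting $\sum_j E_{i,j}E_{i,j}^\dagger$ as a Gramian so that its operator norm becomes a single bilinear form. First I would use the direct calculation
\[
\sum_j E_{i,j}E_{i,j}^\dagger \,=\, \sum_{k,k'} \langle C_k v_i, C_{k'} v_i\rangle\, M_k M_{k'} \,=\, T^\dagger T,
\]
where $T\colon H_{AB}\to H_{AB}\otimes H_3$ is the operator $T\ket\phi = \sum_k (M_k\ket\phi)\otimes (C_k\ket{v_i})$, obtained from the definition of $E_{i,j}$ and the resolution of identity $\sum_j \ket{v_j}\bra{v_j}=I$ on the third player's space $H_3$. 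Consequently $\|\sum_j E_{i,j}E_{i,j}^\dagger\|_\infty = \|T\|_\infty^2$, and by duality
\[
\|T\|_\infty \,=\, \max\big\{|\bra\Phi\tilde M\ket{\phi\otimes v_i}|\st \ket\phi\in\sphere{D^2},\ \ket\Phi\in\sphere{D^2 d}\big\},
\]
where $\tilde M = \sum_k M_k\otimes C_k$ is the full game operator on $H_{AB}\otimes H_3$. The task would then reduce to showing that this maximum is at most $(K_G^\C)^{3/2}\beta(G)$.

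The next step would be to exploit the asymmetric structure of the bilinear form: on the right, player~$3$'s factor is pinned to $\ket{v_i}$, whereas on the left $\ket\Phi$ is arbitrary. I would expand $\ket\Phi = \sum_r \ket{\beta_r}\otimes\ket{v_r}$ in the orthonormal basis $\{\ket{v_r}\}$ of $H_3$, with $\sum_r \|\beta_r\|^2 = 1$, to obtain
\[
\bra\Phi\tilde M\ket{\phi\otimes v_i} \,=\, \sum_{i',j',k,r}T_{i'j'k}\,\overline{\langle v_r,C_kv_i\rangle}\,\langle\beta_r,(A_{i'}\otimes B_{j'})\phi\rangle.
\]
I would then apply Grothendieck's complex inequality twice: first on player~$1$'s index $i'$ (treating the $A_{i'}$ part as a vector strategy $x_{i'}=(A_{i'}\otimes I)\phi$) and then on player~$2$'s index $j'$, each application replacing a vector-valued bilinear form by a scalar one at a cost of $K_G^\C$. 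After the two applications, the quantity would be bounded by $(K_G^\C)^2$ times an expression of the form $|\sum_{i',j',k}T_{i'j'k}\chi(i')\upsilon(j')\omega(k)|$ for some $\chi,\upsilon:[Q]\to D$ and a scalar $\omega(k)$ of modulus at most $1$, the bound $|\omega(k)|\leq 1$ coming from $\sum_r\|\beta_r\|^2=1$, $\|\phi\|=1$, and $\|C_k\|_\infty=1$, combined with Cauchy--Schwarz bookkeeping.

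A third Grothendieck-style step would then bound the resulting scalar trilinear form by $K_G^\C\,\beta(G)$: using that $T$ is real, the maximum of $|\sum T_{i'j'k}\chi(i')\upsilon(j')\zeta(k)|$ over $D$-valued strategies coincides, up to the constant $K_G^\C$, with the maximum over $\pm 1$-valued ones, namely $\beta(G)$. Combining the three factors would yield $\|T\|_\infty\leq (K_G^\C)^{3/2}\beta(G)$, and squaring gives the claim. The bound on $\|\sum_j E_{i,j}^\dagger E_{i,j}\|_\infty$ would then follow by the analogous argument, exchanging the roles of $E_{i,j}$ and $E_{i,j}^\dagger$.

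The hardest part will be the careful bookkeeping in the two successive Grothendieck applications: one must package the $r$-summation (carried by $\{\ket{\beta_r}\}$) at each step so that the hypotheses of Grothendieck's inequality (unit vectors on one side, total square-norm at most $1$ on the other) remain satisfied without introducing $Q$- or $d$-dependent factors. The structural reason this should succeed is precisely that player~$3$'s state is pinned to $\ket{v_i}$ on one side of the bilinear form, which is exactly what allows the third player's contribution to be packaged as a scalar sequence $\omega(k)$ of modulus at most $1$ rather than as an unrestricted vector strategy.
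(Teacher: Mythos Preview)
Your reformulation $\sum_j E_{i,j}E_{i,j}^\dagger = T^\dagger T$ is correct and is essentially the same computation the paper carries out. The problem is in the next step: your plan to apply Grothendieck's inequality first on the index $i'$ and then on $j'$ does not go through. Concretely, writing $x_{i'}=(A_{i'}\otimes I)\ket{\phi}$ and $y_{j',k}=(I\otimes B_{j'})\ket{\gamma_k}$ (where $\ket{\gamma_k}=(I\otimes\bra{v_i}C_k)\ket{\Phi}$ has norm at most $1$), the expression becomes the bilinear form $\sum_{i',(j',k)}T_{i'j'k}\langle y_{j',k},x_{i'}\rangle$. Grothendieck replaces the vectors by scalars $\chi(i')$ and $\mu(j',k)$, but $\mu$ is a \emph{joint} function of $(j',k)$; nothing in the inequality forces it to factor as $\upsilon(j')\omega(k)$. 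Once that product structure is lost there is no second bilinear form in $j'$ to which you can apply Grothendieck again, and the ``Cauchy--Schwarz bookkeeping'' you allude to cannot recover it. (Indeed, if iterated Grothendieck worked this way one would obtain a trilinear Grothendieck inequality---precisely what the results of this paper rule out.) Your ``third Grothendieck-style step'' is also misattributed: passing from complex unit-disk scalars to $\pm1$ scalars for a real tensor is not an instance of Grothendieck's inequality.

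The paper avoids this trap by applying Grothendieck to a different bilinear form. Rather than attacking $\|T\|_\infty$ as a trilinear expression in $(i',j',k)$, it bounds the quadratic form directly:
\[
\bra{\Phi}\Big(\sum_j E_{i,j}E_{i,j}^\dagger\Big)\ket{\Phi}\;=\;\sum_{k,k'}\bra{\Phi}M_kM_{k'}\ket{\Phi}\,\langle C_k^i,C_{k'}^i\rangle,
\]
where $C_k^i$ is the $i$-th row of $C_k$, a unit vector. This is a bilinear form in $(k,k')$ with unit-vector inner products $\langle C_k^i,C_{k'}^i\rangle$, so one application of Grothendieck replaces them by scalars $c_k$. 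The resulting expression is then at most $\big\|\sum_k c_kM_k\big\|_\infty^2$, a purely two-player quantity, and one more application of Grothendieck (on players $1$ and $2$ jointly) bounds it by $(K_G^\C)^2\beta(G)^2$. The structural point you missed is that the \emph{squaring} in $T^\dagger T$ makes the third player's contribution appear as Gram inner products indexed by $(k,k')$, which is exactly the shape Grothendieck needs---whereas in the unsquared trilinear form no single player's contribution has this shape.
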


\begin{proof}
Let $\ket{\Phi}$ be any vector. Then 
\begin{align}
\bra{\Phi}\sum_{j} E_{i,j}(E_{i,j})^\dagger \ket{\Phi} &= \sum_{k,k'} \sum_{j} \,\bra{\Phi} M_k M_{k'} \ket{\Phi} \bra{v_i} C_k \ket{v_j} \overline{\bra{v_i} C_{k'} \ket{v_j}}\notag\\
&= \sum_{k,k'} \, \bra{\Phi} M_k M_{k'} \ket{\Phi}\, \langle C_k^i,C_{k'}^i\rangle \notag \\
&\leq K_G^\C \, \sup_{c_k \in \{\pm 1\}} \sum_{k,k'} \, \bra{\Phi} M_k M_{k'} \ket{\Phi} \,c_kc_{k'}, \label{eq:mpsim2}
\end{align}
where in the second equality we let $C_k^i$ be the $i$-th row of $C_k$ (in the $\ket{v_j}$ basis), which has norm $1$ (since $C_k$ as a matrix is an observable), and the last inequality is Grothendieck's inequality. But
\begin{align*}
\sum_{k,k'} \, \bra{\Phi} M_k M_{k'} \ket{\Phi}\, c_kc_{k'} &\leq \Big\| \sum_{k,k'}\, M_k M_{k'} \,c_k c_{k'} \Big\|_\infty\\
& = \Big\| \sum_k \,M_k \,c_k\Big\|_\infty^2 \leq \big(K_G^\C\big)^2 \beta(G),
\end{align*}
again by Grothendieck's inequality: here the first two players are entangled players in $G$, but the third is classical. Using $E_{i,j} = E_{j,i}^\dagger$, this proves~\eqref{eq:mepsbound}. 
\end{proof}

Let $\eps_{j}$ be i.i.d. $\{\pm 1\}$-valued standard Bernoulli random variables, and for every $i\in [d]$ let $\tilde{E}_i$ the matrix random variable promised by Theorem~\ref{thm:khinchine}, and $E_i := \sum_{j} \eps_{j} E_{i,j} + \tilde{E}_i$. Combining the estimate in Claim~\ref{claim:normbound} with the bound~\eqref{eq:k-dual} from Theorem~\ref{thm:khinchine}, we get that 
\beq\label{eq:dim-0}
 \max_i \big\|E_i \big\|_\infty \,\leq\, \sqrt{3} \,\big(K_G^\C\big)^{3/2} \beta(G).
\eeq
Let $\eps'_i$ be i.i.d. $\{\pm 1\}$-valued standard Bernoulli random variables independent from the $\eps_i$  such that for all $(i,j)$, $\textsc{E}\big[\eps'_i \tilde{E}_{j}\big] = 0$. Starting from the $(E_i)$, let $\tilde{E}$ be the matrix random variable promised by Theorem~\ref{thm:khinchine}, and let $E := \sum_{i} \eps'_{i} E_{i} + \tilde{E}$. Using the triangle inequality, the bound~\eqref{eq:k-dual} together with~\eqref{eq:dim-0} leads to 
\beq\label{eq:dim-1}
\big\| E \big\|_\infty \,\leq\, \sqrt{3d}\, \big(K_G^\C\big)^{3/2} \beta(G),
\eeq
which is valid for all choices of $\eps_j$ and $\eps'_i$. We may now write
\begin{align} 
\bra{\Psi} M \ket{\Psi} &=  \sum_{i,j} \lambda_i \lambda_j \bra{u_i} E_{i,j} \ket{u_j}\notag \\
&= \text{E}_{\eps,\eps'}\Big[\Tr\Big( E \cdot \Big(\sum_{i,j}\eps'_{i}\eps_j \lambda_i\lambda_j \ket{u_j}\bra{u_i} \Big)\Big)\Big]\notag\\
&\leq \text{E}_{\eps,\eps'}\Big[\,\big\|E\big\|_\infty\,\Big\| \sum_{i,j} \eps'_{i}\eps_j \lambda_i\lambda_j \ket{u_j}\bra{u_i}\Big\|_1 \Big],\label{eq:mpsim}
\end{align}
where for the second equality we used that $\textsc{E}\big[ \eps'_{i} \tilde{E}\big]=0$ for every $i$, and the last follows from H\"older's inequality. The norm $\|E\|_\infty$ is bounded by~\eqref{eq:dim-1}, and to conclude it suffices to note that, since 
$$\sum_{i,j} \eps'_{i}\eps_j \lambda_i\lambda_j \ket{u_j}\bra{u_i}\,=\, \Big(\sum_j \eps_j \lambda_j \ket{u_j}\Big)\Big( \sum_i \eps'_i \lambda_i \bra{u_i}\Big),$$
its trace norm is at most
$$ \Big\|\sum_j \eps_j \lambda_j \ket{u_j}\Big\|\Big\| \sum_i \eps'_i \lambda_i \ket{u_i}\Big\| \,\leq\, \Big(\sum_j \lambda_j^2\Big)^{1/2}\Big(\sum_i \lambda_i^2 \Big)^{1/2}\,\leq\, 1.$$
\end{proof}

\section{Conclusion and open problems}\label{sec:conclusion}

We have described a probabilistic construction of a family of XOR games $G = (G_N)$ in which players sharing entanglement may gain a large, unbounded advantage over the best classical, unentangled players. For any $N=2^n$ the game $G_N$ has $N^2$ questions per player, and is such that the ratio
$\beta^*(G)/\beta(G) = \Omega\big( \sqrt{N} \log^{-5/2} N)$. Our results raise two immediate open questions. The first is whether this estimate is optimal: we could only prove an upper bound of $O(N)$ on the largest possible ratio (for games, such as $G_N$, with at most $N^2$ questions per player). The second is to give an explicit, deterministic construction of a family of games achieving a similar (or even weaker) ratio. Such a construction would be of great interest both to experimental physicists and to operator space theorists, no small feat! 

In our results we measured the advantage of entangled players in a given XOR game $G$ multiplicatively, as a function of the ratio $\beta^*(G)/\beta(G)$. Although this has become customary, if one is interested in experimental realizations it may not be the most appropriate way to measure the advantage gained by entanglement, as small biases may be hard to notice, however large the ratio between the entangled and unentangled biases. In the case of our specific construction, one may compute that $\beta^*(G_N) = \Omega(N^{-3/2})$ and $\beta(G_N) = O(N^{-2}\log^{5/2} N)$: while the ratio of these two quantities is large, both are relatively close to $0$ and may thus be difficult to differentiate through experiment. It is an interesting open problem to also obtain large separations as measured, say, by the \emph{difference} $\beta^*(G)-\beta(G)$.

\paragraph{Acknowledgements.} We thank Harry Buhrman, Carlos Palazuelos, David Perez-Garcia, Gilles Pisier, Oded Regev, Ignacio Villanueva and Ronald de Wolf for useful discussions. T.V. is grateful to CWI for hosting him while part of this work was done. J.B. thanks Universidad Complutense for hosting him while part of this work was done. 

\bibliography{newref}
\end{document}